\DeclarePairedDelimiter{\floor}{\lfloor}{\rfloor}
\newcommand{\E}{\mathbb{E}}
\newcommand{\dx}[1]{\textrm{d}#1}
\newtheorem*{example*}{Example}
\newtheorem{lemma}{Lemma}
\newtheorem{theorem}{Theorem}
\newtheorem{corollary}{Corollary}
\newcounter{longaligned}
\newenvironment{longaligned}[1][]
 {%
  \stepcounter{longaligned}%
  \refstepcounter{equation}%
  \label{longaligned@\thelongaligned}%
  #1%
  \start@align\@ne\st@rredtrue\m@ne % start align*
 }
 {\endalign}
\newcommand{\longalignedtag}{\tag{\ref{longaligned@\thelongaligned}}}
\begin{document}
% \everymath{\small}
\title{Straggler Mitigation at Scale}

\author{%
  \IEEEauthorblockN{Mehmet Fatih Akta\c{s} and Emina Soljanin} \\
  \IEEEauthorblockA{Department of Electrical and Computer Engineering, Rutgers University \\
  Email: \{mehmet.aktas, emina.soljanin\}@rutgers.edu}
  \thanks{A preliminary version of this paper appeared in the ACM SIGMETRICS Performance Evaluation Review \cite{MAMA:AktasPS17, IFIP:AktasPS18}.}%
}

\maketitle

\begin{abstract}
  % Why
  % Originally in data download and more recently in distributed computing context, 
  Runtime performance variability has been a major issue, hindering predictable and scalable performance in modern distributed systems.
  Executing requests or jobs redundantly over multiple servers have been shown to be effective for mitigating variability, both in theory and practice. Systems that employ redundancy has drawn significant attention, and numerous papers have analyzed the pain and gain of redundancy under various service models and assumptions on the runtime variability.
  % What
  This paper presents a cost (pain) vs.\ latency (gain) analysis of executing jobs of many tasks by employing replicated or erasure coded redundancy. The tail heaviness of service time variability is decisive on the pain and gain of redundancy and we quantify its effect by deriving expressions for cost and latency.
  Specifically, we try to answer four questions: 
  1) How do replicated and coded redundancy compare in the cost vs.\ latency tradeoff?
  2) Can we introduce redundancy after waiting some time and expect it to reduce the cost?
  3) Can relaunching the tasks that appear to be straggling after some time help to reduce cost and/or latency?
  4) Is it effective to use redundancy and relaunching together?
  We validate the answers we found for each of these questions via simulations that use empirical distributions extracted from a Google cluster data.
\end{abstract}

\begin{IEEEkeywords}
Coded and replicated redundancy, straggler relaunch, cost vs. latency tradeoff in distributed computing.
\end{IEEEkeywords}

% ##########################################  Intro  ######################################### #
\section{Introduction}
%% Predictable performance is hindered by runtime variability
Providing \emph{predictable} performance is an important ongoing challenge for distributed computing systems.
In distributed settings, a \emph{job} is split into multiple smaller \emph{tasks}, which get spread over separate resources for parallel execution.
Task execution times in modern systems are known to exhibit significant runtime variability due to many factors such as power management, software or hardware failures, maintenance, and most importantly, resource sharing \cite{Dryad:IsardBY07, MapReduce:DeanG08, Mantri:AnanthanarayananKG10, ResilientDistributedDatasets:ZahariaCD12, TailAtScale:DeanB13, StragglerRootCauseAnalysisInDatacenters:OuyangGY16, RootCauseAnalysisOfStragglersInBigDataSystem:ZhouLY18}.
Runtime variability may cause some tasks to \emph{straggle} and take much longer to complete than other tasks in the job. Since a distributed job completes only when all its tasks complete, straggler tasks significantly delay the job completion. As the number of tasks in a job increases so does the chance that at least one of them will be a straggler, thus the impact of stragglers on the job completion time is greater at scale \cite{AchievingRapidResponseTimesInLargeOnlineServices:Dean12, TailAtScale:DeanB13}.

%% Two techniques for straggler mitigation; redundancy took off as the solution
Straggler problem has received significant attention from the systems research community.
Existing solution techniques fall into two categories: 
i) Squashing runtime variability via preventive actions such as blacklisting faulty machines that frequently exhibit high variability \cite{MapReduce:DeanG08, AchievingRapidResponseTimesInLargeOnlineServices:Dean12} or learning the characteristics of task-to-node assignments that lead to high variability and avoiding such problematic task-node pairings \cite{ProactiveStragglerAvoidance:YadwadkarC12},
ii) Speculative execution by launching the tasks together with replicas and waiting only for the fastest copy to complete \cite{ImprovingMapReducePerformance:ZahariaKJ08, Mantri:AnanthanarayananKG10, Dremel:MelnikGL10, AttackOfClones:AnanthanarayananGS13, LowLatencyviaRed:VulimiriGM13}.
Because runtime variability is caused by intrinsically complex reasons, preventive measures for stragglers could not fully solve the problem and runtime variability continued plaguing the compute workloads \cite{AchievingRapidResponseTimesInLargeOnlineServices:Dean12, AttackOfClones:AnanthanarayananGS13}.
Speculative task execution on the other hand has proved to be an effective remedy, and indeed the most widely deployed solution for stragglers \cite{TailAtScale:DeanB13, DecentralizedSpeculationAwareClusterScheduling:RenAW15}.
For instance with task replication, median runtime slowdown experienced by the tasks within a job is brought down from 8 (and 7) to 1.08 (and 1.1) in Facebook's production Hadoop cluster (and Bing’s Dryad cluster) \cite{DecentralizedSpeculationAwareClusterScheduling:RenAW15}.

%% Redundancy introduces additional load on system.
Executing tasks with greater number of copies will surely reduce the chance of having to wait for a straggler.
However, task replicas occupy system resources that could otherwise be used to execute other tasks. Furthermore, if task replicas are employed excessively, they can overburden the system and further aggravate the runtime variability, given that the primary cause of runtime variability is resource sharing.
Therefore, replicas are employed with care in practice, e.g., replica tasks are used only for jobs with a few tasks \cite{AttackOfClones:AnanthanarayananGS13}, or only tasks that straggle beyond some threshold are replicated \cite{ImprovingMapReducePerformance:ZahariaKJ08}.
More recently, replicas are proposed to be dispatched for single-task jobs only if any server is found idle, which is shown, with a queueing theoretic analysis, to not drive the system to instability by dispatching excessive number of replicas \cite{DecouplingSlowdownJobsize:GardnerHS17}.

%% We focus on Latency and Cost of job executions.
This paper focuses on two important performance metrics for distributed job execution: 1) \emph{Latency,} measuring the time to complete the job, and 2) \emph{Cost,} measuring the total resource time spent to execute the job.
Job execution is desired to be fast and with low cost, but these are often conflicting objectives.
Cost of executing a job depends on the number of tasks\footnote{Resource usage of tasks vary across different jobs or might vary even within the same job in practice \cite{Kubernetes:BurnsGO16}. We abstract this complexity by assuming that each task uses one unit of resource per unit time.} and the time each task takes to finish.
Executing a job with task replicas is expected to reduce the time spent by the tasks in the system, while also increasing the total number of tasks involved in completing the job, which is likely to increase the cost.
It is important to understand the effect of added redundancy not only on the latency but also on the cost because the load exerted on the system by a job execution is determined by its cost (as elaborated in Sec.~\ref{subsec:subsec_on_the_cost}).
% , as we explain in detail later in Sec.~\ref{subsec:subsec_on_the_cost}.

%% Erasure coding
Erasure coding implements a more general form of redundancy than simple replication, and has been considered for straggler mitigation both in data download \cite{Codes&Qs:JoshiLS12, Codes&Qs:HuangPZ12, CodesQs:KadheSS15_Allerton}, and more recently in distributed computing context \cite{ShortDot:DuttaCG16, MachineLearningWithCodes:LeeLP17, GradientCoding:TandonLD16, CodedGradientDescent:LiKA17, StragglerMitigationWithDataEncoding:KarakusSD17, CodedMatrixMultiplication:YuMA18, CodedGradientDescent:HalbawiAS18}.
% GradientCoding:RavivTT17, CodedGradientDescent:HalbawiAS17
With coding, a job of $k$ tasks is expanded into a job of $n$ tasks with $n-k$ \textit{parity} tasks. Parity tasks are constructed by encoding the initial $k$ tasks, which is done by embedding redundancy either in the computational procedure collaboratively implemented by the tasks (e.g., \cite{ShortDot:DuttaCG16}) or in the data the tasks consume during execution (e.g., \cite{StragglerMitigationWithDataEncoding:KarakusSD17}).
If coded tasks are created with \textit{MDS} code, the most commonly used encoding model, any $k$ of the $n$ tasks would be sufficient to recover the desired outcome of the job, thus only the fastest $k$ tasks would be sufficient for completing the job.

Modeling task execution times and the variability they exhibit is crucial for the theoretical analysis of straggler mitigation techniques to match with the experimental measurements.
In the analysis of straggler mitigation techniques, variability in execution times is commonly expressed with a fixed straggling factor. The straggling factor for each task is typically assumed to be independently drawn from a fixed random variable, which we also adopt in this paper.
However, runtime variability is known to be to a large part caused by resource sharing in practice \cite{TailAtScale:DeanB13, StragglerRootCauseAnalysisInDatacenters:OuyangGY16, RootCauseAnalysisOfStragglersInBigDataSystem:ZhouLY18}, and the redundant tasks added into system exert additional load on the system resources, which is expected to aggravate the runtime variability.
Therefore, we believe that the model of variability should account for the redundancy added into system.
In Sec.~\ref{sec:sec_when_red_changes_tail}, we consider a model where the tail of task execution times changes with the level of redundancy added into system, and we study the cost and latency of redundancy under this model.

%% Decisions to be made while using redundancy.
There are various decisions to make while employing redundancy for straggler mitigation.
The first natural step is to decide adding whether replicated or coded tasks, and how many of them.
Secondly, waiting for some time before launching the replica tasks has been considered to reduce the cost of redundancy \cite{RepedComputing:WangJW15}. A natural question is that does waiting before launching the redundant (replicated or coded) tasks help in general to reduce cost.
In this paper, we analyze the cost vs.\ latency tradeoff to find out the best practice in making these decisions.
As an alternative to adding redundancy, cancelling and relaunching the tasks that appear to be straggling after waiting some time has been considered \cite{ImprovingMapReduceInHeteroEnvironments:Zaharia08}. This is justified by the heavy tailed nature of task execution times as observed in practice \cite{TailAtScale:DeanB13, GoogleTraceAnalysis:ReissTG12, GRASS:AnanthanarayananHR14}.
We quantify the effect of straggler relaunch on the cost vs.\ latency tradeoff in terms of the tail heaviness pronounced by the service time variability. We also consider employing straggler relaunch together with redundancy, and analyze its effects on cost and latency.
% [GRASS: Trimming Stragglers in Approximation Analytics] Fig. 3 shows Pareto task execution times
Parts of the results presented in this paper were published in  \cite{MAMA:AktasPS17, IFIP:AktasPS18}.

This paper is structured as follows.
In Sec.~\ref{sec:sec_system_model}, we explain the system model that is used for the presented analysis, and formally define the cost and latency of distributed job execution.
In Sec.~\ref{sec:sec_coding_vs_rep}, we examine the effect of the type and level of redundancy, and the launch time of redundant tasks on the cost vs.\ latency tradeoff.
In Sec.~\ref{sec:sec_when_red_changes_tail}, we evaluate the performance of job execution with redundancy when the redundant tasks added into system changes the tail of service time variability.
In Sec.~\ref{sec:sec_straggler_relaunch}, we study straggler relaunch and investigate its impact on the cost and latency.
In Sec.~\ref{sec:sec_red_togetherwith_relaunch}, we consider employing straggler relaunch together with redundancy.
In Sec.~\ref{sec:sec_conclusions}, we summarize our key findings, discuss the shortcomings of our analysis and possible future directions.

\vspace{0.5em}
\noindent
\textbf{Summary of observations.} Coding allows increasing the level of added redundancy with finer steps than replication, which translates into greater achievable cost vs.\ latency region.
Waiting for some time before launching the redundant tasks is not effective in trading off latency for reduced cost when the employed redundancy is coding, that is, one can obtain lower latency for the same cost by launching less number of coded tasks rather than delaying their launch time. When the employed redundancy is replication, some cost reduction is possible by launching the replica tasks after waiting some time.
Coding is more efficient than replication in the cost vs.\ latency tradeoff; adding coded tasks into job execution yields higher reduction in latency per incurred cost (hence per incurred additional load on the system) compared to adding replicated tasks.
Execution with redundancy reduces the cost and latency together when enough tail heaviness is pronounced by the service time variability. The required tail heaviness is smaller
when coding is employed compared to replication.
The advantage of coding over replication becomes greater when the job is executed at higher scale, i.e., when the job consists of greater number of parallel tasks.

Relaunching tasks that appear to be straggling after some time reduces the cost and latency when relaunching is performed at the right time and enough tail heaviness is pronounced by the service time variability.
Redundancy and straggler relaunch serve the same purpose of mitigating stragglers, hence employing both together require greater tail heaviness in service time variability in order to reduce the cost and latency.

% ##############################  System Model  ############################## %
\section{System Model}
\label{sec:sec_system_model}
We adopt a system model that is an extension of what is adopted in \cite{StragglerRep:WangJW15}; execution time (duration from its launch time to completion) of each task is modeled with a single random variable. All $k$ tasks of a job are launched simultaneously and the execution time of each is assumed to be identically and independently distributed (i.i.d.).
We use two canonical distributions to model task execution times: 1) Shifted exponential $\mathrm{SExp}(s, \mu)$ with a positive minimum value $s$ and a tail decaying exponentially at rate $\mu$, and 2) $\mathrm{Pareto}(s, \alpha)$ with a positive minimum value $s$ and a power law tail with index $\alpha$.
Minimum value of the distribution models the minimum service time of the tasks (i.e., task size), while tail of the distribution models the slowdown due to runtime variability; smaller $\mu$ or $\alpha$ implies greater chance for stragglers.

%% Google cluster data
Task execution times in modern compute systems are known to exhibit heavy tail \cite{TailAtScale:DeanB13, GoogleTraceAnalysis:ReissTG12, GRASS:AnanthanarayananHR14}.
% Pareto is a canonical heavy tailed distribution that is observed to fit execution times.
In Fig.~\ref{fig:figs/plot_google_empiricaltail}, we plot the tail distribution of the task execution times\footnote{Task execution times are calculated as the difference between the timestamps for SCHEDULE and FINISH events for each task as given in \cite{GoogleTraceAnalysis:ReissTG12}.} that we extracted from a Google Trace data for jobs with $15$, $400$, or $1050$ tasks \cite{GoogleTraceAnalysis:ReissTG12}. Note that both axes in the plots are in log scale, hence an exponential tail would have appeared as an exponentially decaying curve, while a true power law tail (e.g., tail of $\mathrm{Pareto}$) would have pronounced a linear decay at a constant rate \cite{FundamentalsOfHeavyTails:NairWZ13}. Tail distributions shown in the figure exhibit exponential decay at small values and a linearly decaying trend at larger values, which indicates a heavy tailed runtime variability \cite{PerfEvalWithHeavyTails:Crovella01}. Note that the steep decay of the tail at the far right edge is due to the bounded support of the distributions.

%% Addressing the review:
% - System model: the task sizes are assumed to be i.i.d. across servers, even when the same task is replicated. This is, in many cases, a problematic assumption if drawing more and more i.i.d. samples can cause the minimum task service time to become arbitrarily small (this problem is particularly evident for exponential task service times, and the issue is discussed in detail in [7]). Consequently much of the recent work on redundancy focuses on models in which this problem does not arise (for example, the model introduced in [7]). To be clear, I do *not* think that the actual distributions studied in this paper, the shifted exponential and the bounded pareto, suffer from this problem. This is because both of the considered distributions are lower bounded, so every copy of every task must take at least (the lower bound) amount of time on every server. Section II should discuss why the i.i.d. assumption is not a problem using these distributions, because I (and likely others) have a gut reaction that i.i.d. assumptions in redundancy systems lead to misleading results.
Assuming execution times to be identically distributed for tasks within the same job is appropriate since jobs in practice are known to be a collection of one or more usually identical tasks \cite{GoogleClusterTrace:ReissWH11, GoogleTraceAnalysis:ReissTG12}.
However, when task execution times are modeled using a distribution with a minimum value of zero (e.g., exponential distribution), assuming independent execution times across the initial and redundant tasks proved to be problematic because added redundancy in this case can make job execution time arbitrarily small. This is in contrast to reality where tasks have an inherent size and due to this, job execution times are lower bounded by a positive value regardless of the level of added redundancy. Modeling task execution times with a minimum value of zero have previously led to theoretical results that are at odds with experimental measurements. Implications of this are discussed in detail in \cite{DecouplingSlowdownJobsize:GardnerHS17} and authors propose a better model for service times in which the time due to task size is decoupled from the time due to runtime variability. Specifically, service times are modeled as $s \times Sl$ where $s$ represents the inherent task size and $Sl$ is the slowdown factor, which is assumed to be i.i.d. across tasks and servers with a minimum value of $1$.
Distributions that we adopt for modeling task execution times can be expressed using the decoupling method introduced in \cite{DecouplingSlowdownJobsize:GardnerHS17}; $\mathrm{SExp}(s, \mu)$ can be written as $s \times \mathrm{SExp}(1, \mu)$ or $\mathrm{Pareto}(s, \alpha)$ as $s \times \mathrm{Pareto}(1, \alpha)$.

In our model, redundant tasks are added into execution only if the job does not complete within some time $\Delta$.
Redundancy is introduced either in the form of task replicas or coded parity tasks. When replication is employed, $c$ replicas are launched for every remaining task at time $\Delta$. When coding is employed, $n-k$ MDS coded parity tasks are launched at time $\Delta$ (see Fig.~\ref{fig:fig_delayed_red}).
When straggler relaunch is implemented, tasks (initial or redundant) remaining at time $\Delta$ are canceled and fresh replacements are immediately launched in their place.
\begin{figure}[t]
  \centering
  \includegraphics[width=0.5\textwidth, keepaspectratio=true]{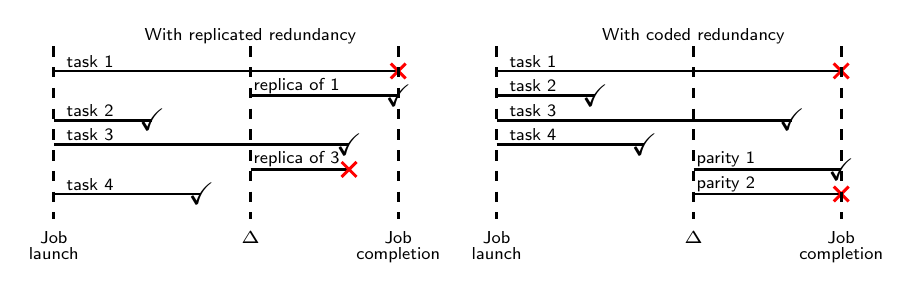}
  \caption{A job of four tasks is executed by launching replicated (Left) or coded (Right) tasks, some time $\Delta$ after launching job's initial tasks. Check marks represent task completions while crosses represent cancellations. With replication, exact clones of the remaining tasks are launched, while with coding, parity tasks can be used as a ``clone'' for any task, therefore, stragglers do not have to be tracked down.}
  \label{fig:fig_delayed_red}
\end{figure}

\begin{figure*}[t]
  \centering
  \begin{subfigure}[]{.25\textwidth}
    \centering
    \includegraphics[width=1\textwidth, keepaspectratio=true]{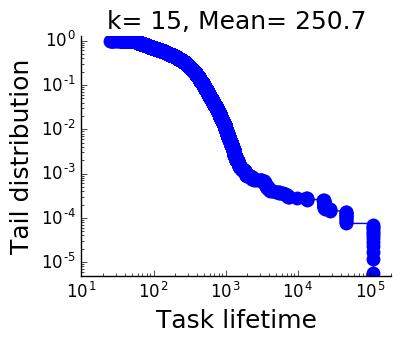}
  \end{subfigure}
  \hspace{1em}
  \begin{subfigure}[]{.25\textwidth}
    \centering
    \includegraphics[width=1\textwidth, keepaspectratio=true]{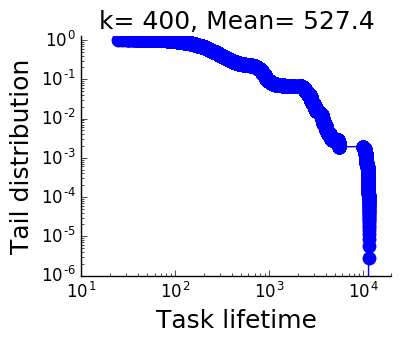}
  \end{subfigure}
  \hspace{1em}
  \begin{subfigure}[]{.25\textwidth}
    \centering
    \includegraphics[width=1\textwidth, keepaspectratio=true]{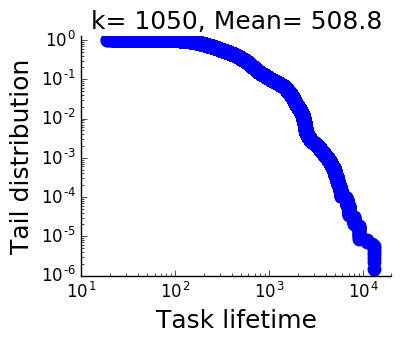}
  \end{subfigure}
  \caption{Empirical tail distribution of task execution times for Google cluster jobs with number of tasks $k=15, 400, 1050$.}
  \label{fig:figs/plot_google_empiricaltail}
\end{figure*}

We define the cost of executing a job as the sum of the lifetimes of all the tasks (including the redundant ones) involved in its execution. Lifetime of a task is the duration from its launch to its completion or cancellation.
% This definition reflects the pay-for-resources pricing which is the most commonly used model for pricing cloud services such as Amazon EC2, Google App Engine and Windows Azure \cite{CloudComputingPricingModels:RoomiEB13}.
Depending on the application domain, there are two possible cost definitions: 1) \textit{Cost with task cancellation}; outstanding redundant tasks are canceled as soon as the job completes (as illustrated in Fig.~\ref{fig:fig_delayed_red}), which is a viable option for distributed job execution, 2) \textit{Cost without task cancellation}; outstanding redundant tasks are left to run until they complete, which for instance is the only option for routing messages with redundancy in an opportunistic network \cite{ErasureCodingBasedRoutingForOpportunisticNetworks:WangJM05}.
We assume that task cancellation takes place instantly and does not incur any delay.
In the following subsection, we elaborate on the meaning and consequences of the job execution cost.

\subsection{On the cost of job execution}
\label{subsec:subsec_on_the_cost}
Cost, as is defined here, reflects the total resource time spent while executing a job. Lower cost translates into executing the same job by occupying less area in $\textnormal{system capacity}\times\textnormal{time}$ space.
Thus, reducing the cost of job executions allows fitting more jobs per area and leads to higher system throughput \cite{BoostingThroughput:Joshi17}.

As we show in the following sections, adding redundant tasks into a job execution can increase or \emph{decrease} the cost depending on the variability pronounced in task execution times, and the type and level of introduced redundancy.
Since redundancy can lead to higher cost, it should be employed with care. Executing jobs at a higher cost implies occupying greater portion of system's overall capacity per job, which increases the load on the system. This may translate into greater congestion in the system resources, hence aggravate job slowdowns or even drive the system to instability.
For instance, \cite{DecouplingSlowdownJobsize:GardnerHS17} shows, with a queueing theoretic analysis, how excessive replication of single-task job arrivals can drive system to instability.
As another example, \cite{AttackOfClones:AnanthanarayananGS13} introduces a system, named as \emph{Dolly}, which launches replicas only for small jobs that consist of $\leq 10$ tasks. This is shown to achieve significant reduction in latency without overburdening the system according to the traces collected on two clusters at Facebook and Microsoft Bing. The underlying reason for the success of their replication scheme is the workload characteristics; small jobs were observed to tend to have short duration, thus, replicating them did not introduce substantial cost overhead in the system, while returning substantial reduction in the latency of short jobs.
% and job sizes are heavy tailed, meaning that few large jobs consume most of the cluster resources, while the cluster is dominated by small jobs. Then, executing small jobs with replication does not introduce substantial cost overhead in the system, while achieving significant reduction in job slowdowns.

The workload and system characteristics considered in \cite{AttackOfClones:AnanthanarayananGS13} are not universal; execution with redundancy is relevant in general not only for small jobs but also for jobs that run at higher scale for large duration.
Job slowdowns due to stragglers is an emerging problem for future high performance computing (HPC) systems. Exascale computing is expected to be implemented by systems that are much larger in size and will enable execution at unprecedented levels of parallelism. These future systems are anticipated to be prone to much higher node level runtime variability \cite{ExascaleDOE:BrownMB10}. Moreover, to implement high resource utilization, resource scheduling in these systems is suggested to be realized with time-sharing rather than today's de facto batch scheduling \cite{TimeSharingInHPC:HofmeyrIC16}. As resource sharing is pointed out as the primary cause of stragglers in data centers \cite{TailAtScale:DeanB13}, performance of future HPC systems is likely to greatly suffer from stragglers. Simulations over the traces collected on Edison Supercomputer demonstrate that jobs with larger number of tasks and shorter duration experience higher slowdowns due to runtime variability under batch scheduling, while under time-sharing based resource scheduling, slowdowns are observed to be relatively uniform regardless of the number of tasks or the job duration \cite{TimeSharingInHPC:HofmeyrIC16}.
% The goal of this paper is to understand the cost of executing jobs of varying size and scale by employing varying levels of replicated or coded redundancy.

\subsection{Notation and Tools for Analysis}
Expected cost ($C$) and latency ($T$) are the two metrics that we use to quantify the pain and gain of distributed execution of jobs with redundancy and/or straggler relaunch. Thus, the cost and latency by themselves imply their expected values throughout, and any other quantity associated with them is made explicit. Note that the cost and latency depend on the number of tasks $k$ that constitute the job, task sizes $s$, runtime variability (determined by $\mu$ or $\alpha$), the level of redundancy added into the job ($c$ task replicas or $n-k$ coded tasks), as well as the time $\Delta$ at which redundant tasks are launched and/or straggler relaunch is performed.

Derivations of the cost and latency expressions make frequent use of the law of total probability since we consider adding redundancy and/or performing straggler relaunch after waiting some time $\Delta$. Results from order statistics are essential for the derivations since only a subset of the launched tasks is necessary for job completion when redundancy is employed. Derivations presented in the paper require tedious algebra at times and the expressions involve some special functions that commonly appear while working with order statistics. For completeness, we kept every non-trivial step in the proofs. This made some proofs lengthy and we placed them in the Appendix that is made available as a supplement to this paper.

We here give an overview of the notation and special functions that appear throughout the paper.
% \footnote{Their detailed definitions and interesting properties can be found in \cite{NIST:DLMF}.}
For their detailed definitions and interesting properties, we refer the reader to \cite{NIST:DLMF}.
$X_{n:i}$ denotes the $i$th order statistic of $n$ i.i.d. samples drawn from a random variable $X$.
$H_n$, the $n$th harmonic number, is defined as $\sum_{i=1}^n 1/i$ for $n \in \mathbb{Z}^+$ or as $\int_0^1 (1-x^n)/(1-x) \dx{x}$ for $n \in \mathbb{R}$.
$H_{n^2}$, the $n$th generalized harmonic number of order two, is defined as $\sum_{i=1}^n 1/i^2$.
Incomplete Beta function $B(q;m,n)$ is defined for $q \in [0,1]$, $m, n \in \mathbb{R}^+$ as $\int_0^q u^{m-1}(1-u)^{n-1} \dx{u}$, Beta function $B(m,n)$ as $B(1;m,n)$ and its regularized form $I(q;m,n)$ as $B(q;m,n)/B(m,n)$.
Gamma function $\Gamma(x)$ is defined as $\int_0^{\infty} u^{x-1}e^{-u} \dx{u}$ for $x \in \mathbb{R}$ or as $(x-1)!$ for $x \in \mathbb{Z}^+$.

% \renewcommand{\tabcolsep}{0.2mm}
% \renewcommand{\arraystretch}{2}
% \begin{tabular}{SSSSSSSS} \toprule
% \begin{tabular}{|p{0.085\textwidth}|p{0.2\textwidth}|p{0.2\textwidth}|} \toprule
%   {Notation} & {Definition} & {Description} \\ \midrule
%   {$X_{n:k}$} & {N/A} & {$i$th order statistic of $n$ i.i.d. samples from $X$} \\ \midrule
%   $H_n$ & $\sum_{i=1}^n 1/i$ for $n \in Z^+$ or $\int_0^1 \frac{1-x^n}{1-x} dx$ for $n \in R$ & $n$th harmonic number \\
%   $H_{n^2}$ & $\sum_{i=1}^n 1/i^2$ & $n$th generalized harmonic number of order two \\
%   $B(q;m,n)$ & $q \in [0,1]$, $m, n \in R^+$ as $\int_0^q u^{m-1}(1-u)^{n-1} du$ & Incomplete Beta function \\
%   $B(m,n)$ & $B(1;m,n)$ & Beta function \\
%   $\Gamma(x)$ & $\int_0^{\infty} u^{x-1}e^{-u}du$ for $x \in R$ or as $(x-1)!$ for $x \in Z^+$ & Gamma function \\ \bottomrule
% \end{tabular}

% ##############################  Coding vs. Replication  ############################## %
\section{Coding vs.\ Replication}
\label{sec:sec_coding_vs_rep}
In this section, we study the cost vs.\ latency tradeoff in executing a distributed job by adding task replicas or coded tasks after waiting some time $\Delta$. Note that we do not consider straggler relaunch until Sec.~\ref{sec:sec_straggler_relaunch}.
Theorems given below firstly present expressions for the cost and latency assuming exponential task execution times, which we then use to derive the cost and latency for shifted-exponential task execution times.

% \subsection{Delayed redundancy}
% \label{subsec:subsec_delayed_red}
\vspace{2ex}
\noindent
\textbf{Consider executing a job of $k$ tasks by adding $c$ replicas for each remaining task after waiting some time $\Delta$.}
\begin{theorem}
  Suppose task execution times are i.i.d. with $\mathrm{Exp}(\mu)$.
  % by adding $c$ replicas for each remaining task after waiting some time $\Delta$.
  Distribution of job execution time is given as
  \begin{equation}
  \begin{split}
    % DEPRECATED:
    % & \Pr\{T > t\} = \mathbbm{1}(t < \Delta)\left(q^k - (1-e^{-\mu t})^k\right) \\
    % &+ 1 - \left(q + (1-q)\mathbbm{1}(t > \Delta)\left(1-e^{-(c+1)\mu(t-\Delta)}\right)\right)^k.
    \Pr\{T \leq t\} &= \Bigl(1 - \mathbbm{1}(t \leq \Delta)(e^{-\mu t} - e^{-\mu\Delta}) \\
    &\qquad - \mathbbm{1}(t > \Delta)e^{-\mu\left((c+1)(t-\Delta) + \Delta\right)} \Bigr)^k
  \end{split}
  \label{eqn:eq_k_cd_Exp_tail}
  \end{equation}
  
  Latency is well approximated as
  \begin{equation}
    \E[T] \approx \frac{1}{\mu}\left(H_k - \frac{c}{c+1}H_{k-kq}\right).
  \label{eqn:eq_k_cd_Exp_ET}
  \end{equation}
  
  Cost with ($C^c$) or without ($C$) task cancellation is given as
  \begin{equation}
    \E[C^c] = \frac{k}{\mu}, \quad\quad \E[C] = \left(c(1-q) + 1\right)\frac{k}{\mu}.
  \label{eqn:eq_k_cd_Exp_EC}
  \end{equation}
  where $q = 1 - e^{-\mu\Delta}$.
\label{thm_k_cd_Exp_T_C}
\end{theorem}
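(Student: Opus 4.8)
The plan is to reduce all three claims to a single-task computation and then combine the tasks using independence. Fix one of the $k$ tasks and let $X\sim\mathrm{Exp}(\mu)$ be its initial execution time. If $X\le\Delta$ the task finishes at time $X$; otherwise, at time $\Delta$ its $c$ replicas are launched, so from then on it is served by $c+1$ copies whose residual times are, by the memorylessness of $\mathrm{Exp}(\mu)$, i.i.d.\ $\mathrm{Exp}(\mu)$, and the task finishes at $\Delta+E$ with $E\sim\mathrm{Exp}\!\left((c+1)\mu\right)$. Hence a single task's completion time $T_i$ has tail $e^{-\mu t}$ for $t\le\Delta$ and $e^{-\mu\Delta}e^{-(c+1)\mu(t-\Delta)}=e^{-\mu((c+1)(t-\Delta)+\Delta)}$ for $t>\Delta$; since the tasks are i.i.d.\ and $T=\max_i T_i$, raising $\Pr\{T_i\le t\}$ to the $k$-th power gives \eqref{eqn:eq_k_cd_Exp_tail}.

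For the latency I would start from $\E[T]=\int_0^\infty\Pr\{T>t\}\,\dx{t}$ and split the integral at $\Delta$. On $(\Delta,\infty)$ the affine substitution $u=(c+1)(t-\Delta)+\Delta$ turns the integrand into the redundancy-free one $1-(1-e^{-\mu u})^k$ divided by $c+1$; combined with $\int_0^\infty\!\left(1-(1-e^{-\mu t})^k\right)\dx{t}=H_k/\mu$ this yields the exact identity $\E[T]=\frac{H_k}{\mu}-\frac{c}{c+1}\int_\Delta^\infty\!\left(1-(1-e^{-\mu t})^k\right)\dx{t}$. The remaining integral is where an approximation enters: substituting $x=e^{-\mu t}$ and conditioning on the (random) number $N$ of tasks still alive at $\Delta$ — so $N\sim\mathrm{Binom}(k,1-q)$, with mean $k(1-q)=k-kq$ — one would find that it equals $\frac1\mu\E[H_N]$ via the integral representation of $H_n$. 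Replacing $\E[H_N]$ by $H_{\E[N]}=H_{k-kq}$ (a mean-field step, exact at $q\in\{0,1\}$ and tightening as $k$ grows, since $N$ concentrates) gives \eqref{eqn:eq_k_cd_Exp_ET}; equivalently, the same estimate drops out of a two-phase picture, in which execution runs at its usual pace until about $kq$ of the $k$ tasks are done — this takes time $\approx\E[X_{k:kq}]=\frac1\mu(H_k-H_{k-kq})$ and roughly coincides with reaching time $\Delta$ — after which the remaining $k-kq$ tasks drain at the $(c{+}1)$-fold rate in expected time $H_{k-kq}/((c+1)\mu)$. I expect this mean-field replacement ($\E[H_N]\approx H_{\E[N]}$, equivalently $\Delta\approx\E[X_{k:kq}]$) to be the only non-rigorous step and the main obstacle to a clean exact statement.

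For the cost I would split the total resource time into the contributions of the $k$ original task instances and of the replicas. Without cancellation, each original instance runs for its own $\mathrm{Exp}(\mu)$ duration regardless of $\Delta$, so the originals contribute $k/\mu$ in expectation; task $i$ spawns $c$ replicas exactly on $\{X_i>\Delta\}$, an event of probability $1-q$, and each replica runs a fresh $\mathrm{Exp}(\mu)$, so by linearity (a Wald argument, the replica count being independent of the replica durations) the replicas add $ck(1-q)/\mu$, giving $\E[C]=(c(1-q)+1)\,k/\mu$. With cancellation I would show each task slot contributes exactly $1/\mu$ in expectation, independent of $c$ and $\Delta$: writing the slot's resource usage as $\int_0^{T_i}R_i(t)\,\dx{t}$, where $R_i(t)$ is the number of its live copies ($R_i\equiv 1$ on $[0,\Delta)$ and $R_i\equiv c+1$ on $[\Delta,T_i)$), Fubini gives $\int_0^\Delta e^{-\mu t}\,\dx{t}+\int_\Delta^\infty(c+1)e^{-\mu\Delta}e^{-(c+1)\mu(t-\Delta)}\,\dx{t}=\frac1\mu(1-e^{-\mu\Delta})+\frac1\mu e^{-\mu\Delta}=\frac1\mu$, and summing over the $k$ slots yields $\E[C^c]=k/\mu$. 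The only care needed here is bookkeeping $R_i$ across the regime change at $\Delta$ and invoking memorylessness for the post-$\Delta$ residual; these steps are routine, so the cost formulas in \eqref{eqn:eq_k_cd_Exp_EC} are the easy part.
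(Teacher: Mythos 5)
Your argument is correct and, for the distribution and both cost formulas, it is essentially the paper's proof: the paper likewise conditions a single task on $\{X\le\Delta\}$ versus $\{X>\Delta\}$, uses memorylessness to replace the post-$\Delta$ phase by a fresh $\mathrm{Exp}((c+1)\mu)$ minimum, raises the per-task CDF to the $k$-th power, and computes the costs by per-task accounting (your Fubini/Wald bookkeeping is just a slightly different packaging of the same per-task expectations, which also evaluate to $1/\mu$ and $(c(1-q)+1)/\mu$ per task). One point in your favor: your single-task tail $e^{-\mu t}$ for $t\le\Delta$ is the correct one; the displayed theorem formula (and the corresponding line in the paper's appendix) drops the term $\mathbbm{1}(t\le\Delta)e^{-\mu\Delta}$ coming from $\Pr\{X_{c+1:1}>t-\Delta\}=1$ when $t\le\Delta$, so your derivation actually yields the corrected version of that equation rather than the literal one. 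For the latency you take a genuinely different, and arguably cleaner, route: the paper conditions on the number $R$ of tasks finished by $\Delta$, writes $\E[T]-\E[X_{k:k}]$ via conditional expectations and exponential order-statistic means, and then invokes its appendix lemma $\E[H_{k-R}]\approx H_{k-kq}$; you instead integrate the tail, use the affine substitution on $(\Delta,\infty)$ to get the exact identity $\E[T]=\frac{H_k}{\mu}-\frac{c}{c+1}\int_\Delta^\infty\bigl(1-(1-e^{-\mu t})^k\bigr)\dx{t}$, and identify the remaining integral exactly with $\frac{1}{\mu}\E[H_N]$, $N\sim\mathrm{Binomial}(k,1-q)$ (this identity does hold, by the change of variables you indicate together with the binomial generating function). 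At that point your ``mean-field'' replacement $\E[H_N]\approx H_{\E[N]}$ is exactly the content of the paper's Lemma on $\E[H_{n-R}]$, justified there at the same level of rigor (exact at $q\in\{0,1\}$, via the PGF approximation $\E[x^{-R}]\approx x^{-kq}$), so you have correctly isolated the one non-rigorous step; your derivation buys a shorter exact intermediate identity, while the paper's conditioning-on-$R$ route is the template it reuses for the coded and shifted-exponential cases.
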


\begin{theorem}
  Suppose task execution times are i.i.d. with $\mathrm{SExp}(s, \mu)$.
  Distribution and the expected value of job execution time are given as
  \begin{equation}
  \begin{split}
    \Pr\{T > t\} &= \Pr\{T_e > t - s\}, \\
    \E[T] &= s + \E[T_e].
  \end{split}
  \label{eqn:eq_k_cd_SExp_tail__ET}
  \end{equation}
  where $T_e$ is the job execution time when task execution times are distributed as $\mathrm{Exp}(\mu)$, for which the distribution and expected value are given in Thm.~\ref{thm_k_cd_Exp_T_C}.
  
%   Cost with task cancellation is given for $\Delta \leq s$ as
%   \begin{equation}
%     \E[C^c] = k(c+1)\left(s + \frac{1}{\mu}\left(1 - \frac{c}{c+1}(e^{-\mu\Delta} + \mu\Delta)\right)\right).
%   \end{equation}
%   and given for $\Delta > s$ as
%   \begin{equation}
%     \E[C^c] = k\left(s + \frac{1}{\mu}\left(1 + c\left(1-q-e^{-\mu\Delta}\right)\right)\right).
%   \label{eqn:eq_k_cd_SExp_ECwcancel}
%   \end{equation}
  Cost with task cancellation is given as
  \begin{equation}
    \E[C^c] = 
    \begin{cases}
      \begin{aligned}
        & k(c+1)\Biggl(s + \frac{1}{\mu} \\
        &~ \times \left(1 - \frac{c}{c+1}(e^{-\mu\Delta} + \mu\Delta) \right)\Biggr)
      \end{aligned} & \Delta \leq s, \\
      k\left(s + \frac{1}{\mu}\left(1 + c\left(1-q-e^{-\mu\Delta}\right)\right)\right) & o.w.
    \end{cases}
  \label{eqn:eq_k_cd_SExp_ECwcancel}
  \end{equation}
  
  Cost without task cancellation is given as
  \begin{equation}
    \E[C] = k\left(c(1-q)+1\right)(s + 1/\mu).
  \label{eqn:eq_k_cd_SExp_EC}
  \end{equation}
  where $q = \mathbbm{1}(\Delta > s)\left(1 - e^{-\mu(\Delta - s)}\right)$.
\label{thm_k_cd_SExp_T_C}
\end{theorem}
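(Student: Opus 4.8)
The plan is to reduce every claim to the exponential case of Theorem~\ref{thm_k_cd_Exp_T_C} together with a few elementary tail integrals, exploiting that a $\mathrm{SExp}(s,\mu)$ variable equals $s$ plus an $\mathrm{Exp}(\mu)$ variable and that a replica launched at $\Delta$ carries the same additive shift $s$. Throughout I write $E_{i,0}\sim\mathrm{Exp}(\mu)$ for the exponential part of the $i$th initial task's service time and $W_i=\min_{j=1}^{c}E_{i,j}\sim\mathrm{Exp}(c\mu)$ for the minimum of that position's $c$ replica exponentials, all $E_{i,j}$ independent.

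First I would prove \eqref{eqn:eq_k_cd_SExp_tail__ET} by a coupling: run the shifted-exponential system and the exponential system of Theorem~\ref{thm_k_cd_Exp_T_C} at the \emph{same} launch epoch $\Delta$ and on the same $E_{i,j}$. A short case check (split on whether the initial task finishes before $\Delta$, and use memorylessness at $\Delta$ otherwise) shows that position $i$ completes at $C_i=s+\min(E_{i,0},\Delta+W_i)$ in the shifted-exponential system and at $C^{e}_i=\min(E_{i,0},\Delta+W_i)$ in the exponential one. A replica launched at $\Delta$ that ends up losing never affects $C_i$, so these two expressions hold irrespective of which positions actually receive replicas, even though that set differs between the two systems. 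Taking the maximum over the $k$ positions yields $T=s+T_e$ pointwise, hence $\Pr\{T>t\}=\Pr\{T_e>t-s\}$ and $\E[T]=s+\E[T_e]$; here $T_e$ is the Theorem~\ref{thm_k_cd_Exp_T_C} quantity at delay $\Delta$ (so internally it carries the parameter $1-e^{-\mu\Delta}$, which is not the $q$ of the present statement), and its distribution and the approximation for $\E[T_e]$ are then quoted verbatim from Theorem~\ref{thm_k_cd_Exp_T_C}.

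For the costs I would reuse these variables. Without cancellation \eqref{eqn:eq_k_cd_SExp_EC}: every launched task runs to completion, so its lifetime is a fresh $\mathrm{SExp}(s,\mu)$ with mean $s+1/\mu$; the tasks are the $k$ initial ones plus $c$ replicas for each position still running at $\Delta$, and position $i$ is running at $\Delta$ exactly when $\Delta\le s$ or $E_{i,0}>\Delta-s$, an event of probability $1-q$ with $q=\mathbbm{1}(\Delta>s)(1-e^{-\mu(\Delta-s)})$; since the replica service times are independent of the initial times that fix the set of running positions, conditioning on that set gives $\E[C]=(k+ck(1-q))(s+1/\mu)$. With cancellation \eqref{eqn:eq_k_cd_SExp_ECwcancel}: each copy of position $i$ lives from its launch until the position completes at $C_i$, so the original contributes $C_i$ and, when replicas are present (equivalently $C_i>\Delta$), each of the $c$ replicas contributes the nonnegative amount $C_i-\Delta$; summing over positions gives the single identity $\E[C^c]=k\,\E[C_i]+ck\,\E[(C_i-\Delta)^+]$, valid for every $\Delta$. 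I would finish it from the two-regime tail $\Pr\{\min(E_{i,0},\Delta+W_i)>y\}=e^{-\mu y}$ for $y<\Delta$ and $e^{-\mu y-c\mu(y-\Delta)}$ for $y\ge\Delta$: integrating over $y\ge0$ gives $\E[C_i]=s+\tfrac1\mu(1-\tfrac{c}{c+1}e^{-\mu\Delta})$, and for $\E[(C_i-\Delta)^+]=\E[(\min(E_{i,0},\Delta+W_i)-(\Delta-s))^+]$ the argument is automatically nonnegative when $\Delta\le s$, so this equals $\E[C_i]-\Delta$ and $\E[C^c]=k(c+1)\E[C_i]-ck\Delta$, whereas for $\Delta>s$ integrating the same tail over $y\in[\Delta-s,\infty)$ (break at $y=\Delta$) gives $\tfrac{1-q}{\mu}(1-\tfrac{c}{c+1}e^{-\mu s})$; substituting and simplifying with $(1-q)e^{-\mu s}=e^{-\mu\Delta}$ collapses the two cases to the stated piecewise form.

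The conceptual content is light, so the main obstacle is bookkeeping rather than ideas: in the $\E[C^c]$ step one must fix each copy's launch instant precisely, verify that $C_i>\Delta$ whenever replicas are still alive so that the lifetimes $C_i-\Delta$ are genuine, and not double count the winning copy, and then push through the routine-but-not-short algebra reducing $k\,\E[C_i]+ck\,\E[(C_i-\Delta)^+]$ to \eqref{eqn:eq_k_cd_SExp_ECwcancel}. A longer alternative to the unified identity is to condition directly on whether the $i$th initial task has finished by $\Delta$ and invoke memorylessness for those that have not; that makes the case split of \eqref{eqn:eq_k_cd_SExp_ECwcancel} surface more explicitly.
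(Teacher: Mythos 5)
Your proposal is correct and follows essentially the same route as the paper: the shift argument reducing $T$ to $s+T_e$ with $T_e$ from Thm.~\ref{thm_k_cd_Exp_T_C}, the identical per-task accounting for $\E[C]$, and for $\E[C^c]$ the same tail integrals of $\min(E_{i,0},\Delta+W_i)$ split at $\Delta\le s$ versus $\Delta>s$ --- you merely repackage the paper's conditioning on completion before/after $\Delta$ (with the conditional residual $Y=\min\{X_{c:1},\,(X-\Delta\mid X>\Delta)\}$) into the single pathwise identity $C^c=\sum_i\bigl(C_i+c(C_i-\Delta)^+\bigr)$, which evaluates to the same piecewise expressions. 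Your explicit coupling check that losing replicas (those launched for positions with $\Delta-s<E_{i,0}\le\Delta$) never affect completion times is a slightly more careful rendering of the paper's brief ``as if'' equivalence.
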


\vspace{2ex}
\noindent
% \textbf{Adding $n-k$ coded tasks into job execution after time $\Delta$.}
\textbf{Consider executing a job of $k$ tasks by adding $n-k$ coded tasks after waiting some time $\Delta$.}
\begin{theorem}
  Suppose task execution times are i.i.d. with $\mathrm{Exp}(\mu)$.
  Distribution and the expected value of job execution time are well approximated as
  \begin{longaligned}[\label{eqn:eq_k_nd_Exp_tail__ET}]
    & \Pr\{T > t\} \approx \mathbbm{1}(t \leq \Delta)\left(q^k - (1-e^{-\mu t})^k\right) \longalignedtag \\
    &\qquad\qquad\quad + I\left(\mathbbm{1}(t > \Delta)e^{-\mu(t-\Delta)}; n-k+1, k(1-q)\right) \\
    &\qquad\qquad\quad - q^k I\left(\mathbbm{1}(t > \Delta)e^{-\mu(t-\Delta)}; n-k+1, 0\right), \\
    & \E[T] \approx \Delta - \frac{1}{\mu}\left(B(q;k+1,0) + H_{n-kq} - H_{n-k}\right).
  \end{longaligned}
%   \begin{equation}
%   \begin{split}
%     & \Pr\{T > t\} \approx \mathbbm{1}(t \leq \Delta)\left(q^k - (1-e^{-\mu t})^k\right) \\
%     &\qquad\qquad\quad + I\left(\mathbbm{1}(t > \Delta)e^{-\mu(t-\Delta)}; n-k+1, k(1-q)\right) \\
%     &\qquad\qquad\quad - q^k I\left(\mathbbm{1}(t > \Delta)e^{-\mu(t-\Delta)}; n-k+1, 0\right), \\
%     & \E[T] \approx \Delta - \frac{1}{\mu}\left(B(q;k+1,0) + H_{n-kq} - H_{n-k}\right).
%   \end{split}
%   \label{eqn:eq_k_nd_Exp_tail__ET}
%   \end{equation}
  
  Cost with ($C^{c}$) or without ($C$) task cancellation is given as
  \begin{equation}
  \begin{split}
  	\E[C^c] = \frac{k}{\mu}, \qquad \E[C] = \frac{k}{\mu}q^k + \frac{n}{\mu}\left(1-q^k\right),
  \end{split}
  \label{eqn:eq_k_nd_Exp_EC}
  \end{equation}
  where $q = 1 - e^{-\mu\Delta}$.
  \label{thm_k_nd_Exp_T_C}
\end{theorem}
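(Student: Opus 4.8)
The plan is to establish the three claims separately, in each case splitting the timeline at the redundancy-launch epoch $\Delta$ and leaning on the memorylessness of $\mathrm{Exp}(\mu)$. For the tail, note first that for $t\le\Delta$ no parity task has been launched, so $\{T>t\}$ is exactly the event that the largest of the $k$ initial execution times exceeds $t$, which is the $t\le\Delta$ branch with no approximation. For $t>\Delta$ I would condition on $M$, the number of initial tasks that complete by $\Delta$; here $M\sim\mathrm{Binomial}(k,q)$ with $q=1-e^{-\mu\Delta}$, and $\{T>t\}$ forces $M<k$. Given $M=m<k$, memorylessness makes the $\Delta$-residual times of the $k-m$ surviving initial tasks together with the $n-k$ freshly launched parity tasks i.i.d.\ $\mathrm{Exp}(\mu)$, so the job finishes at $\Delta+X_{n-m:\,k-m}$. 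Using the exponential order-statistic identity $\Pr\{X_{N:i}>s\}=I(e^{-\mu s};\,N-i+1,\,i)$ and summing over $m$ with binomial weights produces the exact post-$\Delta$ tail $\sum_{m=0}^{k-1}\binom{k}{m}q^m(1-q)^{k-m}\,I\bigl(e^{-\mu(t-\Delta)};\,n-k+1,\,k-m\bigr)$. The stated closed form then follows by freezing $M$ at its mean $\E[M]=kq$ inside the incomplete Beta (a first-order, $\E[f(M)]\approx f(\E M)$ approximation): adding and subtracting the degenerate $m=k$ summand, the completed sum $\sum_{m=0}^{k}$ collapses to $I\bigl(e^{-\mu(t-\Delta)};\,n-k+1,\,k(1-q)\bigr)$, while the removed $m=k$ term supplies the correction $-q^k\,I\bigl(e^{-\mu(t-\Delta)};\,n-k+1,\,0\bigr)$.

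For the latency I would integrate the tail, $\E[T]=\int_0^\infty\Pr\{T>t\}\,dt$, splitting at $\Delta$. On $[0,\Delta]$ the substitution $u=1-e^{-\mu t}$ turns the integral into $\tfrac1\mu\int_0^q\tfrac{1-u^k}{1-u}\,du$, and writing $\tfrac{1-u^k}{1-u}=\tfrac1{1-u}-\tfrac{u^k}{1-u}$ together with $-\ln(1-q)=\mu\Delta$ reduces this to $\Delta-\tfrac1\mu B(q;k+1,0)$. On $[\Delta,\infty)$ I would integrate the approximate tail term by term: with $s=t-\Delta$, $\int_0^\infty I(e^{-\mu s};n-k+1,k(1-q))\,ds$ is the mean of the $k(1-q)$-th order statistic of $n-kq$ i.i.d.\ $\mathrm{Exp}(\mu)$, equal to $\tfrac1\mu(H_{n-kq}-H_{n-k})$, while the degenerate $I(e^{-\mu s};n-k+1,0)$ integrates to zero; collecting the pieces gives the stated approximation for $\E[T]$.

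The two cost claims are exact and shorter. Without cancellation every task runs to completion, so $C$ is the total execution time: the $k$ initial tasks contribute mean $k/\mu$ unconditionally, and the $n-k$ parity tasks are present only on the event $A=\{\text{job unfinished at }\Delta\}$, which depends on the initial tasks alone and is therefore independent of the parity tasks' execution times, so they add $\Pr\{A\}\,(n-k)/\mu=(1-q^k)(n-k)/\mu$; regrouping gives $\E[C]=\tfrac{k}{\mu}q^k+\tfrac{n}{\mu}(1-q^k)$. With cancellation the job ends exactly at its $k$-th task completion, so I would decompose $C^c$ over the $k$ inter-completion intervals and show each — including the single one straddling $\Delta$ — contributes mean $1/\mu$ given the past. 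In an interval that starts with $m_0$ running tasks and into which $b=n-k$ fresh tasks are launched at offset $d$, the next-completion time $W$ has $\Pr\{W>s\}=e^{-m_0\mu s}$ for $s<d$ and $e^{-m_0\mu s-b\mu(s-d)}$ for $s\ge d$, so the accrued cost $m_0\,\E[\min(W,d)]+(m_0+b)\,\E[(W-d)^+]$ equals $m_0\cdot\tfrac{1-e^{-m_0\mu d}}{m_0\mu}+(m_0+b)\cdot\tfrac{e^{-m_0\mu d}}{(m_0+b)\mu}=\tfrac1\mu$, regardless of $m_0,b,d$ (the pre-$\Delta$ intervals being the case $b=0$); summing over the $k$ intervals gives $\E[C^c]=k/\mu$. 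A more laborious alternative for $C^c$ is to condition on $M$ directly and add up exponential order-statistic means.

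I expect the tail to be the crux: carrying the incomplete-Beta and order-statistic bookkeeping through with the non-integer parameters $k(1-q)$ and $n-kq$, and being precise about which random quantity ($M$, equivalently the number of tasks still running at $\Delta$) is frozen at its mean so that the binomial sum collapses cleanly and the $m=k$ boundary term lands as the stated correction. Once the memorylessness reductions are in place, the $[0,\Delta]$ integral, the order-statistic means, and both cost identities are routine.
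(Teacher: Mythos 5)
Your proposal is correct, and its skeleton is the same as the paper's: split at $\Delta$, condition on the Binomial number of completions by time $\Delta$, use memorylessness to reduce the post-$\Delta$ residual to $X_{n-m:k-m}$, express exponential order statistics through the regularized incomplete Beta, and then collapse the Binomial sum by freezing $M$ at its mean $kq$. The only substantive remark on that step is that the paper does not leave it as a bare $\E[f(M)]\approx f(\E M)$ plug-in: it isolates it as Lemma~\ref{lm_approx_bin_mean_of_harmonic__ibeta}, proved by pushing the expectation through the contour-integral representation of $I(\cdot;\cdot,\cdot)$ and approximating the Binomial PGF, $\E[(z/s)^{-R}]=(qs/z+1-q)^k\approx(z/s)^{-kq}$ (exact in the limits $q\to 0,1$); the same lemma also supplies $\E[H_{n-R}]\approx H_{n-kq}$. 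Beyond that, you differ in two places. For $\E[T]$ you integrate the already-approximated tail, whereas the paper conditions on $R$ and uses $\E[X_{n-r:k-r}]=\tfrac1\mu(H_{n-r}-H_{n-k})$ before approximating; both routes give $\Delta-\tfrac1\mu B(q;k+1,0)+\tfrac1\mu(H_{n-kq}-H_{n-k})$, so your sign on the harmonic term agrees with the paper's own proof (and with the $\Delta=0$ sanity check $\E[T]=\tfrac1\mu(H_n-H_{n-k})$), while the grouping printed in the theorem statement, like its $t\le\Delta$ branch with the indicator inside the Beta argument, appears to be a typo rather than a defect of your argument. For $\E[C^c]$ your argument is genuinely different and rather nicer: you show each of the $k$ inter-completion intervals accrues expected cost exactly $1/\mu$ regardless of how many tasks are running and of where the parity batch lands inside the interval, whereas the paper gets $\E[C^c]=\E[C]-\tfrac{n-k}{\mu}\Pr\{T>\Delta\}=k/\mu$ by subtracting the expected residual lifetimes of the cancelled tasks; both are correct and both hinge on memorylessness, but your interval decomposition makes the insensitivity to $\Delta$, $n$, and the batch offset transparent, at the cost of a little bookkeeping about conditioning on the past at each completion epoch.
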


\begin{theorem}
  Suppose task execution times are i.i.d. with $\mathrm{SExp}(s, \mu)$. 
  Distribution and the expected value of job execution time are given as
  \begin{equation}
  \begin{split}
    \Pr\{T > t\} &= \Pr\{T_e > t - s\}, \\
  	\E[T] &= s + \E[T_e],
  \end{split}
  \label{eqn:eq_k_nd_SExp_tail__ET}
  \end{equation}
  where $T_e$ is the job execution time when task execution times are distributed as $\mathrm{Exp}(\mu)$, for which the distribution and the expected value are given in Thm.~\ref{thm_k_nd_Exp_T_C}.
  
  Cost with ($C^{c}$) or without ($C$) task cancellation is given as
  \begin{equation*}
    \E[C] = 
    \begin{cases}
      n\left(s + 1/\mu\right) & \Delta \leq s, \\
      \left(k + (1-\tilde{q}^k)(n-k)\right)\left(s + 1/\mu\right) & o.w.
    \end{cases}
%   \label{eqn:eq_k_wrelaunch_tail__ET}
  \end{equation*}
  
  \begin{equation*}
    \E[C^c] = 
    \begin{cases}
      \begin{aligned}
        & k/\mu + ns - (n-k)q^k \\
        &~ \times \left(\Delta + k\mu\left(\frac{\zeta}{\mu q^k} - \Delta\left(\frac{1}{q}-1\right)\right)\right)
      \end{aligned} & \Delta \leq s, \\
      \begin{aligned}
        & (\approx)~ \E[C] - \frac{n-k}{\mu}\Bigl(1-q^k + \zeta^{-k(1-q)} \\
        &~ \times B(\zeta; k-kq+1, 0)\left(\tilde{q}^k-q^k\right) \Bigr)
      \end{aligned} & o.w.
    \end{cases}
%   \label{eqn:eq_k_wrelaunch_tail__ET}
  \end{equation*}
  where $q = \mathbbm{1}(\Delta > s)\left(1-e^{-\mu(\Delta-s)}\right)$, $\tilde{q} = 1-e^{-\mu\Delta}$ and $\zeta = 1-e^{-\mu s}$.
  
%   Cost with ($C^{c}$) or without ($C$) task cancellation is given for $\Delta \leq s$ as
%   \begin{equation*}
%   \begin{split}
% 	\E[C] &= n\left(s + \frac{1}{\mu}\right), \\
%     \E[C^c] &= \frac{k}{\mu} + ns - (n-k)q^k \\
%       &\quad \times \left(\Delta + k\mu\left(\frac{\zeta}{\mu q^k} - \Delta\left(\frac{1}{q}-1\right)\right)\right),
%   \end{split}
%   \end{equation*}
%   where $q = 1-e^{-\mu\Delta}$ and
%   \[ \zeta = \sum_{i=0}^{k-1} (-1)^{k-1-i} \binom{k-1}{i} \frac{1 - (1-q)^{k-i}}{k-i}. \]
  
%   For $\Delta > s$, they are given as
%   \begin{longaligned} % [\label{eqn:eq_k_nd_SExp_EC}]
%     \E[C] &= \left(k + (1-q^k)(n-k)\right)\left(\frac{1}{\mu} + s\right), \\ % \longalignedtag 
%     \E[C^c] &\approx \E[C] - \frac{n-k}{\mu}(1-q^k) \\
%     &- \frac{n-k}{\mu}\zeta^{-k(1-q)} B(\zeta; k-kq+1, 0)\left(\tilde{q}^k-q^k\right).
%   \end{longaligned}
\label{thm_k_nd_SExp_T_C}
\end{theorem}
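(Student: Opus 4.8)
The plan is to reduce every claim to the exponential results of Thm.~\ref{thm_k_nd_Exp_T_C} by exploiting the decoupling $\mathrm{SExp}(s,\mu) = s + \mathrm{Exp}(\mu)$ together with the memoryless property, the only real work being to account for the deterministic floor $s$ that every task carries.

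\emph{Latency.} Since no task can finish before time $s$, the job cannot finish before $s$ and the redundancy--launch test at time $\Delta$ is vacuous whenever $\Delta\le s$. From time $s$ onward the residual execution time of every still--running initial task is, by memorylessness, i.i.d.\ $\mathrm{Exp}(\mu)$, and a coded task launched at $\Delta$ contributes a floor followed by an independent $\mathrm{Exp}(\mu)$. Matching this picture to the exponential process of Thm.~\ref{thm_k_nd_Exp_T_C} yields $T = s + T_e$ in distribution, hence $\Pr\{T>t\} = \Pr\{T_e>t-s\}$ and $\E[T] = s + \E[T_e]$; the $\Delta\le s$ and $\Delta>s$ regimes differ only in whether the coded tasks enter before or after the floor.

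\emph{Cost without cancellation.} Every launched task runs to completion and contributes its full length, of mean $s+1/\mu$, so the only randomness is the number of tasks ever launched: all $n$ if the job has not finished by the time the coded tasks would be added, and just the original $k$ otherwise. Conditioning on that event --- whose probability is $1$ when $\Delta\le s$ and the appropriate $k$-th power recorded in the statement when $\Delta>s$ --- and using that the coded tasks' lengths are independent of the conditioning, linearity of expectation gives the two cases directly.

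\emph{Cost with cancellation.} This is the main obstacle, and unlike the previous two quantities it does not reduce to the exponential case by a shift: the exponential--case identity $\E[C^c]=k/\mu$ relies on every alive task completing at the constant rate $\mu$, which the floor $s$ destroys. I would instead condition on the number $m\in\{0,\dots,k-1\}$ of initial tasks completed by $\Delta$ (forced to $0$ when $\Delta\le s$), binomially distributed; given $m$, the job finishes at the $(k-m)$-th completion among the $k-m$ surviving initial tasks and the $n-k$ fresh coded tasks, and each such task's lifetime is the minimum of its own remaining time and this job--completion time. The difficulty is heterogeneity: a surviving initial task has an $\mathrm{Exp}(\mu)$ residual, whereas a freshly launched coded task is ``cold'', carrying an $\mathrm{SExp}(s,\mu)$ residual, so one cannot directly invoke i.i.d.\ order statistics. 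I would split each coded task's lifetime at the floor --- the pre--floor portion contributes $s$ unless the job has already been finished by the warm initial tasks alone, which is where $\zeta=1-e^{-\mu s}$ enters, and past the floor all remaining tasks rejoin a common i.i.d.\ $\mathrm{Exp}(\mu)$ pool whose order--statistic expectations are evaluated with the harmonic numbers $H_\bullet$ and incomplete Beta functions $B(\cdot;\cdot,\cdot)$ already used in Thm.~\ref{thm_k_nd_Exp_T_C}. Summing expected lifetimes over all tasks, de--conditioning over $m$, and simplifying gives the stated expressions; equivalently one may write $\E[C^c]=\E[C]-(\text{lifetime trimmed off by cancellation})$, which is the form in which the $\Delta>s$ case is reported. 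This cold--versus--warm race is also why the $\Delta>s$ branch is only an approximation, whereas in the $\Delta\le s$ branch $m=0$ and the entire completion structure is tracked exactly.
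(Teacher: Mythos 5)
Your sketch takes essentially the same route as the paper: the shift/memoryless coupling gives $T = s + T_e$, conditioning on whether the coded tasks are ever launched gives $\E[C]$, and for $\E[C^c]$ the paper likewise conditions on the binomial number of completions by $\Delta$, splits the coded tasks' lifetimes at the floor, and for $\Delta > s$ writes $\E[C^c] = \E[C]$ minus the expected work trimmed off at cancellation, decomposed over the three windows $T \le \Delta$, $\Delta < T \le \Delta + s$, $T > \Delta + s$. The only device you leave unnamed is how the de-conditioning over the binomial count is collapsed into the stated closed form --- the paper's mean-substitution approximation (as in Lemma~\ref{lm_approx_bin_mean_of_harmonic__ibeta}, of the type $\E[(u/\zeta)^{k-R}] \approx (u/\zeta)^{k(1-q)}$), which, rather than the cold-versus-warm race itself (that race is exactly computable given the count), is the precise source of the ``$(\approx)$'' in the $\Delta > s$ branch.
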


\begin{figure}[ht]
  \centering
  \begin{subfigure}[]{.4\textwidth}
    \centering
    \includegraphics[width=1\textwidth, keepaspectratio=true]{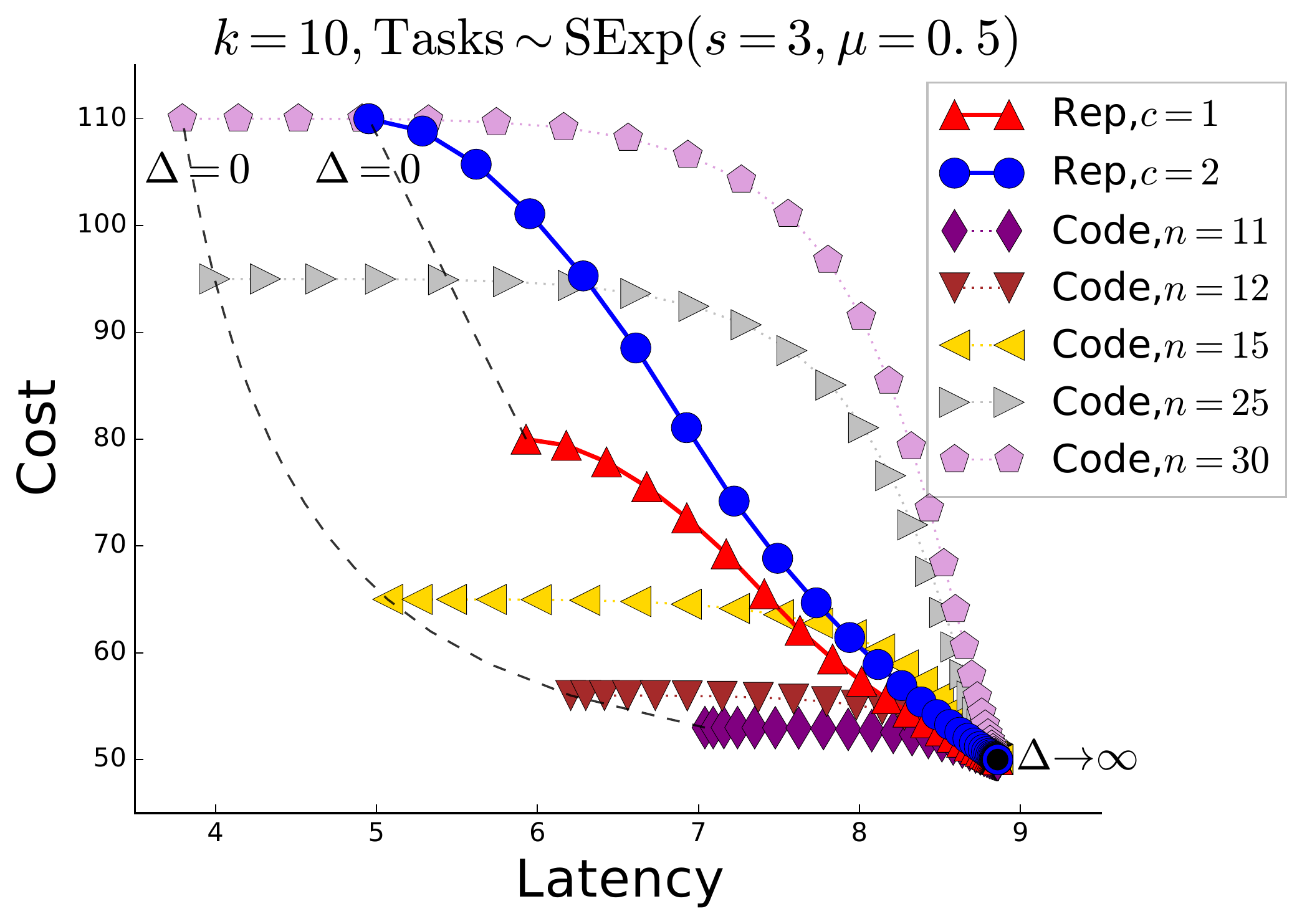}
  \label{fig:plot_EC_vs_ET_wdelay_SExp_k_10}
  \end{subfigure}
  \begin{subfigure}[]{.4\textwidth}
    \centering
    \includegraphics[width=1\textwidth, keepaspectratio=true]{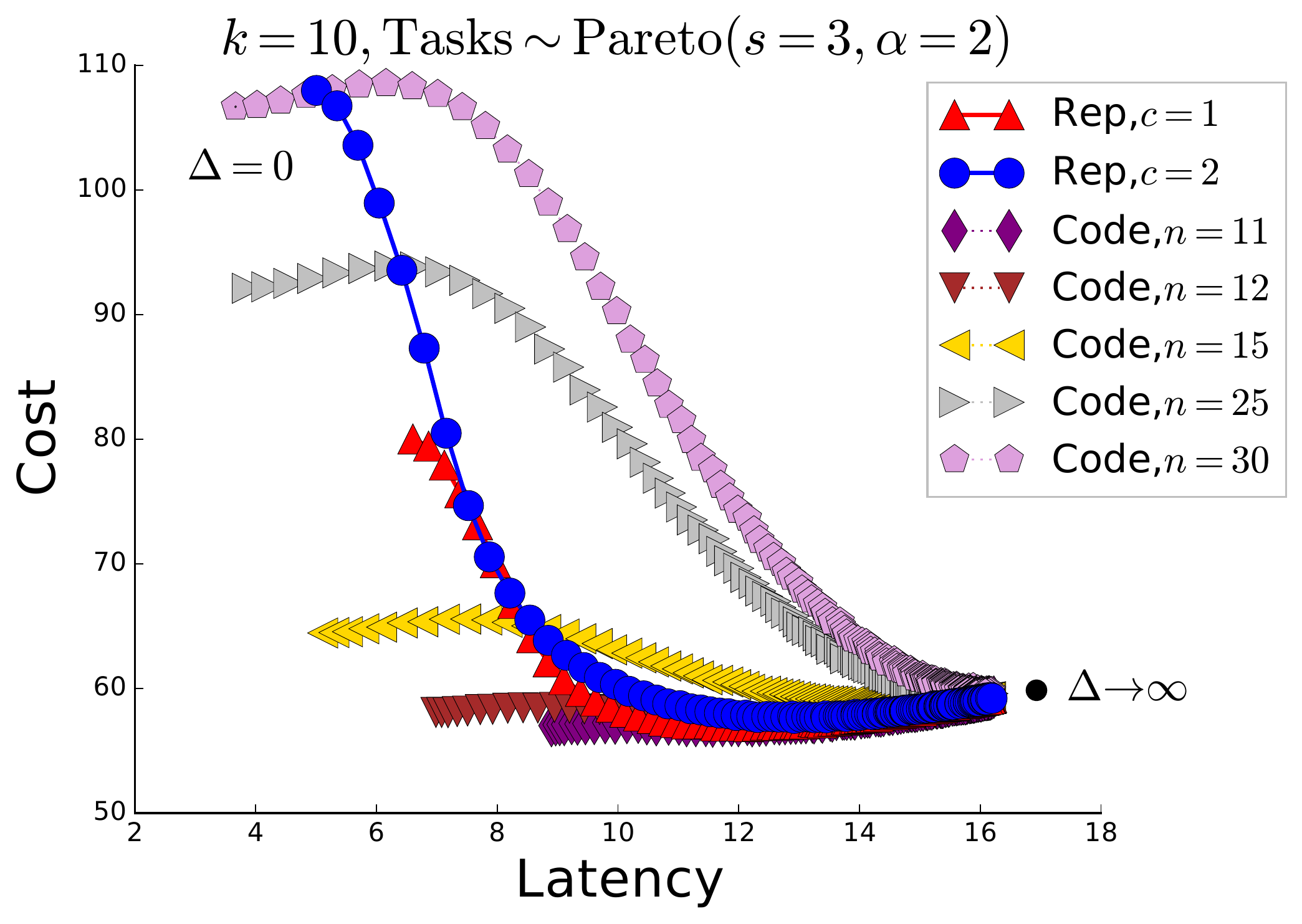}
  \label{fig:figs/plot_EC_vs_ET_wdelay_Pareto_k_10}
  \end{subfigure}
  \caption{Achievable cost (with task cancellation) vs.\ latency in executing a job of $k$ tasks with replicated ($c=1,2$) or coded ($n \in [k+1, 3k]$) redundancy. Each cost vs.\ latency curve is drawn for a fixed number of redundant tasks by varying the launch time $\Delta$ of redundant tasks. Task execution times are i.i.d. with $\mathrm{SExp}$ (Top) and $\mathrm{Pareto}$ (Bottom).}
  \label{fig:figs/plot_EC_vs_ET_wdelay}
\end{figure}

% \noindent\textbf{Scheme Comparison:}
In distributed computing systems, outstanding redundant tasks can be canceled by signaling the computing nodes as soon as the job completes, hence we always refer to the cost with task cancellation in the discussions throughout the paper.

\begin{figure*}[t]
  \centering
  \begin{subfigure}[]{.32\textwidth}
    \centering
    \includegraphics[width=1\textwidth, keepaspectratio=true]{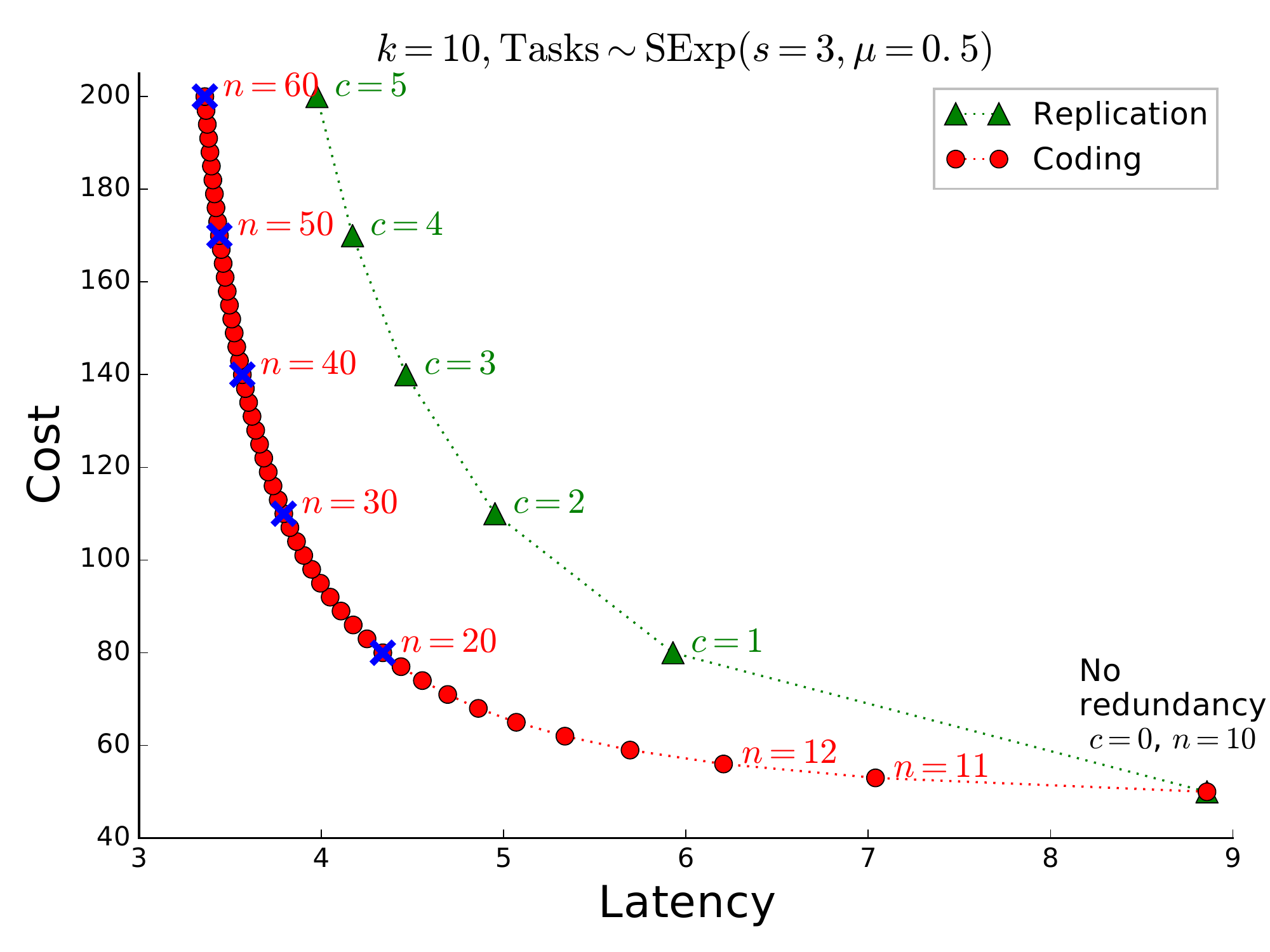}
  \end{subfigure}
  \begin{subfigure}[]{.32\textwidth}
    \centering
    \includegraphics[width=1\textwidth, keepaspectratio=true]{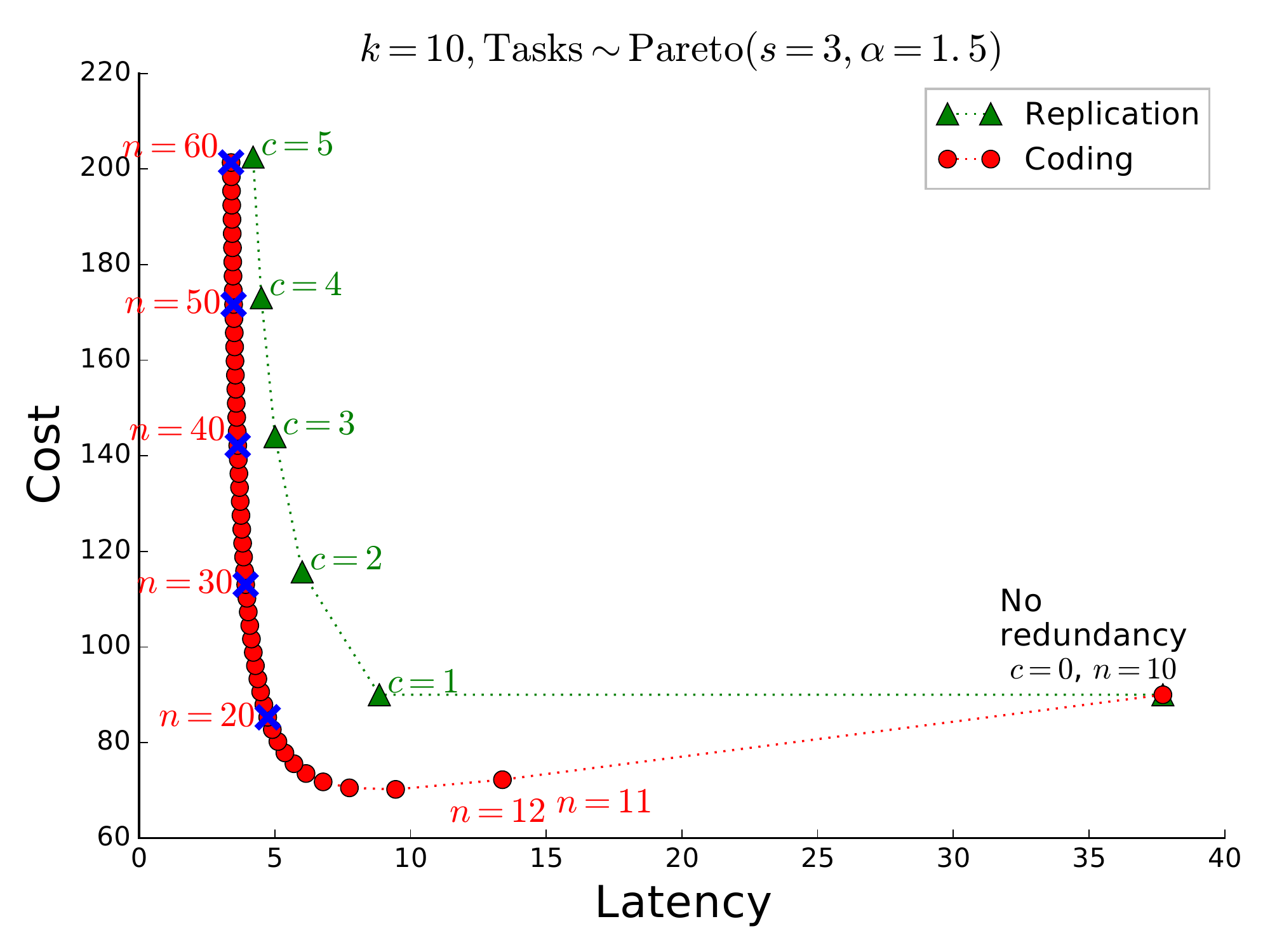}
  \end{subfigure}
  \begin{subfigure}[]{.32\textwidth}
    \centering
    \includegraphics[width=1\textwidth, keepaspectratio=true]{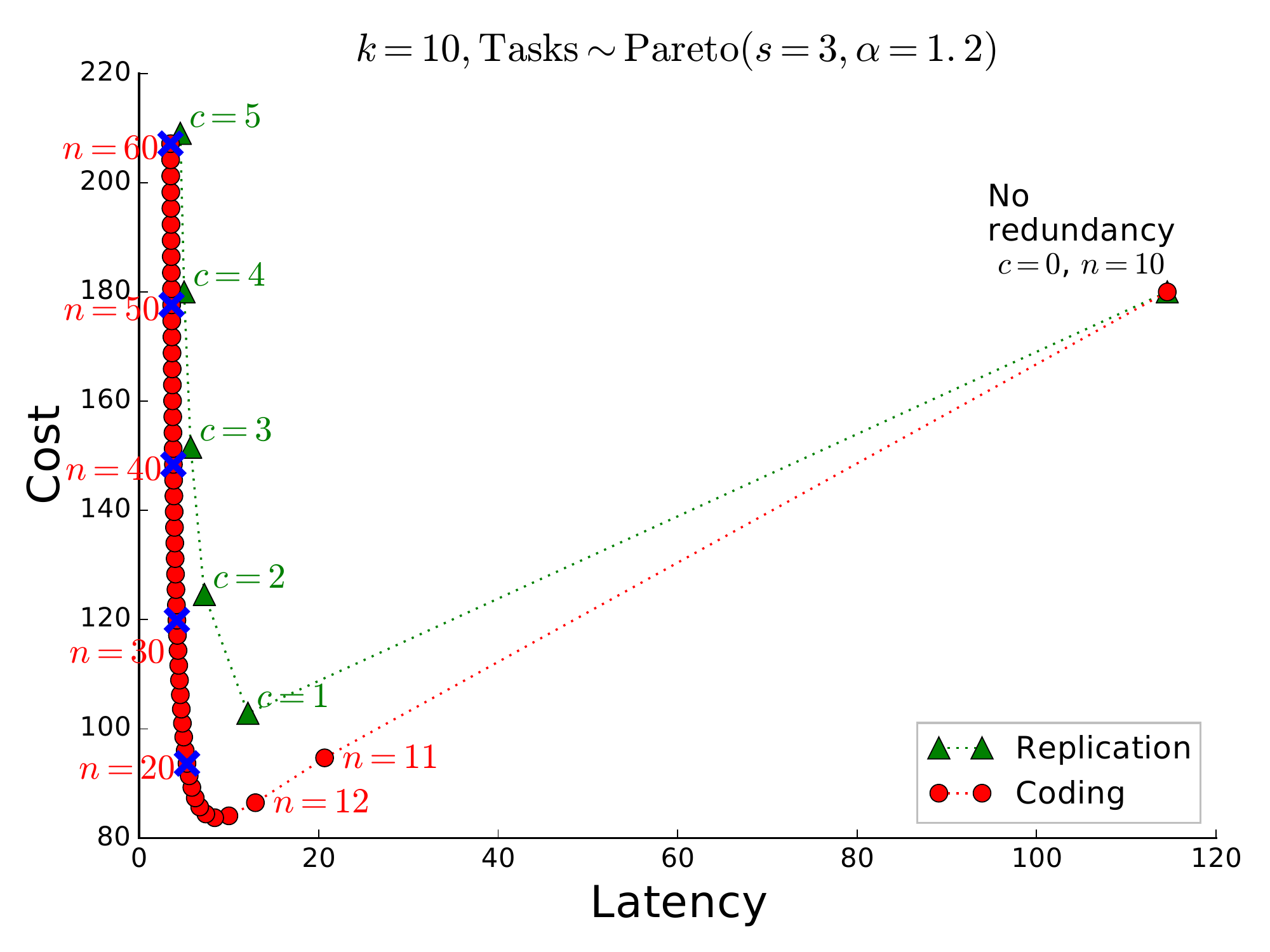}
  \end{subfigure}
  \caption{Cost vs.\ latency of executing a job of $k=10$ tasks by employing zero-delay replicated or coded redundancy. The level of employed redundancy, $c$ for replication and $n$ for coding, varies along each curve. Tail heaviness of task execution times increases from left to right.}
%   , and the heavier the tail is, the higher the maximum reduction in expected cost and latency is
  \label{fig:plot_zerodelay_reped_vs_coded_k_10}
\end{figure*}

% X ~ Exp
When task execution times are exponentially distributed, expressions in Thm.~\ref{thm_k_cd_Exp_T_C} and \ref{thm_k_nd_Exp_T_C} tell us that job execution cost neither depends on the time $\Delta$ at which redundant tasks are launched nor the level of employed replicated ($c$) or coded ($n$) redundancy. Therefore, according to our model with exponentially distributed task execution times, launching all the available redundant tasks at the beginning (i.e., $\Delta=0$) achieves the minimum latency with zero penalty in cost.

% X ~ SExp, Pareto
Recent research proposes waiting for some time before replicating the tasks to reduce the cost of redundancy \cite{StragglerRep:WangJW15}.
Using the expressions given in Thm.~\ref{thm_k_cd_SExp_T_C} and \ref{thm_k_nd_SExp_T_C}, Fig.~\ref{fig:figs/plot_EC_vs_ET_wdelay} plots the cost vs.\ latency tradeoff in executing the same job with different levels of replicated or coded redundancy, by varying the launch time $\Delta$ of the redundant tasks between $0$ and $\infty$.
First, let us focus on the case with $\mathrm{SExp}$ task execution times as shown in Fig.~\ref{fig:figs/plot_EC_vs_ET_wdelay} (Top).
Cost monotonically decreases while latency monotonically increases with $\Delta$.
Let $C(c, \Delta)$, $T(c, \Delta)$ be the cost and latency when $c$ replicas are added for each remaining task after waiting some time $\Delta$. 
Increasing $\Delta$ initially allows significant reduction in cost while causing a slight increase in latency. However, as soon as $T(c, \Delta)$ exceeds $T(c-1, 0)$ (plot shows this for $c=2$) increasing $\Delta$ further does not make sense; one can achieve less cost for the same latency by reducing $c$ rather than increasing $\Delta$.
This behavior of cost vs. latency tradeoff is more apparent when coded redundancy is employed. Increasing $\Delta$ from zero initially returns no visible cost reduction while incurring significant increase in the latency, while it does yield visible reduction in cost only after $\Delta$ reaches a certain value. In other words, adding coded tasks with delay can yield significant cost reduction only after significant sacrifice in latency.
Consider the cost and latency value at a sufficiently large value of $\Delta$ on a curve for a number of coded tasks $n-k = r > 1$, then the curve below for $n-k = r-1$ attains the same latency at less cost at a smaller value of $\Delta$.
Thus, given a job execution with a sufficiently large value of $\Delta$ and $r > 1$ coded tasks, same latency can be achieved for less cost by reducing $\Delta$ and decrementing $r$.
The same conclusions hold for the case with $\mathrm{Pareto}$ task execution times (shown in Fig.~\ref{fig:figs/plot_EC_vs_ET_wdelay} (Bottom)).

% ##############################  Zero-delay Redundancy  ############################## %

% \subsection{Zero-delay Redundancy}
The remainder of this section is concerned with the cost vs.\ latency tradeoff when redundant tasks are launched together with the original tasks (i.e., $\Delta = 0$), which we refer to as \emph{zero-delay redundancy}.
For the case with $\Delta > 0$, we could derive the cost and latency expressions only when the distribution of task execution times, which we refer to as $X$ here, has exponential tail.
This is because in the absence of memoryless property (e.g., when $X$ is heavy tailed), derivations require working with the order statistics of samples drawn from two different distributions\footnote{This issue disappears when the remaining tasks at time $\Delta$ are relaunched. This is why in Sec.~\ref{sec:sec_straggler_relaunch}, we will be able to derive the cost and latency expressions for the case of jointly performing straggler relaunch and launching redundant tasks after waiting some time $\Delta$.}; residual execution time of the remaining task copies after time $\Delta$ is distributed as $X|X > \Delta$, while the execution time of copies that are newly launched at time $\Delta$ is distributed as $X$.
Order statistics of independent but non-identical random variables has been studied in the literature \cite{OrderStatisticsForINID:BapatB89}.
Using the results available in the literature, cost and latency expressions in the case with non-exponential $X$ could be written out, however, the expressions are unwieldy (relatively bearable for job execution with replicas compared to execution with coded tasks).
We did not pursue deriving such cumbersome expressions because our purpose was to observe the effect of $\Delta$ on the cost vs. latency tradeoff, which was very well served by the expressions we derived for the case with shifted-exponential $X$.
When $\Delta = 0$, cost and latency expressions can be derived with fairly tractable steps when $X$ has either exponential or heavy tail.
\begin{theorem}
  Let the cost (with task cancellation) and latency of executing a job of $k$ tasks be $C_c$, $T_c$ when each task is launched together with $c$ replicas, and let them be $C_n$, $T_n$ when job is launced with $n-k$ additional coded tasks.
%   zero-delay replicated redundancy is employed, and let them be $C_n$, $T_n$ when zero-delay coded redundancy is employed. 
  When task execution times are i.i.d. with $\mathrm{SExp}(s, \mu)$,
  \begin{equation*}
  \begin{split}
    \E[T_c] &= s + \frac{H_k}{(c+1)\mu}, \qquad \E[C_c] = k\left((c+1)s + \frac{1}{\mu}\right), \\
    \E[T_n] &= s + \frac{1}{\mu}(H_n-H_{n-k}), \quad \E[C_n] = ns + \frac{k}{\mu}.
  \end{split}
  \label{eqn:eq_k_c_SExp_ET__EC}
  \end{equation*}
   When task execution times are i.i.d. with $\mathrm{Pareto}(s, \alpha)$,
  \begin{equation*}
  \begin{split}
    \E[T_c] &= s k!\frac{\Gamma\left(1-1/\left((c+1)\alpha\right)\right)}{\Gamma\left( k+1-1/\left((c+1)\alpha\right)\right)}, \\
    \E[C_c] &= s k(c+1)\frac{\alpha}{\alpha-1/(c+1)}, \\
    \E[T_n] &= s\frac{n!}{(n-k)!}\frac{\Gamma(n-k+1-1/\alpha)}{\Gamma(n+1-1/\alpha)}, \\
    \E[C_n] &= s\frac{n}{\alpha-1}\left(\alpha - \frac{\Gamma(n)}{\Gamma(n-k)}\frac{\Gamma(n-k+1-1/\alpha)}{\Gamma(n+1-1/\alpha)}\right).
  \end{split}
  \label{eqn:eq_k_c_Pareto_ET__EC}
  \end{equation*}
\label{thm_k_cn_ET__EC}
\end{theorem}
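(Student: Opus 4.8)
The plan is to reduce all four pairs of quantities to an order-statistics computation for a \emph{single} distribution $X$ (either $\mathrm{SExp}(s,\mu)$ or $\mathrm{Pareto}(s,\alpha)$). The key simplification is that at $\Delta=0$ every task copy is launched simultaneously, so latency, lifetimes, and the job-completion instant are all deterministic functions of i.i.d.\ draws of $X$, with no mixing of distributions — in particular the approximations appearing in Thms.~\ref{thm_k_cd_Exp_T_C} and \ref{thm_k_nd_Exp_T_C} become exact here.

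For \textbf{replication}, I would use that launching $c$ clones with the original task makes the effective completion time of task group $i$ equal to $M_i=\min$ of $c{+}1$ i.i.d.\ copies of $X$; since this minimum is again $\mathrm{SExp}(s,(c{+}1)\mu)$ in one case and $\mathrm{Pareto}(s,(c{+}1)\alpha)$ in the other, the $M_i$ are i.i.d.\ with a rescaled tail. The job ends when all $k$ groups do, so $T_c=\max_{i\le k}M_i$, and $\E[T_c]$ is the expected maximum of $k$ i.i.d.\ such variables: for $\mathrm{SExp}$ this is $s+H_k/((c{+}1)\mu)$ by the standard exponential-maximum identity, and for $\mathrm{Pareto}$ one writes $\E[\max]=s+\int_s^\infty\bigl(1-F_{M_1}(x)^k\bigr)\,\dx{x}$, substitutes $u=(s/x)^{(c{+}1)\alpha}$, and does one integration by parts to land on the stated Gamma ratio (the boundary term at $u\to 0$ vanishes because $1-1/((c{+}1)\alpha)>0$). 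For cost with cancellation, a group's surplus copies are killed the instant that group has a completed copy, so group $i$ contributes exactly $(c{+}1)M_i$ and $\E[C_c]=k(c{+}1)\,\E[M_1]$; this is $k((c{+}1)s+1/\mu)$ for $\mathrm{SExp}$ and $sk(c{+}1)\alpha/(\alpha-1/(c{+}1))$ for $\mathrm{Pareto}$, straight from the mean of the rescaled distribution. (This cancellation convention is the one consistent with $\E[C^c]=k/\mu$ in Thm.~\ref{thm_k_cd_Exp_T_C} at $\Delta=0$.)

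For \textbf{coding}, the job finishes when any $k$ of the $n$ tasks complete, so $T_n=X_{n:k}$ and $\E[T_n]=\E[X_{n:k}]$; using $\Pr\{X_{n:k}>t\}=I\!\left(1-F(t);\,n-k+1,\,k\right)$ gives $s+\frac1\mu(H_n-H_{n-k})$ for $\mathrm{SExp}$, while for $\mathrm{Pareto}$ the substitution $u=(s/t)^\alpha$ in $\E[X_{n:k}]=s+\int_s^\infty I((s/t)^\alpha;n-k+1,k)\,\dx{t}$ followed by one integration by parts collapses the incomplete-Beta factor into $\frac{n!}{(n-k)!}\,\Gamma(n-k+1-1/\alpha)/\Gamma(n+1-1/\alpha)$ (boundary term at $0$ vanishes since $n-k+1-1/\alpha>0$). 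For cost with cancellation, the $k$ fastest tasks contribute $\sum_{i=1}^kX_{n:i}$ and each of the remaining $n-k$ runs until the completion instant $X_{n:k}$, so $C_n=\sum_{i=1}^kX_{n:i}+(n-k)X_{n:k}=\int_0^{X_{n:k}}\big(n-N(t)\big)\,\dx{t}$ where $N(t)$ counts completions by time $t$. Taking expectations, swapping integral and expectation, and using $N(t)\sim\mathrm{Bin}(n,F(t))$ together with $(n-j)\binom nj=n\binom{n-1}{j}$ gives $\E\big[(n-N(t))\mathbbm{1}(N(t)<k)\big]=n(1-F(t))\,I(1-F(t);n-k,k)$, hence $\E[C_n]=n\int_0^\infty(1-F(t))\,I(1-F(t);n-k,k)\,\dx{t}$; this evaluates to $ns+k/\mu$ for $\mathrm{SExp}$ (using $\int_0^1 I(u;n-k,k)\,\dx{u}=k/n$) and, with $1-F(t)=(s/t)^\alpha$, the substitution $u=(s/t)^\alpha$ plus one integration by parts gives the stated expression for $\E[C_n]$.

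The exponential pieces are routine (indeed they can be read off by setting $\Delta=0$ in Thms.~\ref{thm_k_cd_Exp_T_C}--\ref{thm_k_nd_SExp_T_C}), so the real work is the $\mathrm{Pareto}$ side, and I expect the single hardest step to be the coded cost: it needs both the ``number-of-tasks-running'' representation together with the binomial identity above to get to the clean integral $\E[C_n]=n\int_0^\infty(1-F(t))\,I(1-F(t);n-k,k)\,\dx{t}$, and then a careful integration by parts against the incomplete Beta function with verification that the $u\to0$ boundary contribution vanishes — which is precisely where the finite-mean hypothesis $\alpha>1$ (equivalently $n-k+1-1/\alpha>0$) enters.
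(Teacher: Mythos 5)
Your proposal is correct and follows essentially the same route as the paper's (sketched) proof: the same order-statistics decomposition with $T_c=(X_{c+1:1})_{k:k}$, $C_c=k(c+1)X_{c+1:1}$ per group (you correctly identify the per-task cancellation convention consistent with Thm.~\ref{thm_k_cd_Exp_T_C}), and $T_n=X_{n:k}$, $C_n=\sum_{i=1}^k X_{n:i}+(n-k)X_{n:k}$ for coding. The only difference is that you spell out the evaluations the paper leaves as ``first principles of order statistics'' — tail-integral plus integration by parts for the Pareto means, and the running-task counting representation with the identity $(n-j)\binom{n}{j}=n\binom{n-1}{j}$ for the coded cost — and these check out, reproducing the stated expressions.
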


Using the expressions given in Thm.~\ref{thm_k_cn_ET__EC}, Fig.~\ref{fig:plot_zerodelay_reped_vs_coded_k_10} plots the cost vs.\ latency tradeoff in executing the same job by introducing varying levels of zero-delay replicated or coded redundancy. Under both $\mathrm{SExp}$ and $\mathrm{Pareto}$ task execution times, coding always achieves less latency for the same cost compared to replication. This observation is formally stated in Thm~\ref{thm_coding_vs_rep_less_latency_cost}.
\begin{theorem}
%   Using the same number of redundant tasks, (MDS) coding yields a lower cost and latency than replication. More formally, the cost and latency in executing a job of $k$ tasks is lower when $kc$ coded tasks are added into the execution compared to adding $c$ replicas for each task.
  Consider launching a job of $k$ tasks with redundant tasks.
  Cost and latency is lower when $kc$ MDS coded tasks are added compared to adding $c$ replicas for each task.
\label{thm_coding_vs_rep_less_latency_cost}
\end{theorem}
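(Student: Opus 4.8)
The plan is to read off the four closed forms from Theorem~\ref{thm_k_cn_ET__EC} with the substitution $n=k(c+1)$ (so that adding $n-k=kc$ coded tasks uses exactly the same number of tasks, $k(c+1)$, as adding $c$ replicas per task) and reduce each of the resulting inequalities to an elementary estimate.

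For latency I would first give a short, distribution-free sample-path argument: couple the two executions so the $k(c+1)$ i.i.d. task times are the same, with replication's $k$ groups of size $c+1$ forming a partition. Then $T_n=X_{k(c+1):k}$ while $T_c=\max_{g}\min_{i\in g}X_i$; since every group contributes its minimum, which is at most $\max_g(\text{group minimum})=T_c$, at least $k$ of the $k(c+1)$ times are $\le T_c$, so $X_{k(c+1):k}\le T_c$ almost surely and $\E[T_n]\le\E[T_c]$ for any service distribution. The explicit formulas confirm this and pin down strictness: for $\mathrm{SExp}$ it amounts to $H_{k(c+1)}-H_{kc}\le H_k/(c+1)$, which holds term by term because $\tfrac1{(c+1)i}-\tfrac1{kc+i}=\tfrac{c(k-i)}{i(c+1)(kc+i)}\ge0$; for $\mathrm{Pareto}$, writing $\E[T_n]/s=\prod_{i=1}^{k}\frac{kc+i}{kc+i-1/\alpha}$ and $\E[T_c]/s=\prod_{i=1}^{k}\frac{i(c+1)}{i(c+1)-1/\alpha}$, it holds factor by factor since $x\mapsto x/(x-1/\alpha)$ decreases and $kc+i\ge i(c+1)$ (i.e. $k\ge i$). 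Both are strict once $k\ge2$.

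For cost I cannot expect a sample-path inequality (one can exhibit runs where the coded execution costs strictly more), so I would work from the formulas. In the $\mathrm{SExp}$ case the two expressions are literally equal, $k\bigl((c+1)s+\tfrac1\mu\bigr)=k(c+1)s+\tfrac k\mu$, so $\E[C_n]=\E[C_c]$ and the statement holds in the weak sense. The real content is $\mathrm{Pareto}$. Putting $\gamma=1/\alpha\in(0,1)$ and $n=k(c+1)$, I would telescope $\frac{\Gamma(n)}{\Gamma(n-k)}\frac{\Gamma(n-k+1-\gamma)}{\Gamma(n+1-\gamma)}=\frac{c}{c+1}\,P_\gamma$ with $P_\gamma:=\prod_{i=1}^{k}\frac{kc+i}{kc+i-\gamma}$, substitute into $\E[C_n]$, and check that after clearing the positive common factors $\E[C_n]\le\E[C_c]$ is equivalent to
\[
P_\gamma\ \ge\ \frac{c+1}{c+1-\gamma}\ =:\ Q_\gamma .
\]

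Establishing $P_\gamma\ge Q_\gamma$ is the main obstacle: a single factor of $P_\gamma$ is in fact smaller than $Q_\gamma$, so factor-by-factor fails, and the usual $\ln(1+x)$ bounds are too lossy. My plan is a concavity-in-$\gamma$ argument. One has $P_0=Q_0=1$ and $P_1=Q_1=(c+1)/c$ (the latter because $\prod_{i=1}^{k}\frac{kc+i}{kc+i-1}$ telescopes), so $g(\gamma):=\ln P_\gamma-\ln Q_\gamma$ vanishes at $\gamma=0$ and $\gamma=1$. Since $g''(\gamma)=\sum_{i=1}^{k}\frac1{(kc+i-\gamma)^2}-\frac1{(c+1-\gamma)^2}$, $g$ is concave on $[0,1]$ provided $\sum_{i=1}^{k}\frac1{(kc+i-\gamma)^2}\le\frac1{(c+1-\gamma)^2}$, and for this I would use the elementary bound $kc+1-\gamma\ge\sqrt{k}\,(c+1-\gamma)$, valid for $k\ge2$, $c\ge1$, $\gamma\in[0,1)$ (it rearranges to $(\sqrt k-1)(c\sqrt k-1+\gamma)\ge0$), so each of the $k$ summands is at most $\frac1{k(c+1-\gamma)^2}$. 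A concave function vanishing at both endpoints of $[0,1]$ is nonnegative there, giving $g\ge0$, hence $P_\gamma\ge Q_\gamma$, strictly for $k\ge2$. Collecting the four pieces proves the theorem; the $\sqrt k$ estimate underlying the concavity claim is the one step I expect to need care, the remainder being bookkeeping.
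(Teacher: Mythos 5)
Your proposal is correct, but it proves the theorem by a genuinely different route than the paper. For latency, the paper works only with the $\mathrm{Pareto}$ expressions and shows that the ratio $\E[T_{(k,(c+1)k)}]/\E[T_{(k,c)}]$ increases monotonically in $c$ toward the limit $1$, via the Weierstrass product for $\ln\Gamma$; your coupling argument ($k$ group minima are all $\leq T_c$, hence $X_{k(c+1):k}\leq T_c$ almost surely) is distribution-free, so it covers the $\mathrm{SExp}$ case (which the paper dismisses as ``simple and omitted'') and the $\mathrm{Pareto}$ case in one stroke, and your factor-by-factor product check confirms strictness for $k\geq 2$. For cost, you and the paper arrive at the \emph{same} scalar inequality: the paper's condition $\frac{\Gamma(n+1)}{\Gamma(n-k+1)}\frac{\Gamma(n-k+1-1/\alpha)}{\Gamma(n+1-1/\alpha)} > 1+\frac{k}{\alpha n-k}$ is, at $n=k(c+1)$, exactly your $P_\gamma > Q_\gamma$ (I checked your telescoping identity $\frac{\Gamma(n)}{\Gamma(n-k)}\frac{\Gamma(n-k+1-\gamma)}{\Gamma(n+1-\gamma)}=\frac{c}{c+1}P_\gamma$ and the algebraic reduction; both are right). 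The difference is how the inequality is established: the paper argues through monotonicity and convexity properties of the Beta function $B(k,x)$, which is rather terse, whereas you verify the endpoint equalities $P_0=Q_0=1$, $P_1=Q_1=(c+1)/c$ and prove concavity of $g(\gamma)=\ln P_\gamma-\ln Q_\gamma$ on $[0,1]$ via the bound $\sum_{i=1}^k (kc+i-\gamma)^{-2}\leq k(kc+1-\gamma)^{-2}\leq (c+1-\gamma)^{-2}$, whose key step $(\sqrt{k}-1)(c\sqrt{k}-1+\gamma)\geq 0$ is elementary and correct for $k\geq 1$, $c\geq 1$, $\gamma\in[0,1]$; concavity with both endpoint values zero then gives $g\geq 0$, strictly for $k\geq 2$. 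Your approach buys a self-contained, checkable proof of the decisive inequality and a stronger (pathwise) latency statement; the paper's buys brevity. Your side remarks are also accurate: for $\mathrm{SExp}$ the two costs are exactly equal, and for $k=1$ coding and replication coincide, so the theorem's ``lower'' should be read in the weak sense in those boundary cases.
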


When task execution times are light tailed, adding redundant tasks into the job reduces its latency but increases its cost. In \cite{RepedComputing:WangJW15}, replication is demonstrated to reduce the cost and latency together when task execution times are heavy tailed. Using the exact expressions in Thm.~\ref{thm_k_cn_ET__EC}, Fig.~\ref{fig:plot_zerodelay_reped_vs_coded_k_10} illustrates that redundancy can reduce cost and latency together when the tail of task execution times is \textit{heavy enough}. Reduction in the cost and latency is greater with coding compared to replication. We elaborate at the end of this section on the tail heaviness required to achieve reduction in the cost and latency together.

Although closed form expressions are formidable to derive, second moments of the cost and latency can be exactly computed as described in Thm.~\ref{thm_k_cn_ET_2__EC_2}, which enables us to compute the standard deviation of the cost and latency. Fig.~\ref{fig:plot_zerodelay_reped_vs_coded_wstdev_k_10} plots the expected cost and latency values with error bars of width equal to the standard deviation in respective dimensions. Variability in the cost and latency naturally decreases with increasing levels of redundancy. Fixing the number of added redundant tasks, coding achieves less variability compared to replication.

\begin{theorem}
  Consider launching a job of $k$ tasks with redundant tasks.
  Let us denote the cost and latency as $C_c$, $T_c$ when $c$ replicas are added for each task, and as $C_n$, $T_n$ when $n-k$ coded tasks are added.
  For $X \sim \mathrm{Exp}(\mu)$ and $j \geq i$, we have
  \begin{equation*}
  \begin{split}
    \E[X_{n:i}X_{n:j}] &= \frac{1}{\mu^2}\bigl(H_{n^2} - H_{(n-i)^2} \\
    &\qquad\quad + (H_n - H_{n-i})(H_n - H_{n-j})\bigr).
  \end{split}
  \label{eqn:eq_Exp_orderstat_joint_moment}
  \end{equation*}
  as given in \cite[Pg.~73]{OrderStat:Arnold08}. Let $Y \sim \mathrm{Exp}((c+1)\mu)$.
  When task execution times are i.i.d. with $\mathrm{SExp}(s, \mu)$, second moments of the cost and latency are given as
  \begin{equation*}
  \begin{split}
	\E[T_c^2] &= \left(s + \frac{H_k}{(c+1)\mu}\right)^2 + \frac{H_{k^2}}{(c+1)^2\mu^2}, \\
    \E[C_c^2] &= \left(k(c+1)s\right)^2 + 2k(c+1)s\frac{k}{\mu} \\
    &\quad + (c+1)^2 \sum_{i,j=1}^k \E[Y_{n:i}Y_{n:j}] \\
    \E[T_n^2] &= \frac{H_{n^2} - H_{(n-k)^2}}{\mu^2} + \left(s + \frac{H_n - H_{n-k}}{\mu}\right)^2, \\
    \E[C_n^2] &= \left(n s\right)^2 + 2ns\frac{k}{\mu} + (n-k)^2 \E[X_{n:k}^2] \\
    &\quad + 2(n-k)\sum_{i=1}^k \E[X_{n:i}X_{n:k}] + \sum_{i,j=1}^k \E[X_{n:i}X_{n:j}].
  \end{split}
  \label{eqn:eq_k_cn_SExp_ET_2__EC_2}
  \end{equation*}
  
%   \vspace{2ex}
  For $X \sim \mathrm{Pareto}(s, \alpha)$, given $\alpha > \max\{2/(n-i+1), 1/(n-j+1)\}$ and $j \geq i$, we have
  \begin{equation*}
  \begin{split}
    \E[X_{n:i}X_{n:j}] = &s^2\frac{n!}{\Gamma(n+1-2/\alpha)} \\
    &\times \frac{\Gamma(n-i+1-2/\alpha)}{\Gamma(n-i+1-1/\alpha)} \frac{\Gamma(n-j+1-2/\alpha)}{\Gamma(n-j+1)}.
  \end{split}
  \label{eqn:eq_Pareto_orderstat_joint_moment}
  \end{equation*}
  as given in \cite[Pg.~62]{Pareto:Arnold15}. Let $Y \sim \mathrm{Pareto}(s, (c+1)\alpha)$.
  When task execution times are distributed as $\mathrm{Pareto}(s, \alpha)$, second moments of the cost and latency are given as
  \begin{equation*}
  \begin{split}
	\E[T_c^2] &= \E[Y_{k:k}^2], \\
    \E[C_c^2] &= (c+1)^2 \sum_{i,j=1}^k \E[Y_{k:i}Y_{k:j}], \\
    \E[T_n^2] &= \E[X_{n:k}^2] \\
    \E[C_n^2] &= (n-k)^2 \E[X_{n:k}^2] + 2(n-k)\sum_{i=1}^k \E[X_{n:i}X_{n:k}] \\
    &\quad + \sum_{i,j=1}^k \E[X_{n:i}X_{n:j}].
  \end{split}
  \label{eqn:eq_k_cn_Pareto_ET_2__EC_2}
  \end{equation*}
  \label{thm_k_cn_ET_2__EC_2}
\end{theorem}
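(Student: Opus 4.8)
\textit{Proof plan.} The plan is to reduce each zero-delay redundancy scheme to an explicit functional of order statistics of i.i.d.\ samples, square that functional, and then evaluate the resulting sums term by term using the quoted joint-moment formulas. First I would dispatch the two ``primitives.'' With zero-delay replication, task $i$ is done as soon as the fastest of its $c+1$ i.i.d.\ copies completes, so its effective completion time is the minimum of $c+1$ i.i.d.\ execution times; by the product rule for tails this minimum is $\mathrm{Exp}((c+1)\mu)$ when $X\sim\mathrm{Exp}(\mu)$ and $\mathrm{Pareto}(s,(c+1)\alpha)$ when $X\sim\mathrm{Pareto}(s,\alpha)$, which is exactly the $Y$ in the statement. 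Writing $Y_1,\dots,Y_k$ for the $k$ independent copies of $Y$, the job finishes at $T_c=Y_{k:k}$, and since with cancellation a task group's $c+1$ copies are all killed the instant one of them finishes (each having then run for the same minimum time), the cost is $C_c=(c+1)\sum_{i=1}^k Y_i=(c+1)\sum_{i=1}^k Y_{k:i}$. With zero-delay MDS coding the job finishes when the fastest $k$ of the $n$ i.i.d.\ tasks complete, so $T_n=X_{n:k}$; the $k$ contributing tasks run for $X_{n:1},\dots,X_{n:k}$ while the other $n-k$ are cancelled at time $X_{n:k}$, giving $C_n=\sum_{i=1}^k X_{n:i}+(n-k)X_{n:k}$. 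In the $\mathrm{SExp}$ case I would then peel off the deterministic shift, writing $X_{n:i}=s+Z_{n:i}$ with $Z\sim\mathrm{Exp}(\mu)$ (and likewise $Y_i=s+Y'_i$ with $Y'\sim\mathrm{Exp}((c+1)\mu)$), so the random parts are exponential order statistics; in the $\mathrm{Pareto}$ case the minimum value $s$ is built into the distribution, so no peeling is needed and the relevant order statistics are those of $\mathrm{Pareto}(s,\cdot)$ directly.

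Next I would square these expressions and take expectations by linearity. The latencies are immediate: $\E[T_c^2]=s^2+2s\,\E[Y'_{k:k}]+\E[(Y'_{k:k})^2]$ and $\E[T_n^2]=s^2+2s\,\E[Z_{n:k}]+\E[Z_{n:k}^2]$, where the first moments are taken from Thm.~\ref{thm_k_cn_ET__EC} and the second moments $\E[(Y'_{k:k})^2]$, $\E[Z_{n:k}^2]$ come from the quoted $\mathrm{Exp}$ joint-moment formula with $i=j$; in the $\mathrm{Pareto}$ case $\E[T_c^2]=\E[(Y_{k:k})^2]$ and $\E[T_n^2]=\E[(X_{n:k})^2]$ are read off directly. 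For the costs, expanding $C_n=ns+W$ with $W=\sum_{i=1}^k Z_{n:i}+(n-k)Z_{n:k}$ gives $\E[C_n^2]=(ns)^2+2ns\,\E[W]+\E[W^2]$, where $\E[W]=\E[C_n]-ns=k/\mu$ is supplied by Thm.~\ref{thm_k_cn_ET__EC} and $\E[W^2]$ splits into $\sum_{i,j}\E[Z_{n:i}Z_{n:j}]+2(n-k)\sum_i\E[Z_{n:i}Z_{n:k}]+(n-k)^2\E[Z_{n:k}^2]$; the parallel expansion of $C_c=k(c+1)s+(c+1)\sum_i Y'_i$ uses $(c+1)\,\E[\sum_i Y'_i]=k/\mu$ for the cross term and leaves $(c+1)^2\sum_{i,j}\E[Y'_{k:i}Y'_{k:j}]$ (the order-statistic labelling here is immaterial since $\sum_{i,j}\E[Y'_{k:i}Y'_{k:j}]=\E[(\sum_i Y'_i)^2]$, but it lets us quote the closed form). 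Each remaining sum is then computed entry by entry from the Arnold joint-moment formulas for $\mathrm{Exp}$ and $\mathrm{Pareto}$ order statistics, using the symmetry $\E[X_{n:i}X_{n:j}]=\E[X_{n:j}X_{n:i}]$ to cover the terms with $i>j$; in the $\mathrm{Pareto}$ case one notes that the stated constraint $\alpha>\max\{2/(n-i+1),1/(n-j+1)\}$, and its $c$-replica analogue with $(c+1)\alpha$, is exactly what makes these second moments finite.

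The computation is essentially bookkeeping, with no hard analytic step. The two places that need care are: getting the cancellation accounting right, so that the coded cost keeps the extra $(n-k)X_{n:k}$ contribution of the tasks killed at job completion while the replicated cost collapses to $(c+1)$ times a sum of i.i.d.\ ``boosted'' service times; and, in the $\mathrm{SExp}$ case, tracking the deterministic shift $s$ through the square---it is precisely what produces the $(ns)^2$, $2ns\,k/\mu$, $(k(c+1)s)^2$, $2k(c+1)s\,k/\mu$ terms---while reusing the first moments from Thm.~\ref{thm_k_cn_ET__EC} instead of re-deriving $\E\!\left[\sum_i X_{n:i}\right]$. With these observations the four $\mathrm{SExp}$ identities and the four $\mathrm{Pareto}$ identities follow directly.
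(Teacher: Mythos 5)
Your proposal is correct and takes essentially the same route as the paper: the paper's own (sketch) proof likewise starts from the formulations $T_c=(X_{c+1:1})_{k:k}$, $C_c=(c+1)\sum_{i=1}^k (X_{c+1:1})_i$, $T_n=X_{n:k}$, $C_n=\sum_{i=1}^k X_{n:i}+(n-k)X_{n:k}$ established for Thm.~\ref{thm_k_cn_ET__EC}, squares them, peels off the deterministic shift in the $\mathrm{SExp}$ case, and evaluates the resulting sums with the quoted joint moments of exponential and Pareto order statistics. Your bookkeeping (including reusing the first moments for the cross terms and noting that relabelling the replicated sum as order statistics is harmless) reproduces the stated expressions exactly.
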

\begin{IEEEproof}[Proof Sketch]
  Derivations follow from the cost and latency formulation given in the proof of Thm.~\ref{thm_k_cn_ET__EC}.
\end{IEEEproof}

When task execution times are heavy tailed, it is possible to reduce latency by adding redundant tasks and still pay for the baseline cost of executing the job with no redundancy (cf. Fig.~\ref{fig:plot_zerodelay_reped_vs_coded_k_10}). We refer to this as \textit{latency reduction at no cost}.

\begin{figure}[t]
  \centering
  \includegraphics[width=0.36\textwidth, keepaspectratio=true]{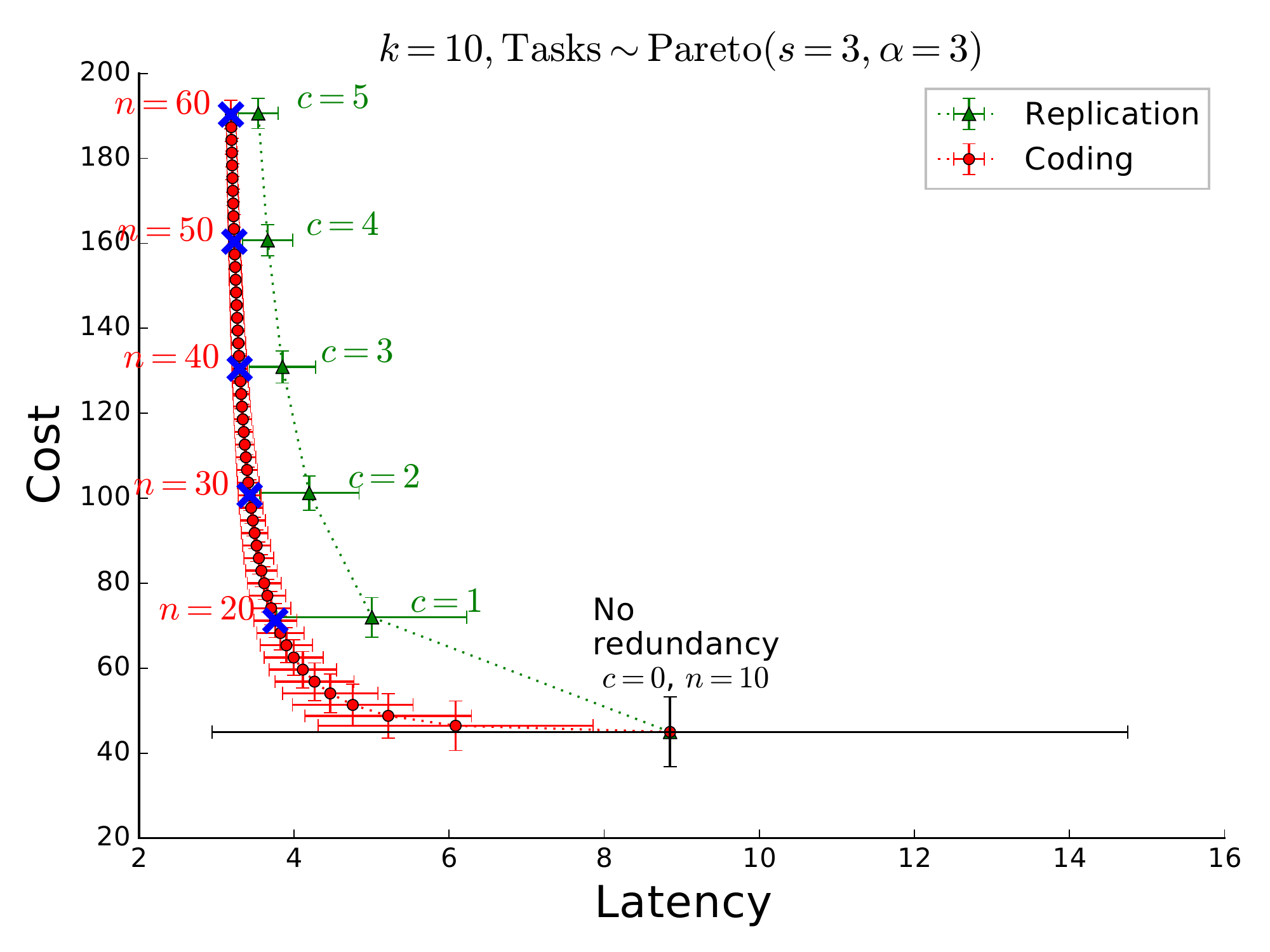}
  \caption{Cost vs.\ latency for zero-delay redundancy systems. The width of horizontal error bars is equal to the standard deviation of latency and the width of vertical bars is equal to the standard deviation of cost.}
  \label{fig:plot_zerodelay_reped_vs_coded_wstdev_k_10}
\end{figure}

\begin{corollary}
  Suppose task execution times are i.i.d. with $\mathrm{Pareto}(s, \alpha)$.
  Launching a job of $k$ tasks by adding $c$ replicas for each task can reduce its latency up to a minimum value $\E[T_{\min}]$ without incurring any additional cost if and only if $\alpha < 1.5$, and for $c_{\max} = \max\Set{\floor*{1/(\alpha-1)}-1, 0}$, we have
%   Adding replicated redundancy into the job execution 
  \begin{equation}
    \E[T_{\min}] = s k!\frac{\Gamma\left(1-1/\left(\alpha(c_{\max}+1)\right)\right)}{\Gamma\left(k+1-1/\left(\alpha(c_{\max}+1)\right)\right)}.
  \label{eqn:eq_k_c_Pareto_reduc_in_ET_for_base_EC}
  \end{equation}
  
%   When job is launched with $n-k$ coded tasks, a sufficient condition for latency reduction at no additional cost is given as
  A sufficient condition to reduce latency with no additional cost by adding $n-k$ coded tasks is given as
  \begin{equation}
    \alpha^{\alpha} \leq \frac{n}{n-k+1},
  \label{eqn:eq_k_n_suffcond_on_a}
  \end{equation}
  a necessary condition is given as
  \begin{equation}
    \alpha^{\alpha} \leq \frac{n+1}{n-k},
  \label{eqn:eq_k_n_neccond_on_a}
  \end{equation}
  the minimum latency at no additional cost is given as
  \begin{equation}
  \begin{split}
    \E[T_{\min}] = f(n_{\max}).
  \end{split}
  \label{eqn:eq_k_n_Pareto_reduc_in_ET_for_base_EC}
  \end{equation}
  such that
  \begin{equation*}
  \begin{split}
    & f(n) = s\frac{n!}{(n-k)!}\frac{\Gamma(n-k+1-1/\alpha)}{\Gamma(n+1-1/\alpha)}, \\
    & n_{\max} = \max\Set{n~|~f(n) - \frac{f(k)}{(n-k)} - \alpha \leq 0},
  \end{split}
  \end{equation*}
  or it is bounded as follows
  \begin{equation}
    \E[T_{\min}] < s\left(\alpha + k!\frac{\Gamma(1-1/\alpha)}{\Gamma(k+1-1/\alpha)}\right).
  \label{eqn:eq_k_n_Pareto_reduc_in_ET_for_base_EC__ineq}
  \end{equation}
\label{cor_k_cn_Pareto_reduc_in_ET_for_baseline_EC}
\end{corollary}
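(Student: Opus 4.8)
The plan is to read everything off the closed-form $\mathrm{Pareto}$ expressions in Theorem~\ref{thm_k_cn_ET__EC}, taking the no-redundancy execution ($c=0$, equivalently $n=k$) as the baseline: there $\E[T_0]=s\,k!\,\Gamma(1-1/\alpha)/\Gamma(k+1-1/\alpha)$ and $\E[C_0]=sk\alpha/(\alpha-1)$, both finite since $\alpha>1$. ``Latency reduction at no additional cost'' means $\E[T]<\E[T_0]$ together with $\E[C]\le\E[C_0]$. The latency inequality is automatic: $\E[Y_{k:k}]$ for $Y\sim\mathrm{Pareto}(s,(c+1)\alpha)$ is strictly decreasing in $c$ and $\E[X_{n:k}]$ is strictly decreasing in $n$, so any added redundancy strictly lowers $\E[T]$. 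Thus the problem reduces to pushing the redundancy level as far as the constraint $\E[C]\le\E[C_0]$ allows and evaluating the latency there.

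For replication, I would substitute $\E[C_c]=sk(c+1)\alpha/\left(\alpha-1/(c+1)\right)$ into $\E[C_c]\le\E[C_0]$ and cancel the common positive factor $sk\alpha$. Writing $m=c+1$, this becomes $m^2/(m\alpha-1)\le1/(\alpha-1)$; clearing the (positive) denominators gives $m\alpha(m-1)\le(m-1)(m+1)$, so for $m>1$ it is simply $\alpha\le1+1/m$, i.e.\ $c+1\le1/(\alpha-1)$. Hence the largest admissible replication level is $c_{\max}=\max\{\lfloor 1/(\alpha-1)\rfloor-1,\,0\}$, and $c_{\max}\ge1$ — a genuine latency reduction exists — exactly when $1/(\alpha-1)\ge2$, i.e.\ $\alpha\le1.5$ (with $\alpha=1.5$ the borderline case, where one replica leaves the cost exactly at the baseline). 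Plugging $c=c_{\max}$ into the latency formula of Theorem~\ref{thm_k_cn_ET__EC} gives \eqref{eqn:eq_k_c_Pareto_reduc_in_ET_for_base_EC}.

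For coding, I would substitute $\E[C_n]=s\frac{n}{\alpha-1}\left(\alpha-\frac{\Gamma(n)}{\Gamma(n-k)}\frac{\Gamma(n-k+1-1/\alpha)}{\Gamma(n+1-1/\alpha)}\right)$ into $\E[C_n]\le\E[C_0]$. Using the identity $n\,\Gamma(n)/\Gamma(n-k)=(n-k)\,n!/(n-k)!$, the factor $n-k$ cancels and the cost constraint collapses to $f(n)\ge s\alpha$, where $f(n)=\E[T_n]=s\,\frac{n!}{(n-k)!}\frac{\Gamma(n-k+1-1/\alpha)}{\Gamma(n+1-1/\alpha)}=s\prod_{j=n-k+1}^{n}\frac{j}{j-1/\alpha}$. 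Since $f$ is strictly decreasing in $n$, the admissible levels form an interval $\{k,\dots,n_{\max}\}$, and $\E[T_{\min}]=f(n_{\max})$, which is \eqref{eqn:eq_k_n_Pareto_reduc_in_ET_for_base_EC}. To turn $f(n)\ge s\alpha$ into explicit thresholds I would take logarithms, write $\ln\frac{j}{j-1/\alpha}=\int_{j-1/\alpha}^{j}\frac{1}{x}\dx{x}$, bound each integral by the width $1/\alpha$ times the value of $1/x$ at an endpoint, and telescope the resulting harmonic-type sums; this sandwiches $\ln(f(n)/s)$ between $\frac1\alpha\ln\big(n/(n-k+1)\big)$ and $\frac1\alpha\ln\big((n+1)/(n-k)\big)$. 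The lower side gives $f(n)/s\ge\big(n/(n-k+1)\big)^{1/\alpha}$, so $\alpha^\alpha\le n/(n-k+1)$ is sufficient, matching \eqref{eqn:eq_k_n_suffcond_on_a}; the upper side gives $f(n)/s\le\big((n+1)/(n-k)\big)^{1/\alpha}$, so $f(n)\ge s\alpha$ forces $\alpha^\alpha\le(n+1)/(n-k)$, the necessary condition \eqref{eqn:eq_k_n_neccond_on_a}. For the bound \eqref{eqn:eq_k_n_Pareto_reduc_in_ET_for_base_EC__ineq}, I would compute the one-step drop $f(n)-f(n+1)=f(n)\cdot\frac{k/\alpha}{(n-k+1)(n+1-1/\alpha)}$, note it is decreasing in $n$ so $f(n_{\max})-f(n_{\max}+1)\le f(k)-f(k+1)<f(k)$, and combine with $f(n_{\max}+1)<s\alpha$ (maximality of $n_{\max}$) to get $\E[T_{\min}]=f(n_{\max})<s\alpha+f(k)$, which is the displayed inequality.

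The replication part is routine algebra; the main obstacle is the coding part. There, the $\Gamma$-ratio / harmonic-sum estimates must be made simultaneously valid and tight enough to land exactly on the clean $\alpha^\alpha$ thresholds, the integrality of $n$ at the boundary (where $n_{\max}$ lives) must be handled carefully, and one must keep track that every Pareto moment used — $\E[X]$, $\E[X_{n:i}]$, $\E[Y_{k:k}]$ — is finite on the relevant range ($\alpha>1$ throughout, plus $\alpha>1/(n-k+1)$-type conditions in the coded case), so that all quantities appearing in the argument are well defined wherever $c_{\max}\ge1$ or $n_{\max}>k$.
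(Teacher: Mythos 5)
Your core reductions are the same ones the paper uses: for replication you compare $\E[C_c]$ to $\E[C_0]$ and solve the resulting quadratic-in-$(c+1)$ inequality to get $c\le 1/(\alpha-1)-1$, hence $c_{\max}$ and \eqref{eqn:eq_k_c_Pareto_reduc_in_ET_for_base_EC} (the paper uses strict inequalities, which is why it states $\alpha<1.5$; your remark that $\alpha=1.5$ is the borderline where one replica exactly matches the baseline cost is consistent with its own $\lfloor\cdot\rfloor$ formula). For coding, your observation that the cost constraint collapses to $f(n)\ge s\alpha$ is exactly the equivalence the paper itself uses to derive the $\alpha^\alpha$ thresholds ($\E[C_n]\le\E[C_k]\iff\alpha\le f(n)/s$), and your sandwich of $f(n)/s$ between $\bigl(n/(n-k+1)\bigr)^{1/\alpha}$ and $\bigl((n+1)/(n-k)\bigr)^{1/\alpha}$ is precisely Gautschi's inequality, which the paper just cites; if you go the elementary log-integral route, be aware that naive endpoint bounds on $\sum_j\ln\frac{j}{j-1/\alpha}$ do not immediately telescope to the $(n+1)/(n-k)$ constant, so citing Gautschi (or Wendel) is the cleaner way to land exactly on \eqref{eqn:eq_k_n_suffcond_on_a}--\eqref{eqn:eq_k_n_neccond_on_a}.

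Where you genuinely diverge is the treatment of $n_{\max}$ and the bound \eqref{eqn:eq_k_n_Pareto_reduc_in_ET_for_base_EC__ineq}. The paper writes $\E[C_n]$ as an affine function of $\E[T_n]$, asserts $\E[C_{n_{\max}}]<\E[C_{n=k}]\iff\E[T_{n_{\max}}]<s\alpha+\E[T_{n=k}]/(n_{\max}-k)$ (this is what the corollary's definition of $n_{\max}$ encodes), and obtains the bound by setting $n_{\max}=k+1$ there; you instead define $n_{\max}=\max\{n:f(n)\ge s\alpha\}$, use monotonicity of $f$ to get $\E[T_{\min}]=f(n_{\max})$, and derive the bound from the one-step drop $f(n)-f(n+1)=f(n)\frac{k/\alpha}{(n-k+1)(n+1-1/\alpha)}<f(k)$ combined with $f(n_{\max}+1)<s\alpha$. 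Your route is valid and arguably cleaner: the same affine relation actually gives $\E[C_n]<\E[C_k]\iff\E[T_n]>s\alpha$ with no extra $\E[T_k]/(n-k)$ term, so the paper's intermediate equivalence (and hence the literal definition of $n_{\max}$ in the statement, whose defining set $\{n: f(n)-f(k)/(n-k)-\alpha\le 0\}$ is not even bounded above) is the weaker link, while your characterization matches the criterion the paper itself relies on elsewhere in the proof. The only caveat is that, strictly speaking, you prove \eqref{eqn:eq_k_n_Pareto_reduc_in_ET_for_base_EC} with your $n_{\max}$, not with the corollary's stated one; you should say so explicitly rather than silently substituting definitions.
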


Fig.~\ref{plot_k_cn_pareto_reduc_in_ET_for_baseline_EC} plots the maximum relative latency reduction at no cost in executing the same job under varying degree of tail heaviness in task execution times. Maximum relative latency reduction at no cost is defined as $\left(\E[T_0]-\E[T_{\min}]\right)/\E[T_0]$ where $\E[T_{\min}]$ is the minimum possible latency at no cost, and $\E[T_0]$ is the baseline latency of executing the job with no redundancy.
As stated in Cor.~\ref{cor_k_cn_Pareto_reduc_in_ET_for_baseline_EC}, when the employed redundancy is replication, latency reduction at no cost is possible only when the tail index of task execution times is less than $1.5$, that is, only when the tail of task execution times is quite heavy. Employing coded redundancy relaxes this requirement on the tail heaviness, as also shown in the plot. 
% When the employed redundancy is replication, the tail heaviness requirement is independent of the number of tasks $k$ that constitute the job, while employing coded redundancy relaxes the requirement on the tail index further at larger $k$, i.e., the upper threshold on the tail index increases with $k$. In other words, when coding is used rather than replication, job executions at higher scale can achieve latency reduction at no cost even for lighter tailed task execution times.
When the employed redundancy is replication, the tail heaviness requirement is independent of the number of tasks $k$ that constitute the job, while employing coded redundancy relaxes the requirement on the tail index further at larger $k$, i.e., the upper threshold on the tail index increases with $k$.
This can be explained as follows. A task replica can only replace its original copy, while a coded task can replace any of the $k$ initial tasks. Thus, coded tasks can mitigate stragglers more effectively when the job is executed at higher scale (larger $k$), while the effectiveness of task replicas is not associated with the scale of execution. Consequently for jobs that run at higher scale, coding can reduce latency at no cost even under lighter tailed task execution times, while the scale of execution does not change the tail heaviness requirement for replication.
% In other words, when coding is used rather than replication, job executions at higher scale can achieve latency reduction at no cost even for lighter tailed task execution times.
\begin{figure}[ht]
  \centering
  \includegraphics[width=0.4\textwidth, keepaspectratio=true]{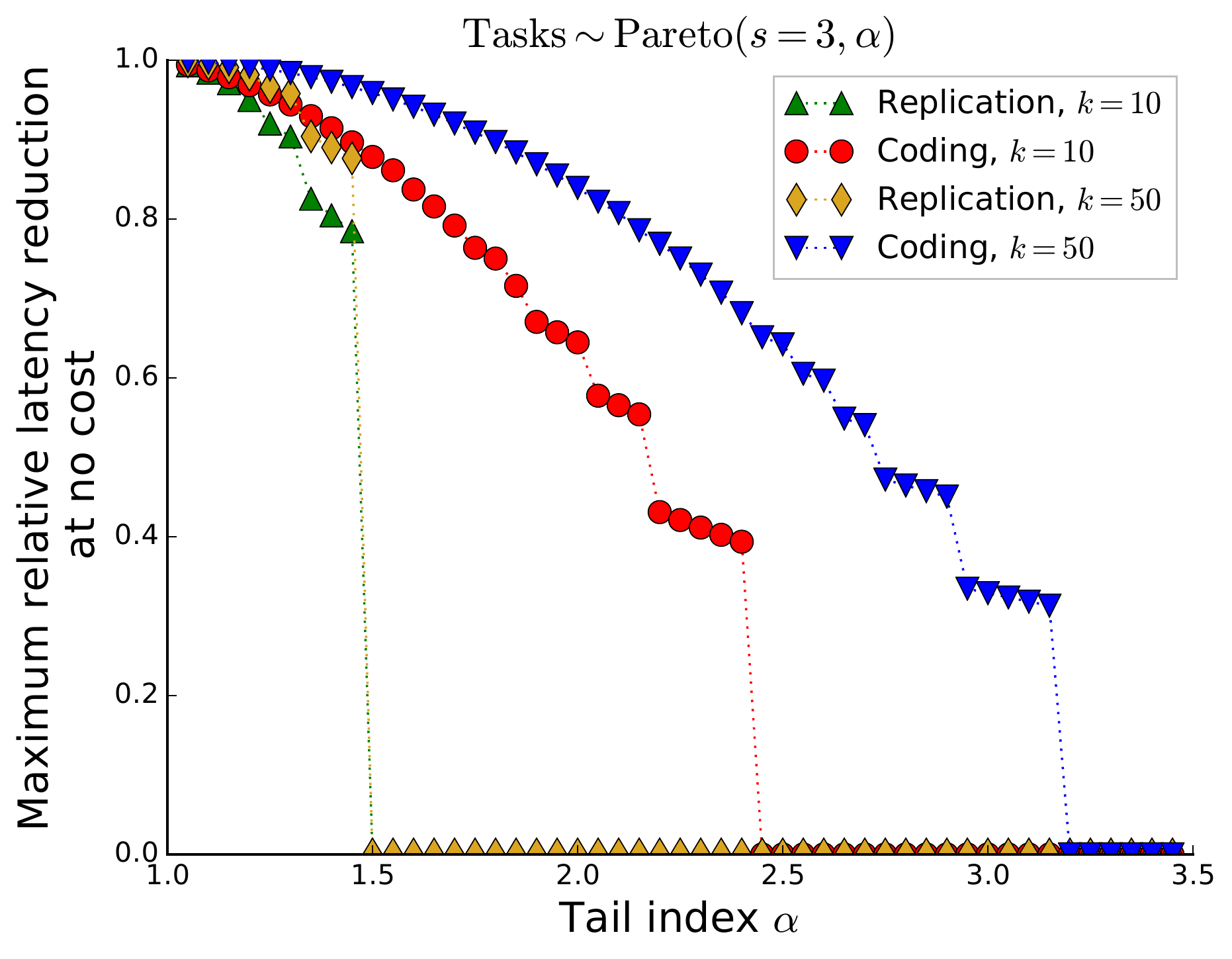}
  \caption{Maximum relative latency reduction at no cost by employing replicated or coded redundancy vs. the tail of task execution times.}
%   Any latency reduction at no cost is possible for replicated redundancy only if the tail index is below $1.5$ while this requirement is relaxed when coded redundancy is employed.
  \label{plot_k_cn_pareto_reduc_in_ET_for_baseline_EC}
\end{figure}

\begin{figure*}[t]
  \centering
  \begin{subfigure}[]{.32\textwidth}
    \centering
    \includegraphics[width=1\textwidth, keepaspectratio=true]{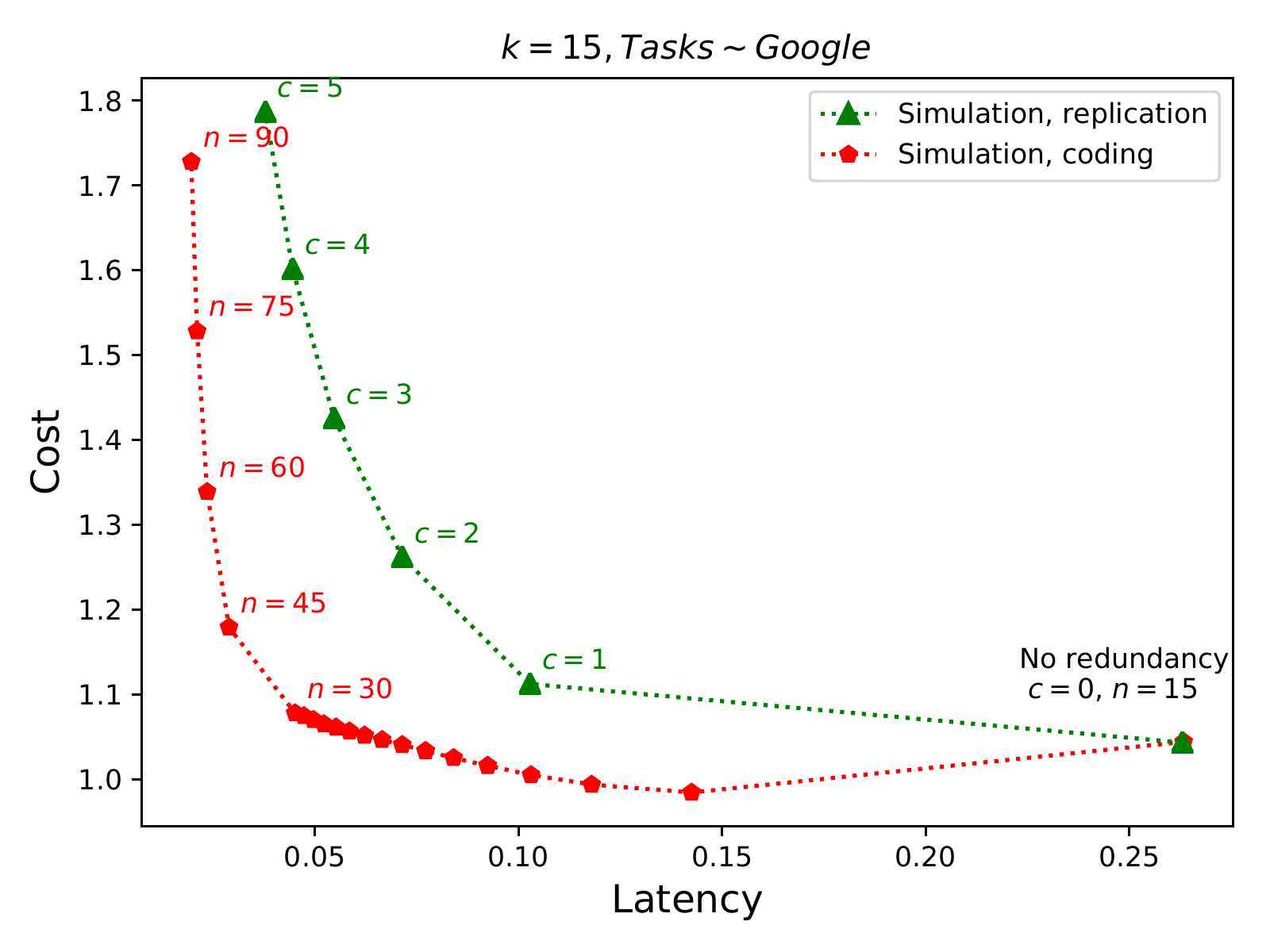}
  \end{subfigure}
  \begin{subfigure}[]{.32\textwidth}
    \centering
    \includegraphics[width=1\textwidth, keepaspectratio=true]{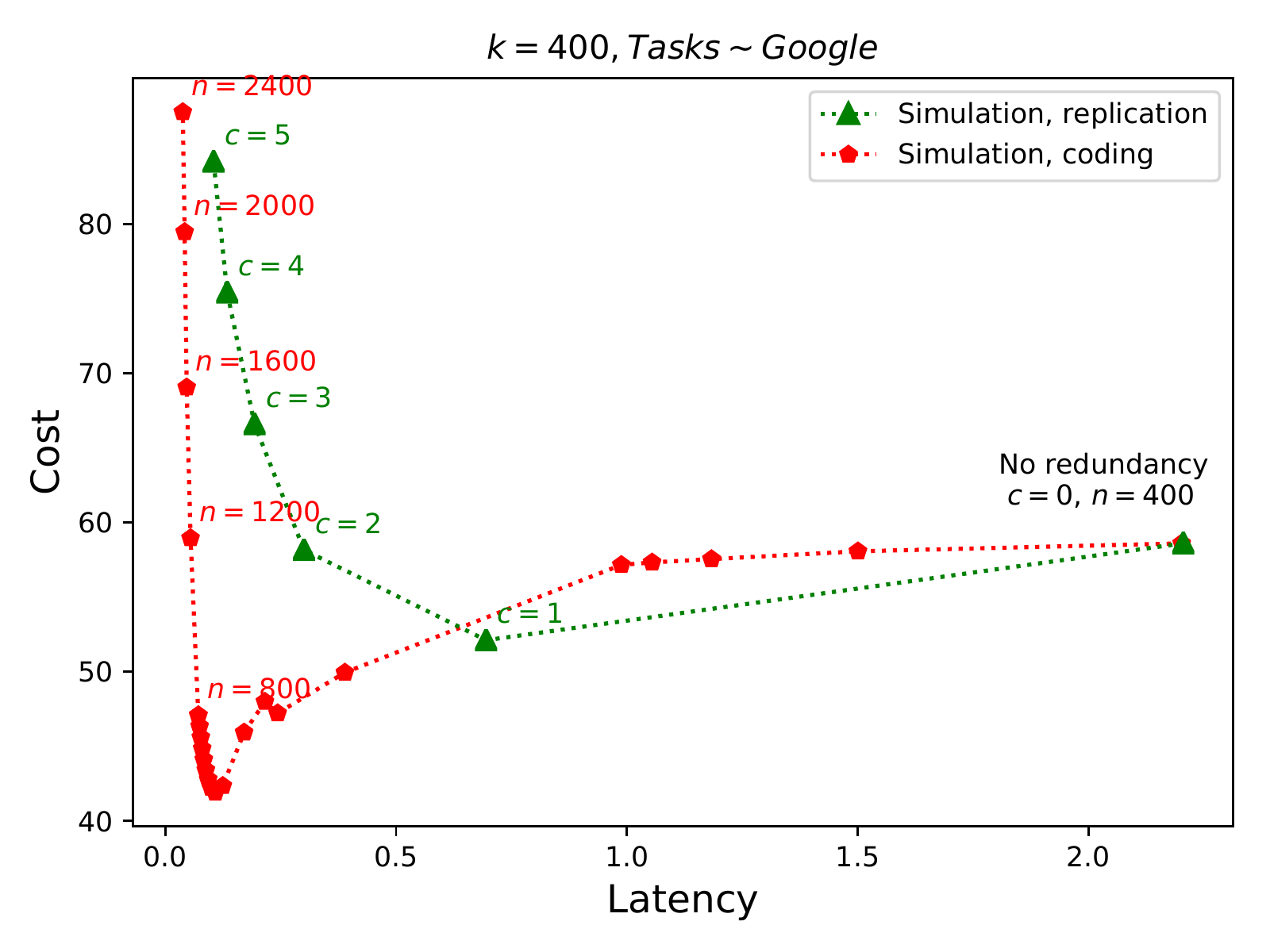}
  \end{subfigure}
  \begin{subfigure}[]{.32\textwidth}
    \centering
    \includegraphics[width=1\textwidth, keepaspectratio=true]{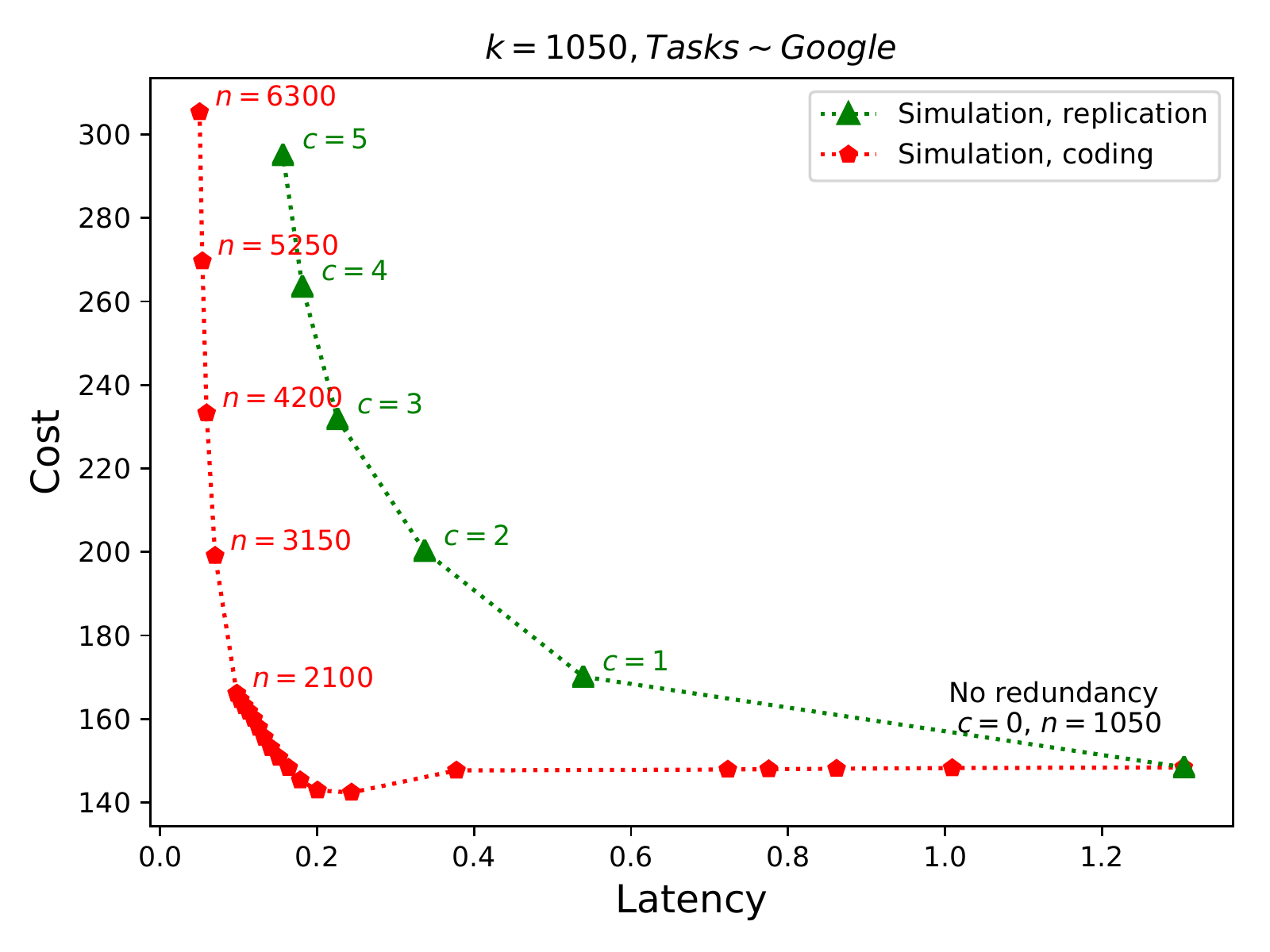}
  \end{subfigure}
  \caption{Simulated cost vs.\ latency curves for executing jobs with $k=15, 400, 1050$ tasks by employing zero-delay replicated or coded redundancy. Task execution time distributions used in the simulations are extracted from a Google Cluster Trace data \cite{GoogleTraceAnalysis:ReissTG12}.}
  \label{fig:figs/plot_reped_vs_coded_Google}
\end{figure*}

\vspace{1ex}
\noindent\textbf{Demonstration using Google cluster data.} We simulated job executions with replicated or coded redundancy by using task execution time distributions extracted from a Google Cluster data \cite{GoogleClusterTrace:ReissWH11}. 
Google released this data from a cluster running a mixed workload of short or long running MapReduce batch jobs, services and interactive queries \cite{GoogleTraceAnalysis:ReissTG12}.

Fig.~\ref{fig:figs/plot_reped_vs_coded_Google} plots the cost vs. latency curves using the three empirical distributions for jobs with $k=15, 400, 1050$ tasks that were previously illustrated in Fig.~\ref{fig:figs/plot_google_empiricaltail}.
In all three, coding is doing better than replication in the cost vs. latency tradeoff. Execution with redundancy could reduce the cost and latency together because each of these distributions pronounces heavy tail at large values (cf. Fig.~\ref{fig:figs/plot_google_empiricaltail}). In the execution of jobs with $15$ or $1050$ tasks, employing replication does not allow for latency reduction at no cost but coding does. In the execution of job with $400$ tasks, although replication seems to achieve less cost and latency at first, coding outperforms replication beyond a certain level of redundancy.

% #############################  When redundancy affects the tail  ############################# %
\section{When redundancy changes the tail}
\label{sec:sec_when_red_changes_tail}
So far we have ignored the impact of redundancy on the system.
% Adding redundancy into a job expands the number of tasks involved in executing the job.
Redundant tasks exert extra load on the system, which is likely to aggravate the existing contention in the system resources.
Given that resource contention is the primary cause of runtime variability \cite{TailAtScale:DeanB13}, the added redundant tasks are likely to increase the variability in task execution times.
Compute servers are typically shared by the tasks of jobs that simultaneously execute on the cluster \cite{Kubernetes:BurnsGO16}.
Two canonical server sharing strategies are
1) Processor sharing: tasks time-share the server according to a round-robin scheduling,
2) Queueing: tasks wait in a queue and are accepted into service one at a time.
Modern Operating Systems implement a mix of processor sharing and First-come First-served (FCFS) queueing to host multiple processes on a server, e.g., scheduling classes SCHED\_FIFO and SCHED\_RR in the Linux Kernel \cite{UnderstandingLinuxKernel:BovetC05}.
A compute server in reality hosts several shared resources (e.g., CPU, memory, I/O bus, etc.) and each with its own scheduling scheme. For simplicity, we here model servers to host only CPU.
We adopt \emph{limited processor sharing} model in which tasks are allowed to time-share the server (while being served over multiple CPU cores or threads) until a limited number of them accumulate, beyond which the remaining tasks wait in a FCFS queue.
Limited processor sharing is shown to implement robust performance (in terms of the tail of response time) for both heavy and light tailed task sizes \cite{LimitedPS:Nair10}.

In order to understand the impact of added redundancy on the system's runtime variability, we simulated a cluster of servers, each implementing a limited processor sharing queue.
Jobs of varying number of tasks and size (minimum task execution time) arrive to cluster according to a Poisson process.
Distribution of task sizes and number of tasks within real compute jobs are known to exhibit heavy tail \cite{HeavyTailedJobs:Leland86, HeavyTailedJobs:Harchol97, GoogleClusterDataAnalysis:ChenGG10, GoogleTraceAnalysis:ReissTG12}.
Therefore in our simulation:
i) Task size for each arriving job is independently sampled from a Truncated-Pareto (a canonical continuous heavy tailed) distribution with minimum value of $1$, maximum value of $10^{10}$ and tail index of $1.1$. The choice of Truncated-Pareto distribution and the values for its parameters come from the distribution of real compute task sizes presented in \cite{UnfairnessSRPT:BansalH01}.
ii) Number of tasks that constitute each arriving job is independently sampled from a Zipf (a canonical discrete heavy tailed) distribution. % with an exponent of $1$ and a maximum value of $k_{\max}$.
Each arriving job is expanded with the same rate $r > 1$; a job of $k$ tasks gets expanded into $n = \floor{rk}$ tasks by adding $\floor{rk} - k$ coded tasks, and the resulting $n$ tasks are dispatched to the $n$ servers with the least number of tasks in the cluster.
As soon as any $k$ of the $n$ tasks of a job is completed, the job completes and its remaining $n-k$ outstanding tasks (either in service or waiting in a queue) get immediately removed from the cluster.
Expanding jobs with the same rate $r$ ensures fairness by introducing redundancy in proportion to the scale $k$ at which a job is executed.

Cost and latency values for a particular type of job with a fixed number of tasks of unit size are plotted in Fig.~\ref{fig:plot_EC_vs_ET_red_affects_load} for increasing values of $r$. Each simulated server in the cluster implements limited processor sharing queue with a limit of 8 tasks.
The latency of the job is the time span between its arrival and departure to and from the system. The cost of the job is the sum of the service time of every task involved in its execution.
% Thus in the cost calculation, the time spent by tasks waiting in the queue is added together with their service time at the servers.
% Simulated curve clearly shows that redundancy beyond a level causes an increase in both cost and latency.
Simulated curve shows that redundancy initially reduces latency significantly with little change in cost, then reduces latency but increases cost, and finally beyond a level increases latency with little change in cost.
In order to evaluate the appropriateness of modeling task execution times with canonical heavy tailed distributions, we first fitted Pareto and Truncated-Pareto distributions on the task execution times sampled from the simulation, then substituted these fitted models in the analytical cost and latency expressions.
We presented cost and latency expressions for Pareto task execution times in Thm.~\ref{thm_k_cn_ET__EC}. Cost and latency are formidable to derive in closed form for Truncated-Pareto task execution times, but their computation involves a single integral which we evaluate numerically (refer to \cite[Pg.~63]{Pareto:Arnold15}).
Parameters of the Pareto (minimum value and tail index) and Truncated-Pareto (minimum and maximum values, and tail index) models are estimated using the unbiased MLE estimators that are respectively presented in \cite{ParetoEstimation:Rytgaard90} and \cite[Thm.~1]{TParetoEstimation:AbanMP06}.

The comparison given in Fig.~\ref{fig:plot_EC_vs_ET_red_affects_load} between the simulated and fitted values of cost and latency shows that modeling task execution times with Pareto distribution is fairly appropriate to study the cost vs. latency tradeoff.
This is not surprising since the asymptotic approximations of the tail of waiting times in FCFS or processor sharing queues have demonstrated that heavy tailed task sizes result in heavy tailed delay~\cite{QueueingWithHeavyTails:Zwart01, MG1AsymTail:Sakurai04, MG1HeavyTailAsymp:OlveraBG11}.
However one caveat of the model is that it cannot capture the case we observe in (the top right of) Fig.~\ref{fig:plot_EC_vs_ET_red_affects_load} in which adding more redundancy increases latency with little change in cost.
% Exact derivation of the delay distribution for systems of multiple servers that execute jobs of varying sizes and number of tasks with redundancy is very hard. 
In the remainder of this section, we study the cost vs.\ latency tradeoff by adopting a Pareto task execution time model that is dependent on the rate $r$ at which redundancy is added into all the jobs executing in the system.
Expansion of a job with task replicas at (an integer) rate of $r$ refers to launching $r-1$ replicas for each of the $k$ tasks within the job.
\begin{figure}[t]
  \centering
  \includegraphics[width=0.45\textwidth, keepaspectratio=true]{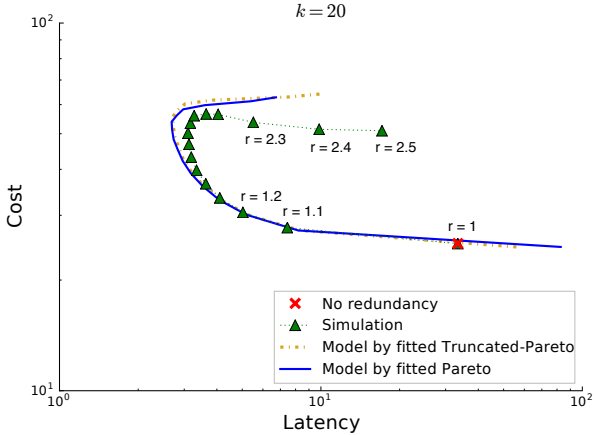}
  \caption{Cost vs. latency for a particular type of job with 20 tasks of unit size.
  Arriving jobs are expanded with coded tasks at a multiplicative factor of $r$.
  Simulated values are given for increasing values of $r$; we start with $r=1$ (No redundancy) and then increase $r$ by $0.1$ at each step.}
\label{fig:plot_EC_vs_ET_red_affects_load}
\end{figure}

Redundant tasks added into the system are expected to aggravate resource contention, and consequently increase the variability in task execution times.
Therefore, the impact of redundant load exerted on the system should be incorporated in the $\mathrm{Pareto}(s, \alpha)$ distribution that we use to model task execution times.
Under stability, an arriving job, with nonzero probability, can find the system empty and complete execution without having to share servers with any other job. Thus, we assume that minimum task execution time $s$ solely reflects the task size and is not affected by resource contention.
Then, the impact of added redundant load should be captured by the only remaining parameter, the tail index $\alpha$. Smaller $\alpha$ implies greater variability (implying greater chance and impact of resource contention), so $\alpha$ is expected to get smaller as more redundancy is added into the jobs, which is indeed what we observe in the simulations. We directly use the job expansion rate $r$ to quantify the level of added redundancy and model $\alpha$ as a function of $r$.
Note that we do not study the exact trend which describes how $\alpha$ changes with $r$, but rather try to understand the requirements on the relationship between $\alpha$ and $r$ that leads to gain or pain in the cost vs.\ latency tradeoff.
We firstly present sufficient conditions in terms of $\alpha$ and $r$ to yield a reduction or incur an increase in latency.
\begin{theorem}
  Suppose that task execution times are i.i.d. with $\mathrm{Pareto}$ with tail index $\alpha_i$ when jobs arriving to the system are expanded with redundant tasks by a multiplicative factor of $r_i > 1$. % i.e., a job of $k$ tasks is scheduled with $kr_i$ overall tasks.
  Consider increasing $r_i$ to $r_j$.
  If jobs are expanded with coded tasks, a sufficient condition to reduce the latency of a job of $k$ tasks by the change $r_i \to r_j$ is
  \begin{equation}
    \alpha_i/\alpha_j \leq \log\left(\frac{n_i}{n_i-k+1}\right)/\log\left(\frac{n_j+1}{n_j-k}\right),
  \label{eq:eq_suffcond_ETred_coding}
  \end{equation}
  a sufficient condition to incur an increase in job's latency is
  \begin{equation}
    \alpha_i/\alpha_j \geq \log\left(\frac{n_i+1}{n_i-k}\right)/\log\left(\frac{n_j}{n_j-k+1}\right),
  \label{eq:eq_suffcond_ETinc_coding}
  \end{equation}
  where $n_i = \floor{k r_i}$ and $n_j = \floor{k r_j}$.
  If jobs are expanded with task replicas, a necessary and sufficient condition for the change $r_i \to r_j$  to reduce latency is given for any job as
  \begin{equation}
    \alpha_i/\alpha_j < r_j/r_i.
  \label{eq:eq_necessandsuffcond_ETred_rep}
  \end{equation}
\label{thm_suffcond_ETgainpain}
\end{theorem}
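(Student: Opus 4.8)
The statement splits into a replication case, where an exact characterization is available, and a coding case, where two‑sided latency bounds do the work; both start from the zero‑delay Pareto expressions of Thm.~\ref{thm_k_cn_ET__EC}. For replication, writing $r=c+1$, I would first note that $\E[T_c]=s\,k!\,\Gamma(1-1/(r\alpha))/\Gamma(k+1-1/(r\alpha))=s\,g\!\left(1/(r\alpha)\right)$ with $g(x)=\prod_{i=1}^{k} i/(i-x)$ on $x\in[0,1)$, so the latency depends on $(r,\alpha)$ only through the product $r\alpha$. Each factor $i/(i-x)$, hence $g$, is strictly increasing in $x$, so $\E[T]$ is strictly decreasing in $r\alpha$; consequently the change $(r_i,\alpha_i)\to(r_j,\alpha_j)$ lowers latency if and only if $r_j\alpha_j>r_i\alpha_i$, i.e.\ $\alpha_i/\alpha_j<r_j/r_i$, which is \eqref{eq:eq_necessandsuffcond_ETred_rep}; since this is a genuine equivalence and $g$ is the same for every $k$, the condition is necessary and sufficient for any job.

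For coding the engine is the two‑sided bound
\[
  s\left(\tfrac{n}{n-k+1}\right)^{1/\alpha}\ \le\ \E[T_n]\ \le\ s\left(\tfrac{n+1}{n-k}\right)^{1/\alpha},
\]
the same estimate used in the proof of Cor.~\ref{cor_k_cn_Pareto_reduc_in_ET_for_baseline_EC}, where, combined with the cost‑neutrality condition $\E[T_n]\ge s\alpha$, it produces the $\alpha^\alpha$ conditions stated there. It is obtained by writing $\E[T_n]=s\prod_{i=n-k+1}^{n} i/(i-1/\alpha)$ and applying the elementary per‑factor inequalities $((i+1)/i)^{1/\alpha}\le i/(i-1/\alpha)\le (i/(i-1))^{1/\alpha}$ — a consequence of the concavity of $a\mapsto\log(i-a)$ — then multiplying over $i$, telescoping, and loosening the bases. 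With this in hand, to prove \eqref{eq:eq_suffcond_ETred_coding} I would bound the post‑change latency from above and the pre‑change latency from below, $\E[T_{n_j}]\le s((n_j+1)/(n_j-k))^{1/\alpha_j}$ and $\E[T_{n_i}]\ge s(n_i/(n_i-k+1))^{1/\alpha_i}$; latency then strictly drops whenever $((n_j+1)/(n_j-k))^{1/\alpha_j}\le (n_i/(n_i-k+1))^{1/\alpha_i}$, and taking logarithms (both sides are positive once $k\ge 2$) and rearranging is exactly \eqref{eq:eq_suffcond_ETred_coding}. Swapping the pairing — a lower bound on $\E[T_{n_j}]$ against an upper bound on $\E[T_{n_i}]$ — gives \eqref{eq:eq_suffcond_ETinc_coding} in the same way; finally substitute $n_i=\floor{kr_i}$ and $n_j=\floor{kr_j}$.

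The only step that is not pure monotonicity or algebra is the two‑sided bound on $\E[T_n]$, which can either be quoted from or re‑derived alongside the proof of Cor.~\ref{cor_k_cn_Pareto_reduc_in_ET_for_baseline_EC}. The point requiring real care is the bookkeeping of which bound goes where: for a reduction one must pair the \emph{upper} bound on the new latency with the \emph{lower} bound on the old one, and for an increase the reverse — mixing them makes the sufficient conditions point the wrong way. The degenerate cases $k=1$ (a code on a single data task is just replication, and the coding conditions become vacuous) and $n=\floor{kr}=k$ (no redundancy actually introduced) need only a one‑line remark, since they are subsumed by the replication part of the statement.
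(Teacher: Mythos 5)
Your proposal is correct and follows essentially the same route as the paper: the coding part rests on the same two-sided bound $s\left(\tfrac{n}{n-k+1}\right)^{1/\alpha} \leq \E[T_n] \leq s\left(\tfrac{n+1}{n-k}\right)^{1/\alpha}$ (which the paper obtains from Gautschi's inequality and you re-derive elementarily from the product form of $\E[T_n]$), paired in exactly the paper's way to get \eqref{eq:eq_suffcond_ETred_coding} and \eqref{eq:eq_suffcond_ETinc_coding}. For replication, your observation that the latency depends on $(r,\alpha)$ only through $r\alpha$ and is monotone in it is the same argument the paper phrases via monotonicity of $B(k,x)$ in $x=1-1/(r\alpha)$, so the necessary-and-sufficient condition \eqref{eq:eq_necessandsuffcond_ETred_rep} follows identically.
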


Condition \eqref{eq:eq_suffcond_ETred_coding} for the case of expanding jobs with coded tasks is sufficient to reduce latency, but it may not give tight guarantees.
It can be made easier to interpret by expressing the expansion rate $r$ as $n/k$ for a given job of $k$ tasks. Increasing the rate from $n/k$ to $(n+1)/k$, the condition \eqref{eq:eq_suffcond_ETred_coding} becomes
\[ \frac{\alpha_n}{\alpha_{n+1}} \leq \log\left(\frac{n}{n-k+1}\right)/\log\left(\frac{n+2}{n-k+1}\right) < 1. \]
This says that if the tail heaviness of task execution times (or $\alpha$) stays the same or becomes lighter as $r$ increases, increasing $r$ reduces latency for all jobs regardless of $k$.
This is not informative since we already know that latency monotonically decreases in $n$ when the tail heaviness of task execution times stays the same let alone when it gets lighter (cf.\ Thm.~\ref{thm_k_cn_ET__EC}).
% Similarly, sufficient condition \eqref{eq:eq_necessandsuffcond_ETred_rep} for incurring an increase in latency  becomes
% \[ \frac{\alpha_n}{\alpha_{n+1}} \geq \log\Bigl(\frac{n+1}{n-k}\Bigr)/\log\Bigl(\frac{n+1}{n-k+2}\Bigr). \]

Next we derive an \emph{approximate} necessary and sufficient condition to reduce latency of a particular job by increasing $r$, in the case where jobs are expanded with coded tasks. Presented approximation yields close estimates for large enough values of $r$, in particular when $r > 2$.
Approximating the quotient of Gamma functions with Sterling's approximation \cite{AsymptoticApproxOfQuotientOfGamma:Tricomi51}, latency of executing a job of $k$ tasks in a system with coded expansion rate $r = n/k$ is approximately given as
\[ \E[T_n] \approx s\left(1 + \frac{k}{n-k+1}\right)^{1/\alpha_n}, \]
which gives us the following approximation for the ratio
\[ \frac{\E[T_{n+1}]}{\E[T_n]} \approx \left(1 + \frac{k}{n-k+2}\right)^{1/\alpha_{n+1}}\left(1 + \frac{k}{n-k+1}\right)^{-1/\alpha_n}. \]
This gives us the following approximate necessary and sufficient condition on the growth of tail index to reduce the latency for jobs of $k$ tasks by increasing $r$ from $n/k$ to $(n+1)/k$,
\[ \frac{\E[T_{n+1}]}{\E[T_n]} \lesssim 1 \iff \frac{\alpha_{n+1}}{\alpha_n} \gtrsim \frac{\log(1 + k/(n-k+2))}{\log(1 + k/(n-k+1))}. \]

The condition above and the ones given in Thm.~\ref{thm_suffcond_ETgainpain} are quantitative expressions of our intuition; when the redundant load exerted on the system increases the runtime variability, it gets harder to reduce latency by 
executing jobs with more redundancy as the level of employed redundancy gets higher.
When jobs are expanded with task replicas, the condition to reduce latency with more redundancy does not depend on the number of tasks $k$ (scale) within the job; higher level of replication achieves less latency as long as the relative growth in the job expansion rate $r$ is larger than the relative reduction in the tail index $\alpha$ (i.e., relative growth in tail heaviness) of task execution times.
In Sec.~\ref{sec:sec_coding_vs_rep}, coded redundancy is shown to be more effective for jobs that run at greater scale. Similarly here when jobs are expanded with coded tasks, increased runtime variability due to redundant load can be better compensated by jobs that run at greater scale.
In addition, the threshold for redundancy to start incurring higher latency grows at a slower rate in $r$ when coded tasks are used compared to using task replicas. This is due to the fact that coded redundancy is more efficient; it yields greater reduction in latency per introduced redundant task compared to replication.

\section{Straggler Relaunch}
\label{sec:sec_straggler_relaunch}
Throughout this section, we assume task execution times are heavy tailed.
There are two properties of heavy tailed task execution times that greatly affect the distributed job execution \cite{PerfEvalWithHeavyTails:Crovella01}.
Firstly, the longer a task has taken to execute, the longer its average residual lifetime is expected to be.
Secondly, the majority of the mass in a set of sample observations drawn from a heavy tailed distribution is contributed by few samples.
This suggests that among all tasks within a job, few of them are expected to be stragglers with much longer completion time compared to the non-stragglers.

After launching a job, let us wait for a reasonably large $\Delta$ amount of time and check whether the job is completed or not. If the job is still running, we expect only a few tasks remaining which we refer to as stragglers.
Heavy tailed nature of the task execution times suggests that the tasks straggling beyond time $\Delta$ are expected to take at least $\Delta$ more to complete on average. It also suggests that if a fresh copy is launched at time $\Delta$ for each straggling task, fresh copies are likely to complete before their corresponding old copies.

In this section, we show that \textit{straggler relaunch}, that is, replacing the straggling tasks with fresh copies after waiting for some time, can yield significant reduction in cost and latency when the task execution times are \textit{heavy tailed enough}. We investigate the level of tail heaviness required for straggler relaunch to be effective.
The selection of the tasks to be relaunched is decided by the time $\Delta$ we wait before relaunching the remaining tasks.
Untimely relaunch might be either late and cause delayed cancellation of the stragglers, or might be early and cause killing the non-straggler tasks as well. We find an approximation for the optimal time to perform straggler relaunch, which turns out to have a simple and insightful form.
Lastly, we consider  performing straggler relaunch jointly with adding redundant tasks into the job execution.

% \vspace{1ex}
% \noindent
% \textbf{No redundancy with relaunch:}
Exact expressions for the cost and latency of job execution with straggler relaunch are given in Thm.~\ref{thm_k_wrelaunch_T_C}. Note that we assume relaunching tasks takes place instantly and does not incur any additional delay.
Performing straggler relaunch before the minimum task completion time $s$ causes meaningless work loss and further delays the job completion, while performing straggler relaunch at the right time significantly reduces the latency. The cost is a direct function of the latency in the absence of redundant tasks, hence reduced latency implies reduced cost as well (as illustrated in Fig.~\ref{fig:fig_k_wrelaunch_ET__EC}).
% Notice that relaunching all tasks at the beginning ($\Delta=0$) or not relaunching at all ($\Delta \rightarrow \infty$) implement the same behavior and gives the same cost and latency values.

\begin{theorem}
  Suppose task execution times are i.i.d. with $\mathrm{Pareto}(s, \alpha)$.
  Consider executing a job of $k$ tasks by relaunching all the remaining tasks after waiting some time $\Delta$.
  Then, the distribution of job completion time is given as
  \begin{longaligned}[\label{eqn:eq_k_wrelaunch_tail}]
    \Pr\{T &> t\} = 1 - \left(\mathbbm{1}(t > s)\left(1 - (s/t)^{\alpha}\right)\right)^k \longalignedtag \\
      &+ \left(q + \mathbbm{1}(t > \Delta)\left(1 - (\Delta/t)^{\alpha}\right)(1-q)\right)^k \\
      &+ \left(q + \mathbbm{1}(t > \Delta+s)\left(1 -\left(s/(t-\Delta)\right)^{\alpha}\right)(1-q)\right)^k.
  \end{longaligned}
  
  Latency is given as
  \begin{equation}
    \E[T] = 
    \begin{cases}
      \Delta + L & \Delta \leq s, \\
    %   \begin{split}
    %     & \Delta(1-q^k) + L\bigl((s/\Delta-1) \\
    %     &\qquad \times I(1-q; 1-1/\alpha, k) + 1\bigr)
    %     % & L\left(\left(\frac{s}{\Delta}-1\right)I(1-q; 1-1/\alpha, k) + 1\right)
    %   \end{split} & o.w.
      \begin{aligned}
        & \Delta(1-q^k) + L\bigl((s/\Delta-1) \\
        &\qquad \times I(1-q; 1-1/\alpha, k) + 1\bigr)
      \end{aligned} & o.w.
    \end{cases}
  \label{eqn:eq_k_wrelaunch_tail__ET}
  \end{equation}
  
  Cost with ($C^c$) or without ($C$) task cancellation is given as
%   \begin{equation}
%   \begin{split}
%     & \E[C^c] = 
%     \begin{cases}
%       k\Delta + \frac{1}{\alpha-1}\left(ks\alpha - L\right) + k(1-q)\Delta & \Delta \leq s, \\
%       \frac{\alpha}{\alpha-1}\left(k(1-q)(s-\Delta) + ks\right) & o.w.
%     \end{cases} \\
%     & \E[C] = 
%     \begin{cases}
%       k\Delta + ks\frac{\alpha}{\alpha-1} & \Delta \leq s, \\
%       \frac{\alpha}{\alpha-1}\left(ks(2-q)\right) - \frac{k\Delta(1-q)}{\alpha-1} & o.w.
%     \end{cases}
%   \end{split}
%   \label{eqn:eq_k_wrelaunch_EC}
%   \end{equation}
  \begin{longaligned}[\label{eqn:eq_k_wrelaunch_EC}]
    &\E[C^c] = 
      \begin{cases}
        k\Delta + \frac{1}{\alpha-1}\left(ks\alpha - L\right) + k(1-q)\Delta & \Delta \leq s, \\
        \frac{\alpha}{\alpha-1}\left(k(1-q)(s-\Delta) + ks\right) & o.w.
      \end{cases} \\
    &\E[C] = 
      \begin{cases}
        k\Delta + ks\frac{\alpha}{\alpha-1} & \Delta \leq s, \longalignedtag \\
        \frac{\alpha}{\alpha-1}\left(ks(2-q)\right) - \frac{k\Delta(1-q)}{\alpha-1} & o.w.
      \end{cases}
  \end{longaligned}
  where $q = \mathbbm{1}(\Delta > s)\left(1 - (s/\Delta)^{\alpha}\right)$, and $L = s k!\Gamma(1-1/\alpha)/\Gamma(k+1-1/\alpha)$ is the baseline latency of executing the job without straggler relaunch.
\label{thm_k_wrelaunch_T_C}
\end{theorem}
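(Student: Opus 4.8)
The plan is to collapse the analysis to a single task and then recombine the $k$ i.i.d.\ tasks with a max / order-statistic argument. Fix one task with original execution time $X\sim\mathrm{Pareto}(s,\alpha)$: if $X\le\Delta$ it finishes at $X$; otherwise its running copy is cancelled at $\Delta$ and a fresh copy, with execution time $X'\sim\mathrm{Pareto}(s,\alpha)$ independent of everything, is launched, so the task finishes at $\Delta+X'$. Writing $\tilde{T}$ for this effective completion time, the $\tilde{T}_i$, $i=1,\dots,k$, are i.i.d., and since the job carries no redundancy it finishes when its last task does, so $T=\max_i\tilde{T}_i$ and $\Pr\{T\le t\}=\Pr\{\tilde{T}\le t\}^k$. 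By the law of total probability $\Pr\{\tilde{T}\le t\}=\Pr\{X\le\min(t,\Delta)\}+\Pr\{X>\Delta\}\Pr\{X'\le t-\Delta\}$, which with $q=\Pr\{X\le\Delta\}=\mathbbm{1}(\Delta>s)(1-(s/\Delta)^{\alpha})$ equals $\Pr\{X\le t\}$ for $t\le\Delta$ and $q+(1-q)\Pr\{X\le t-\Delta\}$ for $t>\Delta$; substituting the Pareto distribution function splits the second range again at $t=\Delta+s$, and re-expressing the three resulting ranges with indicator functions then yields the displayed tail. The identity that makes the bookkeeping close is $q+(1-q)(1-(\Delta/t)^{\alpha})=1-(s/t)^{\alpha}$ for $t>\Delta$ (valid once $\Delta\ge s$), which is why the original-copy contribution drops out for $t>\Delta$ and leaves only the fresh-copy term; when $\Delta\le s$ the job is simply a fresh job of $k$ tasks launched at time $\Delta$, so $\E[T]=\Delta+L$ at once, $L$ being the $c=0$ Pareto latency of Thm.~\ref{thm_k_cn_ET__EC}.

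For the latency with $\Delta>s$ I would integrate $\int_0^\infty\Pr\{T>t\}\,\dx{t}$ after splitting it at $t=\Delta$. On $[0,\Delta]$ only original copies are alive, contributing $s+\int_s^\Delta(1-(1-(s/t)^{\alpha})^k)\,\dx{t}$, and on $(\Delta,\infty)$ the substitution $u=t-\Delta$ gives $s(1-q^k)+\int_s^\infty(1-(1-(1-q)(s/u)^{\alpha})^k)\,\dx{u}$. Each leftover integral is of the form $\int(1-(1-v)^k)v^{-1/\alpha-1}\,\dx{v}$ after $v=(s/t)^{\alpha}$ (respectively $v=(1-q)(s/u)^{\alpha}$), and a single integration by parts turns it into an incomplete Beta function $B(\,\cdot\,;1-1/\alpha,k)$; using $L=s\,k\,B(1-1/\alpha,k)$ together with $(1-q)^{-1/\alpha}=\Delta/s$ collapses the total to $\Delta(1-q^k)+L((s/\Delta-1)I(1-q;1-1/\alpha,k)+1)$.

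The cost formulas follow from linearity: $\E[C]$ and $\E[C^c]$ equal $k$ times the expected lifetime of a single task, and that lifetime splits again on $\{X\le\Delta\}$ versus $\{X>\Delta\}$ — a task finishing by $\Delta$ contributes $X$, while a straggler contributes the runtime of its fresh copy plus, according to whether cancellation is in force, either the $[0,\Delta)$ stub of the replaced copy or the full run of that copy. All the expectations that occur — $\Pr\{X>\Delta\}=(s/\Delta)^{\alpha}$, $\E[X\mathbbm{1}(X\le\Delta)]=\frac{\alpha s^{\alpha}}{1-\alpha}(\Delta^{1-\alpha}-s^{1-\alpha})$, $\E[X']=\frac{s\alpha}{\alpha-1}$, and $\E[X_{k:k}]=L$ — are elementary truncated Pareto moments, and after substituting $1-q=(s/\Delta)^{\alpha}$ and simplifying one gets the case expressions; since no redundancy is present, the cost comes out as a direct function of the latency, matching the surrounding text.

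I expect the main obstacle to be bookkeeping rather than any single hard estimate: keeping the three $t$-ranges ($t\le\Delta$, $\Delta<t\le\Delta+s$, $t>\Delta+s$) consistent while assembling the indicator form of the tail, and carrying out the integration-by-parts / incomplete-Beta reduction of the second step with the correct limits and the identity $(1-q)^{-1/\alpha}=\Delta/s$. For the cost the subtle point is that when $\Delta>s$ only a random subset of the $k$ tasks is relaunched, so each task's contribution — original-plus-fresh or original-only — must be charged under the appropriate cancellation convention before invoking linearity of expectation.
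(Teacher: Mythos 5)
Your treatment of the distribution and the latency is sound, and it is essentially the paper's argument in a cleaner guise: the paper conditions on $R\sim\mathrm{Binomial}(k,q)$ (the number of tasks finished by $\Delta$) and sums Pareto order-statistic means, while you work with the per-task effective completion time $\tilde{T}$ (equal to $X$ on $\{X\le\Delta\}$ and to $\Delta+X'$ otherwise) and use $\Pr\{T\le t\}=\Pr\{\tilde T\le t\}^k$, then integrate the tail and reduce to $I(1-q;1-1/\alpha,k)$ by parts. I checked that this route does reproduce \eqref{eqn:eq_k_wrelaunch_tail__ET} exactly, with no need for the binomial-averaging approximation used elsewhere in the paper. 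One small remark: the printed tail \eqref{eqn:eq_k_wrelaunch_tail} has a sign slip (the last term must enter with a minus; as printed the expression tends to $2$ as $t\to\infty$); your two-range tail is the correct version, and your identity $q+(1-q)\left(1-(\Delta/t)^{\alpha}\right)=1-(s/t)^{\alpha}$ for $t>\Delta\ge s$ is exactly what converts it into the (sign-corrected) three-term form.

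The genuine gap is in the cost step. Your two accountings do not match the two printed formulas, so "after simplifying one gets the case expressions" would fail. Concretely, your \emph{with-cancellation} accounting (straggler charged the stub $\Delta$ plus the full fresh copy) gives, for $\Delta>s$, $\frac{k\alpha}{\alpha-1}\bigl(s-\Delta(1-q)\bigr)+k(1-q)\Delta+k(1-q)\frac{s\alpha}{\alpha-1}$, which is the printed $\E[C]$ (the \emph{without}-cancellation formula), and is also what the paper's own Thm.~\ref{thm_k_cnd_wrelaunch_ET__EC} yields at $c=0$ (or $n=k$) for \emph{both} costs. Your \emph{without-cancellation} accounting (replaced original runs to completion plus a fresh copy) gives $\frac{k\alpha}{\alpha-1}s(2-q)$, which is neither printed formula (it exceeds $\E[C]$ by $\frac{k\Delta(1-q)}{\alpha-1}$). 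And no per-task truncated-Pareto bookkeeping of the kind you describe produces the printed $\E[C^c]$: its $\Delta>s$ branch, $\frac{\alpha}{\alpha-1}\bigl(k(1-q)(s-\Delta)+ks\bigr)$, corresponds to not charging the cancelled stubs at all, and its $\Delta\le s$ branch contains the order-statistic quantity $L$, which cannot arise from a sum of i.i.d.\ per-task lifetimes; it is also discontinuous at $\Delta=s$ and inconsistent with the $c=0$ specialization of Thm.~\ref{thm_k_cnd_wrelaunch_ET__EC}. Part of this mismatch is attributable to the statement itself (under the paper's relaunch model nothing is outstanding at job completion, so the Sec.~II with/without-cancellation distinction does not literally generate two costs here), but your proof as written asserts agreement with the printed pair, and that assertion is where it breaks: you would need to either flag the $\E[C^c]$ formulas as not derivable from the stated model, or identify and state the alternative cost accounting they are meant to encode, before the linearity argument can close.
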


\captionsetup[subfigure]{labelformat=empty}
\begin{figure}[t]
  \centering
  \includegraphics[width=.4\textwidth, keepaspectratio=true]{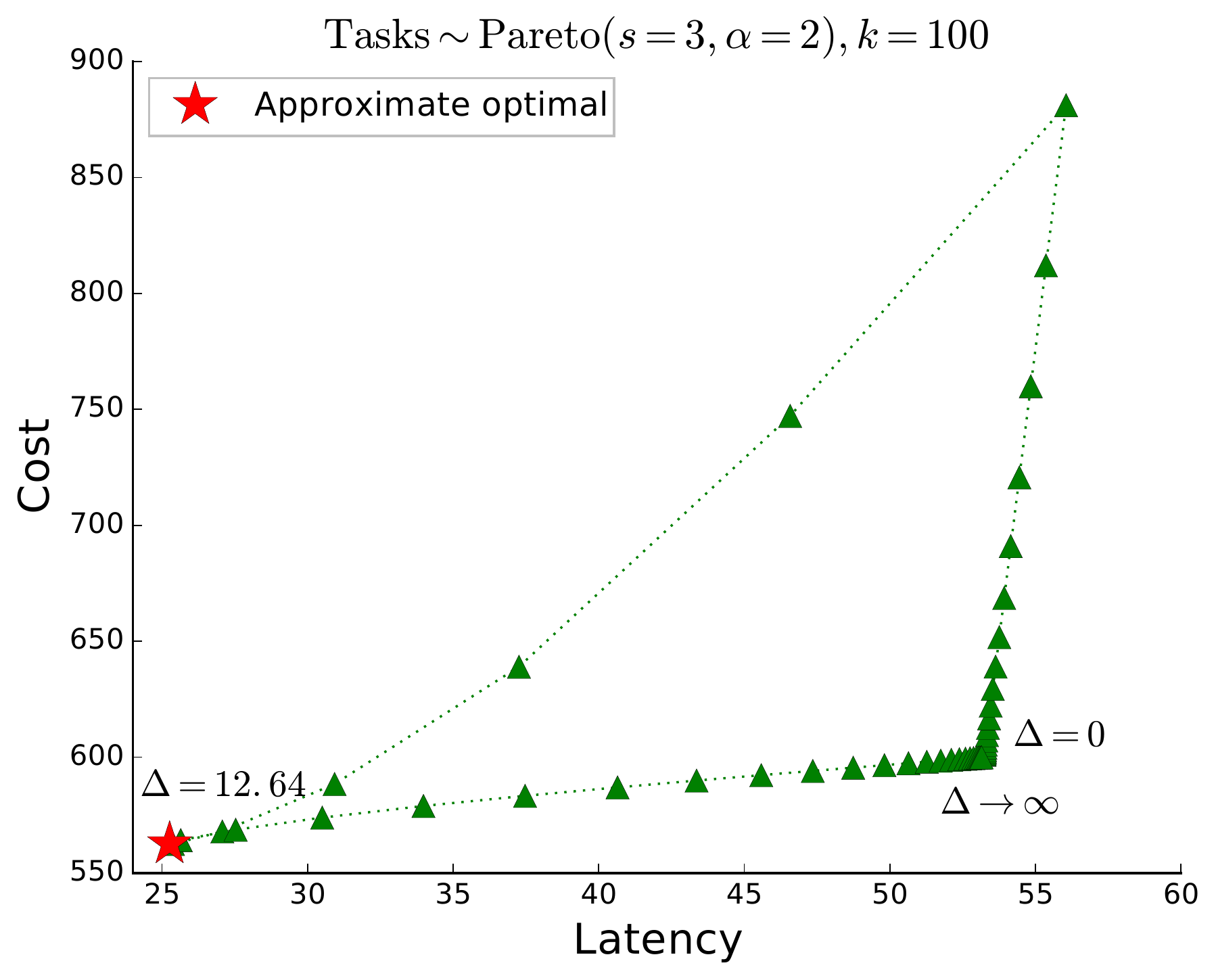}
  \caption{Cost vs. latency of executing a job of $100$ tasks by relaunching the remaining tasks after waiting some time $\Delta$. Relaunch time $\Delta$ is varied from $0$ to $\infty$ along the curve.}
  \label{fig:fig_k_wrelaunch_ET__EC}
\end{figure}

\begin{lemma}
  Suppose task execution times are distributed as $\mathrm{Pareto}(s, \alpha)$, and let $T_{no rel}$ denote the baseline completion time for executing a job of $k$ tasks without straggler relaunch. A sufficient condition for reducing the cost and latency of job execution by performing straggler relaunch is given by
  \begin{equation}
    \E[T_{no rel}] > 4 s.
  \label{eqn:eq_k_wrelaunch_suffcond}
  \end{equation}
  This gives a looser sufficient condition on the tail index as
  \begin{equation}
    \alpha < \ln(k)/\ln(4).
  \label{eqn:eq_k_wrelaunch_suffcond_a}
  \end{equation}
  Optimal relaunch time to execute the job with minimum cost and latency is approximately given as
  \begin{equation}
  	\Delta^* \approx \sqrt{s \E[T_{no rel}]} = s\sqrt{\frac{k!\Gamma(1-1/\alpha)}{\Gamma(k+1-1/\alpha)}}.
  \label{eqn:eq_k_wrelaunch_approx_optd}
  \end{equation}
  This implies that average fraction of the tasks that are relaunched by the optimal strategy is approximately given as
  \begin{equation}
    p^* \approx \left(s/\E[T_{no rel}]\right)^{\alpha/2} \approx \frac{\Gamma(1-1/\alpha)^{-\alpha/2}}{\sqrt{k+1}}.
  \label{eqn:eq_k_wrelaunch_optimal_fraction}
  \end{equation}
  Sufficient conditions and the approximations given above are asymptotic and become exact in the limit $k \to \infty$.
\label{lm_k_wrelaunch_ET__opt_d_suff_a}
\end{lemma}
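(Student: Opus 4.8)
The plan is to reason entirely on the $\Delta>s$ branch of the formulas in Thm.~\ref{thm_k_wrelaunch_T_C}; relaunching at $\Delta\le s$ only wastes work (there $\E[T]=\Delta+L>L$, and the cost likewise exceeds the baseline $k\alpha s/(\alpha-1)$), so it cannot be optimal. On that branch it is convenient to parametrize a relaunch at time $\Delta$ by the per-task survival probability $p=(s/\Delta)^{\alpha}=1-q$, so $\Delta=sp^{-1/\alpha}$ and $kp$ is the expected number of relaunched tasks. Since the lemma's conclusions are explicitly asymptotic, I work as $k\to\infty$ (whence $L=\E[T_{no rel}]\to\infty$) and test relaunch times of the scale suggested by the target, i.e.\ $\Delta=\beta\sqrt{sL}$ with $\beta>0$ fixed.

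The heart of the argument is an asymptotic simplification of $\E[T]$. For $\Delta=\beta\sqrt{sL}$ one has $p=(s/L)^{\alpha/2}\beta^{-\alpha}\to0$ while, using the Stirling estimate $L\sim s\,\Gamma(1-1/\alpha)\,k^{1/\alpha}$ for the quotient of Gamma functions, $kp\sim \Gamma(1-1/\alpha)^{-\alpha/2}\beta^{-\alpha}\sqrt{k}\to\infty$. From $kp\to\infty$ I get $q^{k}=(1-p)^{k}\le e^{-kp}\to0$, hence $1-q^{k}\to1$; and $I(1-q;1-1/\alpha,k)=I(p;1-1/\alpha,k)\to1$, which I justify by writing $I(p;a,k)=\frac{1}{B(a,k)}\int_{0}^{p}u^{a-1}(1-u)^{k-1}\dx{u}$ with $a=1-1/\alpha$, substituting $u=t/k$, invoking dominated convergence (since $(1-t/k)^{k-1}\le e^{-t/2}$ for $k\ge2$) to obtain $\int_{0}^{p}u^{a-1}(1-u)^{k-1}\dx{u}\sim k^{-a}\Gamma(a)$, and using Gautschi's inequality to get $B(a,k)\sim\Gamma(a)\,k^{-a}$. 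Substituting into the latency formula \eqref{eqn:eq_k_wrelaunch_tail__ET} and bounding the correction from $1-I(p;a,k)=O(1/(kp))=O(k^{-1/2})$ against $\sqrt{sL}$ (this is where $\alpha>1$ is used), one gets
\[ \E[T]\ \sim\ \Delta+L\cdot\frac{s}{\Delta}\ =\ \sqrt{sL}\,\Bigl(\beta+\frac{1}{\beta}\Bigr), \]
whose right-hand side is minimized at $\beta=1$. This yields $\Delta^{*}\approx\sqrt{sL}=s\sqrt{k!\,\Gamma(1-1/\alpha)/\Gamma(k+1-1/\alpha)}$, minimum latency $\approx2\sqrt{sL}$, and relaunched fraction $p^{*}=(s/\Delta^{*})^{\alpha}=(s/L)^{\alpha/2}$; plugging $L\sim s\,\Gamma(1-1/\alpha)\,(k+1)^{1/\alpha}$ gives $p^{*}\approx\Gamma(1-1/\alpha)^{-\alpha/2}/\sqrt{k+1}$, which are \eqref{eqn:eq_k_wrelaunch_approx_optd} and \eqref{eqn:eq_k_wrelaunch_optimal_fraction}. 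That the true minimizer is asymptotic to $\sqrt{sL}$ follows from the convergence $\E[T]/\sqrt{sL}\to\beta+1/\beta$ being uniform on compact $\beta$-sets, together with the fact that $\E[T]$ is of strictly larger order for $\Delta$ outside the $\Theta(\sqrt{sL})$ window (for $s<\Delta=o(L)$ one still has $\E[T]\gtrsim sL/\Delta$; for $\Delta=\Omega(L)$ one has $\E[T]=\Theta(L)\gg\sqrt{sL}$).

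The two sufficient conditions then drop out. Relaunch reduces latency as soon as $\min_{\Delta}\E[T]<L$, which asymptotically reads $2\sqrt{sL}<L$, i.e.\ $L=\E[T_{no rel}]>4s$, proving \eqref{eqn:eq_k_wrelaunch_suffcond}. For the cost, the $\Delta>s$ branch of \eqref{eqn:eq_k_wrelaunch_EC} gives $\E[C^{c}]=\frac{\alpha}{\alpha-1}\bigl(k(1-q)(s-\Delta)+ks\bigr)=\frac{k\alpha}{\alpha-1}\bigl(s-(1-q)(\Delta-s)\bigr)$, which is strictly below the baseline $\frac{k\alpha s}{\alpha-1}$ for \emph{every} $\Delta>s$; and $L>4s$ forces $\Delta^{*}\approx\sqrt{sL}>s$, so cost and latency improve together at $\Delta^{*}$. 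To reach the explicit condition \eqref{eqn:eq_k_wrelaunch_suffcond_a}, I lower-bound $L$ via Gautschi's inequality $\Gamma(k+1)/\Gamma(k+1-1/\alpha)\ge k^{1/\alpha}$ and $\Gamma(1-1/\alpha)\ge1$ (both hold since $1-1/\alpha\in(0,1)$ for $\alpha>1$), getting $L\ge s\,k^{1/\alpha}$; then $s\,k^{1/\alpha}>4s\iff k^{1/\alpha}>4\iff\alpha<\ln k/\ln4$.

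The hard part will be the rigorous asymptotics of the regularized incomplete Beta function $I(p;1-1/\alpha,k)$ in the coupled regime $p\to0$, $kp\to\infty$: establishing both that $I\to1$ and that $1-I$ is small enough that its contribution to $\E[T]$ is of lower order than $\sqrt{sL}$, and then upgrading the pointwise-in-$\beta$ limit of $\E[T]/\sqrt{sL}$ to a locally uniform one so that ``$\Delta^{*}\approx\sqrt{sL}$'' genuinely identifies the optimizer and not merely one good choice of $\Delta$. By comparison, the cost inequality and the Gautschi reduction for the tail-index condition are routine.
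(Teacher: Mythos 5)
Your proposal is correct and follows essentially the same route as the paper's proof: let $k\to\infty$ so that $q^k\to 0$ and $I(1-q;1-1/\alpha,k)\to 1$, reduce the latency to $\Delta + \E[T_{no rel}]\,s/\Delta$, minimize to obtain $\Delta^*\approx\sqrt{s\,\E[T_{no rel}]}$ and the threshold $\E[T_{no rel}]>4s$, then get the tail-index condition from Gautschi's inequality (with $\Gamma(1-1/\alpha)\geq 1$) and the relaunched fraction from $p^*=(s/\Delta^*)^{\alpha}$. Your extra steps are refinements rather than a different argument: a rigorous justification of the incomplete-beta limit in the coupled regime $p\to0$, $kp\to\infty$ (plus uniformity in $\beta$), and an explicit check from \eqref{eqn:eq_k_wrelaunch_EC} that $\E[C^c]$ is below the baseline for every $\Delta>s$, which the paper covers only by the remark that cost tracks latency.
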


An approximate expression for the relaunch time $\Delta^*$ that minimizes the cost and latency of job execution is given in Lemma~\ref{lm_k_wrelaunch_ET__opt_d_suff_a}. The given approximation converges to the true optimal as $k$ gets larger, e.g., approximate $\Delta^*$ is very close to the true optimal for the case shown in Fig.~\ref{fig:fig_k_wrelaunch_ET__EC} with $k=100$.
Optimal relaunch time is an increasing function of the number of tasks $k$ and the task sizes $s$, which intuitively makes sense. Also it is a decreasing function of $\alpha$, meaning that it is better to relaunch earlier when the tail of task execution times is lighter, while for heavier tail, delaying relaunch further helps to identify the stragglers better.
This is because relaunching tasks is a choice of canceling the work that is already completed in order to get possibly lucky and execute the replacement copies much faster. When the task execution times are heavier in tail, the residual lifetime of the straggler tasks is expected to be much larger, and the gain of relaunching stragglers can compensate for the work loss. However with lighter tailed task execution times, it is better to try our chance with relaunching earlier and keep the work loss limited.

As discussed above $\Delta^*$ gets smaller as $\alpha$ increases, but this does not imply relaunching a larger fraction of the job's tasks.
When relaunching is performed after waiting $\Delta^*$, fraction $p^*$ of the tasks that are relaunched monotonically decreases\footnote{$p^*$ is a monotonically decreasing function of $\alpha$. As the tail of task execution times becomes very heavy; $\lim_{\alpha \rightarrow 1} \Gamma(1-1/\alpha)^{-\alpha/2} = 1$, and as the tail becomes very light; $\lim_{\alpha \rightarrow \infty} \Gamma(1-1/\alpha)^{-\alpha/2} \approx 0.749$.} with $\alpha$, that is, fewer tasks get relaunched on average by the optimal strategy as the tail gets lighter.
In addition, $p^*$ decreases with $k$, which means for jobs with larger number of tasks, optimal strategy dictates relaunching smaller fraction of the tasks. For instance, suppose $\alpha=2$ and $k=10$, then $p^* \approx 0.17$, which implies $17\%$ of the tasks would need to be relaunched on average with the optimal strategy, while if $k=100$, then only $6\%$ of the tasks would need to be relaunched on average.

% Approximation given in Cor.~\ref{lm_k_wrelaunch_ET__opt_d_suff_a} tells us that, optimal relaunch time scales with the square root of the minimum task completion time $\lambda$ and the job completion time that is expected when relaunch is not performed. This suggests that it is better to relaunch earlier when task execution times have lighter tail.

We assume relaunching tasks does not introduce any additional delay. Given that, the cost of job execution directly changes with the latency, thus, optimal relaunch time that minimizes latency also minimizes the cost.
Note that cost of relaunching may not be ignored in practice, which is why results presented here on the performance of straggler relaunch can only be taken as optimistic guidelines.

\begin{figure}[b]
  \centering
  \includegraphics[width=0.35\textwidth, keepaspectratio=true]{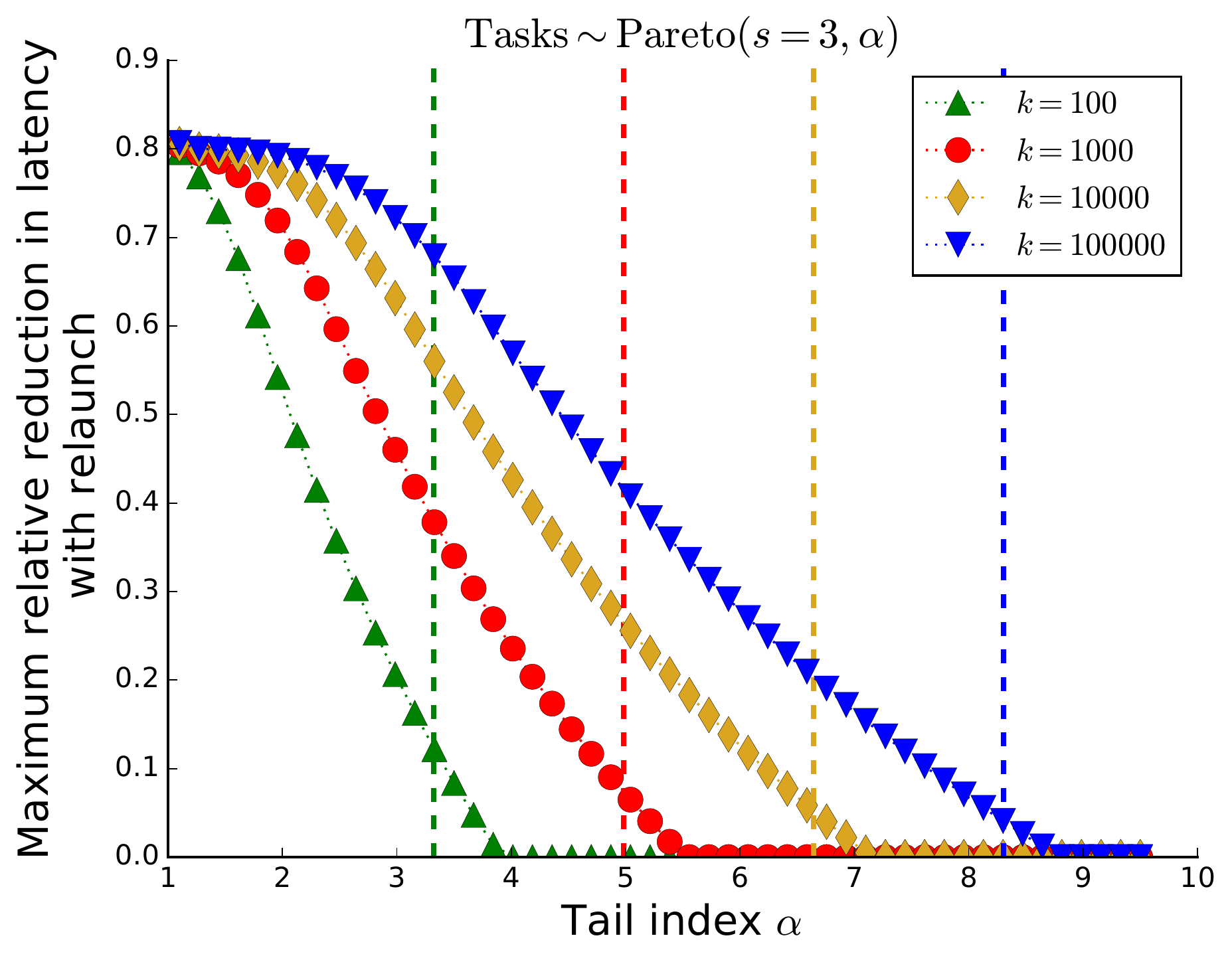}
  \caption{Maximum reduction in the latency of executing a job of $k$ tasks with straggler relaunch (relative to the baseline without relaunch) depends on the tail of the task execution times. Vertical dashed lines indicate the sufficient condition given on $\alpha$ in Lemma~\ref{lm_k_wrelaunch_ET__opt_d_suff_a}.}
  \label{fig:fig_k_wrelaunch__reduc_in_ET_vs_a}
\end{figure}

% Relaunching tasks after some time allows reducing the average number of stragglers by replacing the tasks that appear to be straggling with fresh copies. 
For relaunching to be effective, work loss due to the cancellation of already running tasks should be compensated by the gain of not having to wait very long for the stragglers. In other words, straggler relaunch is effective only if the tail of task execution times is heavy beyond a level. Otherwise relaunching tasks hurts performance; it incurs additional cost and latency in the job execution. For instance, relaunching always hurts when task execution times have light tail, i.e., when the tail decays at least exponentially fast.

Lemma~\ref{lm_k_wrelaunch_ET__opt_d_suff_a} presents an asymptotic sufficient condition for straggler relaunch to be effective, which has a particularly nice form; if the baseline latency without relaunching is greater than 4 times the minimum task completion time $s$, then relaunching stragglers at the right time will reduce the cost and latency of job execution.
Reformulation of this condition in terms of the tail index $\alpha$ suggests that straggler relaunch is effective as long as $\alpha$ is less than a threshold, which is the same as saying the tail of task execution times should be heavy beyond a level.
Note that this condition on the tail index $\alpha$ does not depend on the minimum task completion time $s$ and is only proportional with the logarithm of the scale $k$ of job execution, which we also validate by numerically computing the exact necessary and sufficient condition on $\alpha$ (see Fig.~\ref{fig:fig_k_wrelaunch__reduc_in_ET_vs_a}).

\section{Redundancy together with Relaunch}
\label{sec:sec_red_togetherwith_relaunch}
% So far in this section, we have considered straggler relaunch alone without adding any redundant task into job execution. We next consider employing these two separate straggler mitigation techniques jointly.
In this section, we consider employing redundant tasks and straggler relaunch jointly for straggler mitigation.

% \vspace{1ex}
% \noindent
% \textbf{Zero-delay redundancy with relaunch:}
\subsection{Zero-delay redundancy with relaunch}
Firstly, we consider launching the redundant tasks ($c$ replicas for each task or $n-k$ coded tasks) together with the $k$ initial tasks of a job, then relaunching each remaining task (initial or redundant) after waiting some time $\Delta$.
Thm.~\ref{thm_k_cn_wrelaunch_ET} gives exact expressions for the latency and Lemma~\ref{lm_k_cn_wrelaunch_ET__opt_d_suff_a} presents an asymptotic sufficient condition for relaunching to be effective in reducing latency, and also presents an approximate value for the optimal relaunch time.
Relaunch time $\Delta$ does not affect the level of added redundancy, hence the optimal relaunch time that minimizes latency also minimizes cost.
% Expressions presented here can be derived by following very similar steps that were outlined in the proof of Lemma~\ref{lm_k_wrelaunch_ET__opt_d_suff_a}.

\begin{theorem}
  Suppose task execution times are i.i.d. with $\mathrm{Pareto}(s, \alpha)$. Consider launching a job of $k$ tasks together with redundant tasks then relaunching all remaining tasks after waiting some time $\Delta$.
  Let $\E[T_{norel}]$ denote the baseline latency without straggler relaunch as given in Thm~\ref{thm_k_cn_ET__EC}.
  
  When job is launched by adding $c$ replicas for each task,
  \begin{longaligned}[\label{eqn:eq_k_c_wrelaunch_ET}]
    \E[T] =
    \begin{cases}
      \Delta + \E[T_{norel}] & \Delta \leq s, \longalignedtag \\
      \begin{aligned}
        & \Delta(1 - q^k) + \E[T_{norel}]\Bigl(1 + \\
        &~ (s/\Delta - 1)I(1-q; 1-1/\alpha, k) \Bigr)
      \end{aligned} & o.w.
    \end{cases}
  \end{longaligned}
  where $q = \mathbbm{1}(\Delta > s)\left(1 - (s/\Delta)^{(c+1)\alpha}\right)$.
%   \[ \E[T_{norel}] = s\;k!\frac{\Gamma(1-1/(c+1)\alpha)}{\Gamma(k+1-1/(c+1)\alpha)} \]
  
  When job is launched by adding $n-k$ coded tasks,
  \begin{longaligned}[\label{eqn:eq_k_n_wrelaunch_ET}]
    \E[T] =
    \begin{cases}
      \Delta + \E[T_{norel}] & \Delta \leq s, \longalignedtag \\
      \begin{aligned}
        & \Delta I(1-q; n-k+1, k) + \E[T_{norel}]\Bigl(1 + \\
        &~ (s/\Delta-1)I(1-q; n-k+1-1/\alpha, k) \Bigr)
      \end{aligned} & o.w.
    \end{cases}
  \end{longaligned}
  where $q = \mathbbm{1}(\Delta > s)\left(1 - (s/\Delta)^{\alpha}\right)$.
%   \[ \E[T_{norel}] = s\frac{n!}{(n-k)!}\frac{\Gamma(n-k+1-1/\alpha)}{\Gamma(n+1-1/\alpha)} \]
%   is the baseline latency without straggler relaunch.
\label{thm_k_cn_wrelaunch_ET}
\end{theorem}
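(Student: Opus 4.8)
The approach rests on the observation anticipated in the footnote of Sec.~\ref{sec:sec_coding_vs_rep}: once every surviving copy is cancelled and relaunched at time $\Delta$, all clocks are fresh, so the post‑$\Delta$ behavior is independent of the pre‑$\Delta$ history. I would formalize this by assigning to the (single, in the coded case) copy of task $i$ an \emph{effective completion time}
\[
  Z_i \;=\; X_i\,\mathbbm{1}(X_i \le \Delta) \;+\; (\Delta + X_i')\,\mathbbm{1}(X_i > \Delta),
\]
with $X_i, X_i' \sim \mathrm{Pareto}(s,\alpha)$ independent, so that the $Z_i$ are i.i.d. For coded redundancy the job finishes as soon as $k$ of the $n$ effective times have elapsed, hence $T = Z_{n:k}$ and $\Pr\{Z_{n:k} > t\} = I(\Pr\{Z>t\}; n-k+1, k)$ for $t \ge s$. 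For $c$ replicas per task, the group of task $i$ finishes at $\min_j Z_{ij}$, whose survival function is $\Pr\{Z>t\}^{c+1}$ — again the same two‑phase form but with tail index $(c+1)\alpha$ — and the job finishes at $\max_i \min_j Z_{ij}$. The $\Delta \le s$ rows are then immediate: since no copy can complete before $s \ge \Delta$, every copy is relaunched at $\Delta$, so $T = \Delta + (\text{a fresh baseline job})$ and $\E[T] = \Delta + \E[T_{norel}]$.

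For $\Delta > s$ I would compute $\E[T] = \int_0^\infty \Pr\{T>t\}\,\dx{t}$ by splitting the range into $[0,s)$, $[s,\Delta)$, $[\Delta,\Delta+s)$ and $[\Delta+s,\infty)$. On $[0,s)$ the integrand is $1$. On $[\Delta,\Delta+s)$ no relaunched copy can yet have completed, so $\Pr\{T>t\}=\Pr\{T>\Delta\}$ is constant there; this probability is $1-q^k$ in the replicated case (with $q = 1 - (s/\Delta)^{(c+1)\alpha}$, the chance a group has already finished) and $I(1-q; n-k+1, k)$ in the coded case (with $q = 1-(s/\Delta)^\alpha$), which accounts for the first term of each formula. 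On $[\Delta+s,\infty)$ the post‑relaunch survival probability, written as a function of $t-\Delta$, becomes — after the substitution $t-\Delta = vs/\Delta$ — exactly the survival function that appears on $[s,\Delta)$, evaluated at $v$ and multiplied by the Jacobian $s/\Delta$. Using $\int_s^\Delta = \int_s^\infty - \int_\Delta^\infty$ and collecting terms, everything collapses to the single ``baseline tail integral'' $\int_\Delta^\infty \Pr\{T_{norel}>t\}\,\dx{t}$, multiplied by the factor $(s/\Delta - 1)$, plus the two boundary pieces already identified.

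The last step is to evaluate $\int_\Delta^\infty \Pr\{T_{norel}>t\}\,\dx{t}$ in closed form. For the coded case I would substitute $u = (s/t)^\alpha$ (so that $\Pr\{T_{norel}>t\} = I(u; n-k+1, k)$ and the lower limit becomes $u = 1-q$) and integrate by parts once, differentiating the regularized incomplete beta: the boundary term at $u = 1-q$ yields $-\Delta\,I(1-q; n-k+1, k)$ via $(1-q)^{-1/\alpha} = \Delta/s$, the boundary term at $u \to 0$ vanishes because $u^{\,n-k+1-1/\alpha}\to 0$ (this is where $\alpha>1$ is used), and the residual integral equals $s\,B(1-q; n-k+1-1/\alpha, k)/B(n-k+1,k)$. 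Recognizing from Thm.~\ref{thm_k_cn_ET__EC} that $\E[T_{norel}] = s\,B(n-k+1-1/\alpha,k)/B(n-k+1,k)$ turns this into $\E[T_{norel}]\,I(1-q; n-k+1-1/\alpha, k)$; reassembling the four pieces gives the stated coded formula. The replicated formula follows from the identical computation applied to the maximum of $k$ i.i.d.\ copies of a $\mathrm{Pareto}(s,(c+1)\alpha)$‑with‑relaunch variable, so the ``first'' incomplete beta degenerates to $q^k$ and the tail index acquires the factor $c+1$ wherever a group minimum is involved.

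The main obstacle I anticipate is purely the bookkeeping: handling the three non‑trivial sub‑intervals consistently, getting \emph{both} boundary terms of the integration by parts exactly right, and — most error‑prone — tracking which power of $\alpha$ appears in each place when passing from the coded scheme to the replicated one, so that the $\int_s^\infty - \int_\Delta^\infty$ telescoping leaves no residue and the incomplete‑beta parameters match the stated expressions.
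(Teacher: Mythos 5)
Your proposal is correct, and for the coded case it takes a genuinely different route from the paper. The paper's proof conditions on the number $R$ of tasks finished by time $\Delta$, writes $\E[T]-\E[T_{norel}]$ as a binomial-weighted sum of differences of Pareto order-statistic means (using $(X \mid X>\Delta)\sim\mathrm{Pareto}(\Delta,\alpha)$), and evaluates that sum exactly into $I(1-q;\,n-k+1-1/\alpha,\,k)$ by a binomial-coefficient manipulation. You instead observe that the per-task effective completion times $Z_i$ are i.i.d., so $T=Z_{n:k}$ with survival $I(\Pr\{Z>t\};\,n-k+1,\,k)$, and integrate the tail directly: the substitution $t-\Delta=vs/\Delta$ (Pareto self-similarity, the same fact the paper exploits through the conditional distribution) turns the post-relaunch piece into $(s/\Delta)\int_\Delta^\infty \Pr\{T_{norel}>t\}\,\dx{t}$, and one integration by parts gives $\int_\Delta^\infty \Pr\{T_{norel}>t\}\,\dx{t} = -\Delta\, I(1-q;n-k+1,k)+\E[T_{norel}]\,I(1-q;n-k+1-1/\alpha,k)$. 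I checked the reassembly: $\E[T]=\E[T_{norel}]+s\,I(1-q;n-k+1,k)+(s/\Delta-1)\int_\Delta^\infty \Pr\{T_{norel}>t\}\,\dx{t}$ does collapse to the stated coded formula, so your outline is sound, though your remark that the $[\Delta,\Delta+s)$ interval "accounts for the first term" is loose --- that interval contributes only $s\Pr\{T>\Delta\}$, the remaining $(\Delta-s)\Pr\{T>\Delta\}$ emerging from the boundary term multiplied by $(s/\Delta-1)$. Your route buys the exact distribution of $T$ as a byproduct and avoids both the order statistics of non-identical variables and the binomial-sum identity; the paper's conditioning argument is the machinery it reuses for the cost derivations in Thm.~\ref{thm_k_cnd_wrelaunch_ET__EC}. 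For replication you and the paper use the same reduction (group minima are $\mathrm{Pareto}(s,(c+1)\alpha)$, then apply the no-redundancy relaunch formula of Thm.~\ref{thm_k_wrelaunch_T_C} with $\alpha\to(c+1)\alpha$); note that this substitution yields $I(1-q;\,1-1/((c+1)\alpha),\,k)$ in the second term, so the $I(1-q;\,1-1/\alpha,\,k)$ printed in the theorem statement appears to be a typo (a direct check with $k=1$, $c=1$ confirms the $(c+1)\alpha$ version), and your derivation, like the paper's own proof, produces the corrected parameter rather than the literal statement.
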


\begin{lemma}
  Suppose task execution times are i.i.d. with $\mathrm{Pareto}(s, \alpha)$.
  Let $\E[T_{no rel}]$ denote the latency for a job of $k$ tasks that is launched together with task replicas or coded tasks (without straggler relaunch), which is given in Thm.~\ref{thm_k_cn_ET__EC}.
  A sufficient condition that guarantees reduction in cost and latency by also performing straggler relaunch is given as
  \begin{equation}
    \E[T_{no rel}] > 4s.
  \label{eqn:eq_k_cn_wrelaunch_suffcond}
  \end{equation}
  A looser sufficient condition is, when task replicas are used
  \begin{equation}
    \alpha < \frac{\ln(k)}{(c+1)\ln(4)},
  \label{eqn:eq_k_c_wrelaunch_suffcond_a}
  \end{equation}
  or when coded tasks are used
  \begin{equation}
    \alpha < \frac{\ln\left(n/(n-k+1)\right)}{\ln(4)}.
  \label{eqn:eq_k_n_wrelaunch_suffcond_a}
  \end{equation}
  
  Optimal relaunch time for minimum cost and latency (either when task replicas or coded tasks are used) is approximately
  \begin{equation}
    \Delta^* \approx \sqrt{s \E[T_{no rel}]}.
  \label{eqn:eq_k_cn_wrelaunch_approx_optd}
  \end{equation}
  Sufficient conditions and the approximations given above are asymptotic and becomes exact in the limit $k \to \infty$.
  \label{lm_k_cn_wrelaunch_ET__opt_d_suff_a}
\end{lemma}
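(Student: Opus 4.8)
The plan is to follow the same route as in Lemma~\ref{lm_k_wrelaunch_ET__opt_d_suff_a}, now starting from the exact latency formulas of Thm.~\ref{thm_k_cn_wrelaunch_ET}. Relaunching before the minimum completion time $s$ is plainly wasteful: on the $\Delta \le s$ branch the formulas read $\E[T] = \Delta + \E[T_{no rel}] > \E[T_{no rel}]$, so only the $\Delta > s$ branch can help, and I work there. The first step is to show that, in the limit $k \to \infty$ with the replication factor $c$ (respectively the coded expansion rate $n/k$) held fixed, the auxiliary factors appearing in those formulas collapse. In the replica case $q = 1 - (s/\Delta)^{(c+1)\alpha} < 1$ forces $q^k \to 0$, and the factor $I\!\left(1-q;\,1-1/\alpha,\,k\right)$ --- whose first parameter is fixed --- tends to $1$ because $k(1-q)\to\infty$ when $\Delta$ is of the order of the optimum; in the coded case the analogous roles are played by $I\!\left(1-q;\,n-k+1,\,k\right)$ and $I\!\left(1-q;\,n-k+1-1/\alpha,\,k\right)$, whose arguments $1-q=(s/\Delta)^{\alpha}$ overshoot the concentration mean $\approx 1-k/n$ of the corresponding Beta law (one checks this at $\Delta$ of order $\sqrt{s\,\E[T_{no rel}]}$ using $\E[T_{no rel}]\approx s\,(n/(n-k+1))^{1/\alpha}$ from Thm.~\ref{thm_k_cn_ET__EC} and the elementary inequality $\sqrt{t}>t$ for $t\in(0,1)$). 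In every case the latency formula reduces asymptotically to
\[
  \E[T] \;\approx\; \Delta \;+\; \frac{s\,\E[T_{no rel}]}{\Delta}.
\]

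Second, I minimise this one-variable function. Setting its derivative $1 - s\,\E[T_{no rel}]/\Delta^{2}$ to zero gives the unique minimiser $\Delta^{*} = \sqrt{s\,\E[T_{no rel}]}$, which lies on the $\Delta > s$ branch since $\E[T_{no rel}] > s$ always, and the minimal latency $2\sqrt{s\,\E[T_{no rel}]}$. This is exactly \eqref{eqn:eq_k_cn_wrelaunch_approx_optd}, and comparing with the baseline shows that relaunching strictly reduces latency if and only if $2\sqrt{s\,\E[T_{no rel}]} < \E[T_{no rel}]$, i.e.\ iff $\E[T_{no rel}]>4s$, which is \eqref{eqn:eq_k_cn_wrelaunch_suffcond}. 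Because the relaunch time does not change the number of redundant tasks and relaunch is assumed instantaneous, the cost for fixed redundancy is governed by the same quantities; the same reductions turn the cost expression into a function minimised at the same $\Delta^{*}$ and driven below its baseline under the same threshold, and I would carry out that short parallel cost computation to close this part.

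Third, I weaken $\E[T_{no rel}]>4s$ to the stated conditions on $\alpha$ by inserting the Stirling estimates of $\E[T_{no rel}]$ from Thm.~\ref{thm_k_cn_ET__EC}. For replicas, $\E[T_{no rel}] = s\,k!\,\Gamma\!\big(1-\tfrac{1}{(c+1)\alpha}\big)\big/\Gamma\!\big(k+1-\tfrac{1}{(c+1)\alpha}\big) \gtrsim s\,k^{1/((c+1)\alpha)}$, the Gamma-ratio prefactor being at least $1$; hence $\E[T_{no rel}]>4s$ is guaranteed once $k^{1/((c+1)\alpha)}>4$, that is $\alpha < \ln k/\big((c+1)\ln 4\big)$, which is \eqref{eqn:eq_k_c_wrelaunch_suffcond_a}. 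For coded tasks, $\E[T_{no rel}]\approx s\,(n/(n-k+1))^{1/\alpha}$, so $\E[T_{no rel}]>4s$ amounts to $\alpha < \ln\!\big(n/(n-k+1)\big)/\ln 4$, which is \eqref{eqn:eq_k_n_wrelaunch_suffcond_a}. Since all these steps rely on Stirling-type approximations and on limits of incomplete beta functions, the resulting conditions and the value of $\Delta^{*}$ are asymptotic and become exact as $k\to\infty$, as asserted.

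The step I expect to be the main obstacle is the first one: making the collapse of the incomplete-beta factors (and the control of the Stirling remainders) hold \emph{uniformly} for $\Delta$ in a neighbourhood of $\Delta^{*}$ that itself may grow with $k$, so that discarding those factors introduces a genuine $o(1)$ additive error in $\E[T]$ rather than merely a pointwise limit; this is delicate precisely because $\Delta^{*}$ blows up with $k$ in the replica case, and because in the coded case both parameters of the relevant Beta laws diverge. By comparison, the one-dimensional minimisation and the $\alpha$-bounds are routine, and the cost side follows once the latency side is in place.
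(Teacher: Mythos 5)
Your proposal is correct and follows essentially the same route as the paper, whose own "proof" is just a one-line sketch deferring to the proof of Lemma~\ref{lm_k_wrelaunch_ET__opt_d_suff_a}: take the $\Delta>s$ branch of Thm.~\ref{thm_k_cn_wrelaunch_ET}, let $q^k\to 0$ and the incomplete-beta factors tend to $1$ so that $\E[T]\approx \Delta + s\,\E[T_{no rel}]/\Delta$, minimize in $\Delta$ to get $\Delta^*=\sqrt{s\,\E[T_{no rel}]}$ and the $4s$ threshold, and then loosen to the $\alpha$-conditions via Gautschi-type bounds on the Gamma ratios. Your additional concentration check for the coded-case beta factors (and the flagged uniformity caveat) goes slightly beyond the paper's level of detail but does not change the argument.
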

\begin{proof}[Proof Sketch]
  Very similar to the proof of Lemma~\ref{lm_k_wrelaunch_ET__opt_d_suff_a}.
\end{proof}

\begin{figure*}[t]
  \centering
  \begin{subfigure}[]{.32\textwidth}
    \centering
    \includegraphics[width=1\textwidth, keepaspectratio=true]{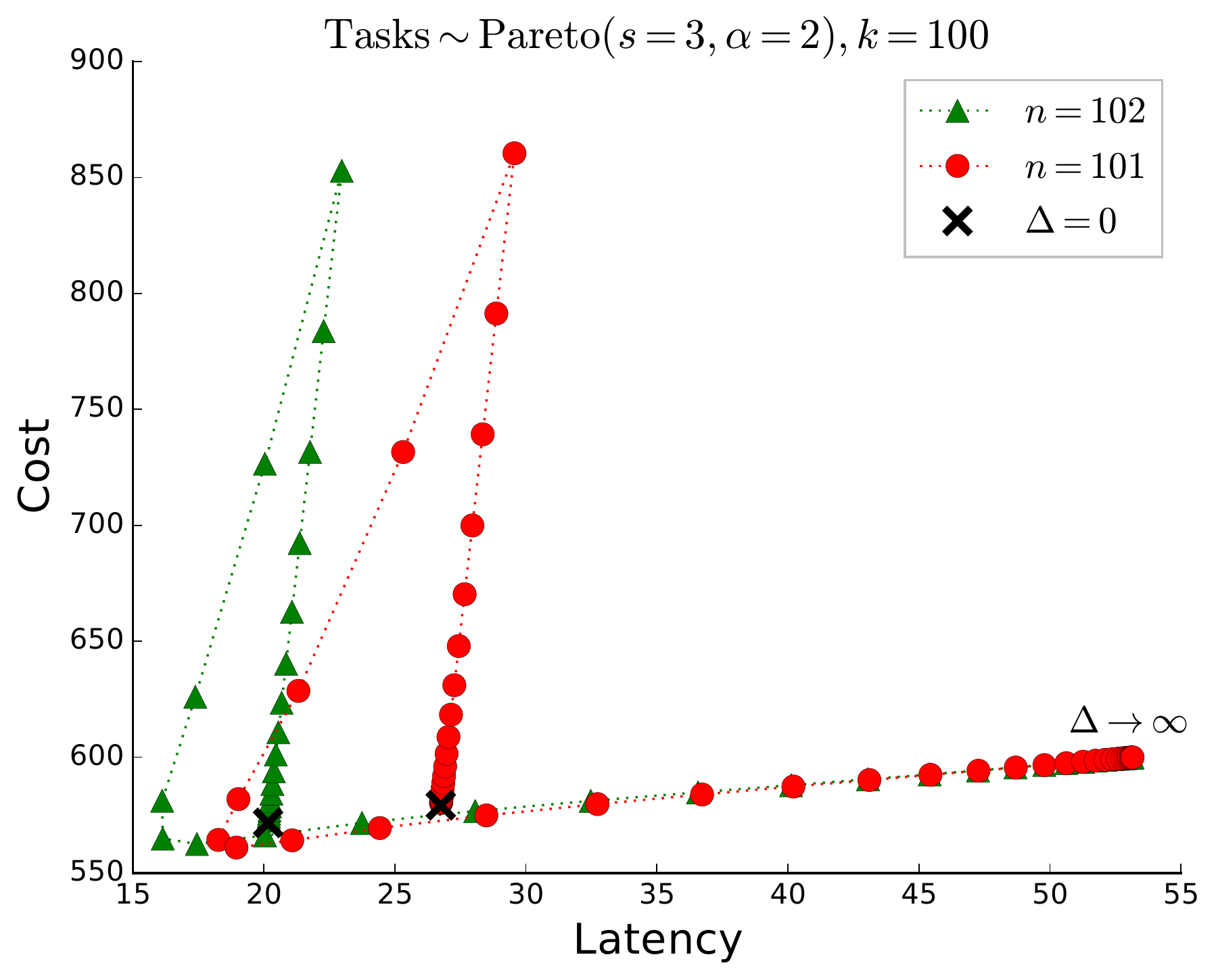}
  \end{subfigure}
  \begin{subfigure}[]{.32\textwidth}
    \centering
    \includegraphics[width=1\textwidth, keepaspectratio=true]{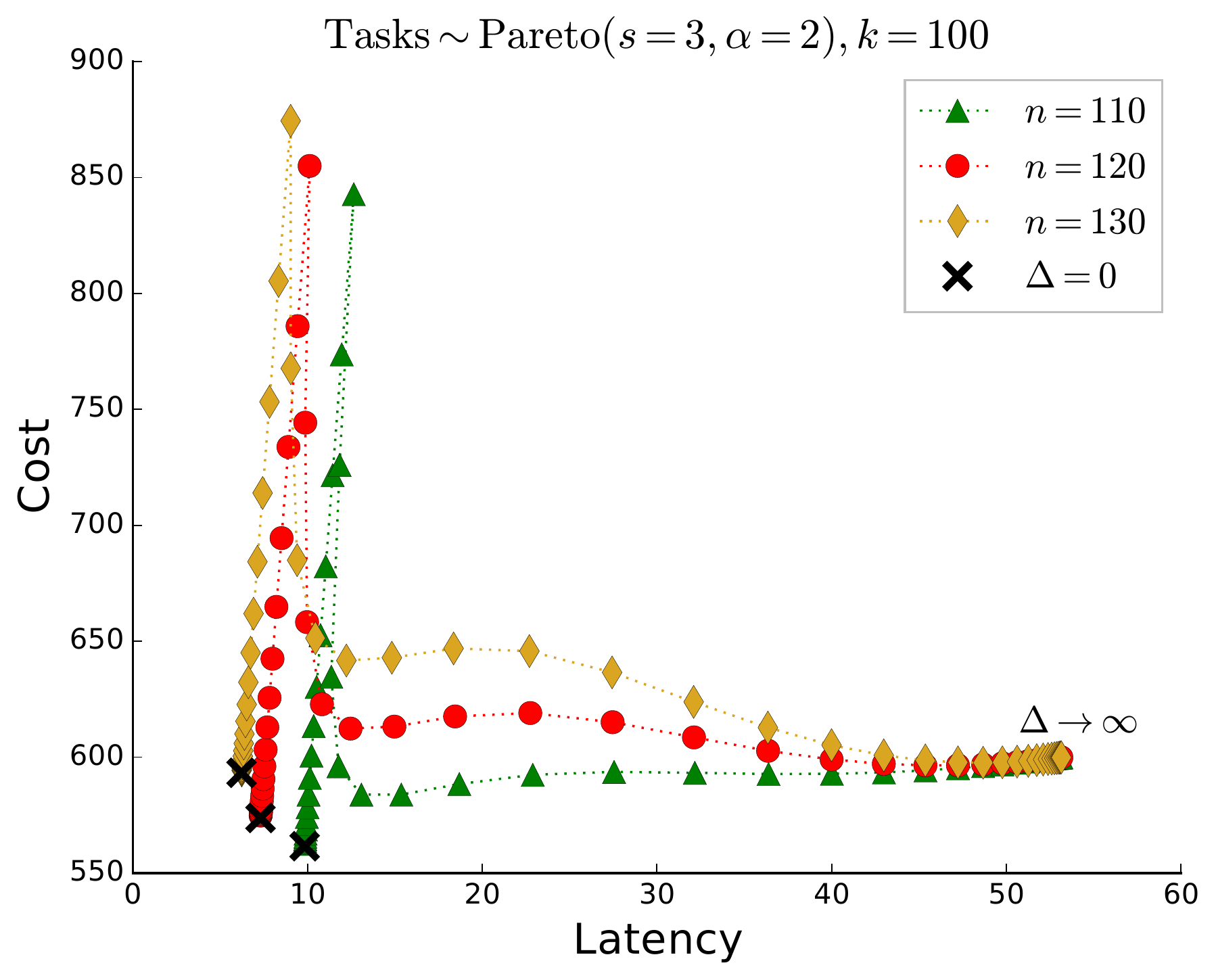}
  \end{subfigure}
  \begin{subfigure}[]{.32\textwidth}
    \centering
    \includegraphics[width=1\textwidth, keepaspectratio=true]{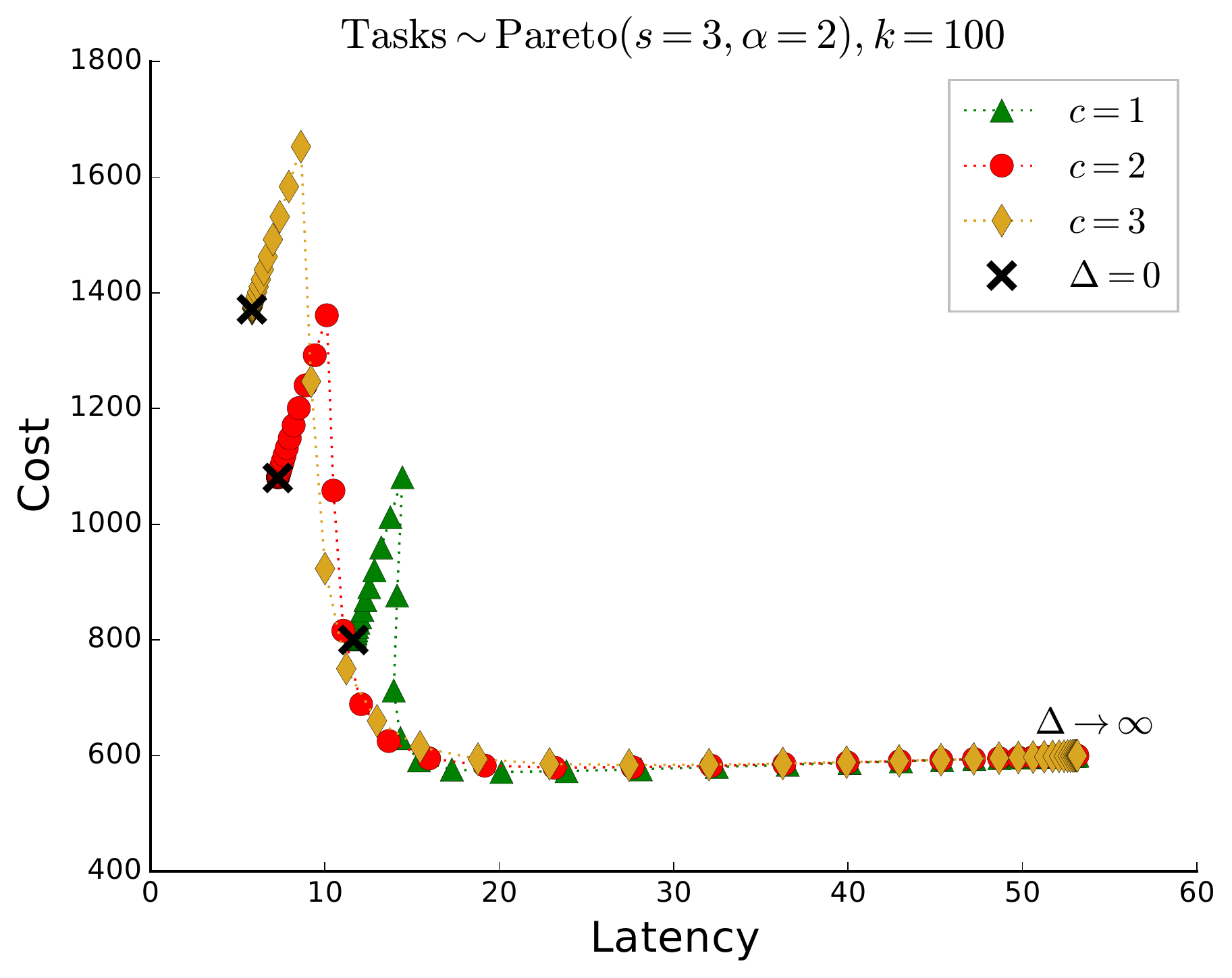}
  \end{subfigure}
  \caption{Cost vs.\ latency curves for executing a job of $100$ tasks by adding redundancy and performing straggler relaunch after time $\Delta$. Each curve is plotted by interpolating between the incremental steps of time $\Delta$.}
%   Waiting before introducing redundancy (i.e., delaying redundancy) is effective to reduce cost under low level of coded redundancy (Left) while it is not effective under higher level of coded redundancy (Middle) and not effective in replicated redundancy at all times (Right).
  \label{fig:fig_k_nc_wrelaunch_EC_vs_ET}
\end{figure*}

Launching a job with redundant tasks mitigates the effect of stragglers, so does relaunching the stragglers after waiting some time.
Therefore, the relative latency and cost reduction harvested from straggler relaunch decreases when it is used jointly with redundancy. 
% Relaunching tasks causes loss of work that is already completed, and the hope is that fresh replacements will execute much faster than the canceled stragglers. 
Straggler relaunch is effective only when the effect of stragglers is significant, i.e., when the tail of task execution times is heavy beyond a level (cf.\ Lemma~\ref{lm_k_wrelaunch_ET__opt_d_suff_a}). Adding redundant tasks into the job execution already ``cuts'' some of the tail, hence the initial tail heaviness that is required for relaunching to be effective increases with the level of added redundancy.
Sufficient conditions \eqref{eqn:eq_k_c_wrelaunch_suffcond_a} and \eqref{eqn:eq_k_n_wrelaunch_suffcond_a} given on the tail heaviness are asymptotic representations of this observation. The upper threshold given on the tail index as the sufficient condition decays (i.e., required tail heaviness increases) with the level of added redundancy faster when task replicas are used (decays as $1/(c+1)$) compared to using coded tasks (decays as $\ln\left(n/(n-k+1)\right)$).

% \vspace{1ex}
% \noindent
% \textbf{Delayed redundancy with relaunch:}
\subsection{Delayed redundancy with relaunch}
Secondly, we consider adding redundant tasks and performing straggler relaunch jointly after waiting some time $\Delta$.
Cost and latency of job execution in this case are are presented in Thm.~\ref{thm_k_cnd_wrelaunch_ET__EC}.
Using these expressions, Fig.~\ref{fig:fig_k_nc_wrelaunch_EC_vs_ET} plots the cost vs.\ latency tradeoff by varying $\Delta$ from $0$ to $\infty$ for different levels of added redundancy.

When the number of added coded tasks is low, there exists an optimal time $\Delta$ that minimizes the cost and latency of job execution (Left, Fig.~\ref{fig:fig_k_nc_wrelaunch_EC_vs_ET}). This is the same observation that we previously made for the case of performing straggler relaunch without adding any redundancy (cf.\ Fig.~\ref{fig:fig_k_wrelaunch_ET__EC}).
As the number of added coded tasks increases, redundancy becomes a greater effect on the cost and latency than straggler relaunch, hence waiting for some time before adding redundant tasks becomes ineffective to reduce the cost (Middle, Fig.~\ref{fig:fig_k_nc_wrelaunch_EC_vs_ET}). This is the same observation that we made previously for the case of employing delayed redundancy without straggler relaunch (cf.\ Fig.~\ref{fig:fig_delayed_red}).
When task replicas are used rather than coded tasks, regardless of the number of added replicas, delaying $\Delta$ is ineffective to reduce the cost (Right, Fig.~\ref{fig:fig_k_nc_wrelaunch_EC_vs_ET}). This is because replicating each remaining task after some time $\Delta$ even by one is enough to dominate the effect of straggler relaunch on the cost vs.\ latency tradeoff.

\begin{theorem}
  Suppose task execution times are i.i.d. with $\mathrm{Pareto}(s, \alpha)$.
  Let $\E[T_{no red}]$ denote the latency of executing a job of $k$ tasks by relaunching each remaining task after some time $\Delta$ (without adding redundant task) as given in Thm.~\ref{thm_k_wrelaunch_T_C}.
  
  Consider relaunching and adding $c$ replicas for each remaining task after some time $\Delta$.
  Then, latency is given as
  \begin{longaligned}[\label{eqn:eq_k_cd_wrelaunch_ET}]
    \E[T] \approx
    \begin{cases}
      \Delta + s k! \frac{\Gamma(1-1/\tilde{\alpha})}{\Gamma(k+1-1/\tilde{\alpha})} & \Delta \leq s, \\
      \begin{aligned}
        % & s\frac{\Gamma(1-1/\tilde{\alpha})}{\Gamma(-1/\tilde{\alpha})}B(k-kq+1, -1/\tilde{\alpha}) \\
        % & - s\frac{\Gamma(1-1/\alpha)}{\Gamma(-1/\alpha)}B(k-kq+1, -1/\alpha) \\
        % &+ \E[T_{no red}].
        & \E[T_{no red}] + f(\tilde{\alpha}) - f(\alpha).
      \end{aligned} & o.w. \longalignedtag
    \end{cases}
  \end{longaligned}
  for $f(\alpha) = s\;\frac{\Gamma(1-1/\alpha)}{\Gamma(-1/\alpha)}B(k-kq+1, -1/\alpha)$.
%   where
%   \[ f(\alpha) = s\;\frac{\Gamma(1-1/\alpha)}{\Gamma(-1/\alpha)}B(k-kq+1, -1/\alpha). \]
  
  Cost with ($C^c$) or without ($C$) task cancellation is given as
  \begin{equation*}
  \begin{split}
    \E[C^c] &=
    \begin{cases}
      k\Delta + ks(c+1)\frac{\tilde{\alpha}}{\tilde{\alpha}-1} & \Delta \leq s, \\
      \begin{split}
        & \frac{k\alpha}{(\alpha-1)}(s - \Delta(1-q)) \\
        &+ k(1-q)\Delta + ks(c+1)(1-q)\frac{\tilde{\alpha}}{\tilde{\alpha}-1}
      \end{split}
      & o.w.
    \end{cases} \\
    \E[C] &=
    \begin{cases}
      k\Delta + ks(c+1)\frac{\alpha}{\alpha-1} & \Delta \leq s, \\
      \begin{split}
        & \frac{k\alpha}{(\alpha-1)}(s - \Delta(1-q)) \\
        &+ k(1-q)\Delta + ks(c+1)(1-q)\frac{\alpha}{\alpha-1}
      \end{split}
      & o.w.
    \end{cases}
  \end{split}
  \label{eqn:eq_k_cd_wrelaunch_EC}
  \end{equation*}
  where $\tilde{\alpha} = (c+1)\alpha$ and $q = \mathbbm{1}(\Delta > s)(1 - (s/\Delta)^{\alpha})$.
  
  Consider adding $n-k$ coded tasks instead of task replicas.
  Then, latency is given as
  \begin{equation}
  \begin{split}
    \E[T] \approx
    \begin{cases}
      \Delta + s \frac{n!}{(n-k)!}\frac{\Gamma(n-k+1-1/\alpha)}{\Gamma(n+1-1/\alpha)} & \Delta \leq s, \\
      \begin{split}
        & \Delta(1-q^k) + s\Bigl(\frac{B(n-kq+1,-1/\alpha)}{B(n-k+1,-1/\alpha)} \\
        &+ kB(q;k,1-1/\alpha) - q^k\Bigr)
      \end{split} & o.w.
    \end{cases}
  \end{split}
  \label{eqn:eq_k_nd_wrelaunch_ET}
  \end{equation}
  
  Cost with ($C^c$) or without ($C$) task cancellation is given as
  \begin{longaligned}[\label{eqn:eq_k_nd_wrelaunch_EC}]
    \E[C^c] &=
    \begin{cases}
      k\Delta + s\frac{n}{\alpha-1}\left(\alpha - \frac{\Gamma(n)}{\Gamma(n-k)}\frac{\Gamma(n-k+1-1/\alpha)}{\Gamma(n+1-1/\alpha)}\right) & \Delta \leq s, \\
      \begin{aligned}
        & \frac{\alpha}{\alpha-1}(k(1-q)(s-\Delta) + ns) \\
        &+ k(1-q)\Delta - s(n-k)q^k \\
        &- \frac{s}{\alpha-1}(n-k)\frac{B(n-kq+1, -1/\alpha)}{B(n-k+1, -1/\alpha)}.
      \end{aligned} & o.w.
    \end{cases} \\
    \E[C] &= 
    \begin{cases}
      k\Delta + ns/(1-1/\alpha) & \Delta \leq s, \\
      \begin{aligned}
      & \frac{\alpha}{\alpha-1}\bigl(ks(1-q+q^k) \\
      &\quad + ns(1-q^k)\bigr) - \frac{k\Delta(1-q)}{\alpha-1}
      \end{aligned} & o.w. \longalignedtag
    \end{cases}
  \end{longaligned}
  where $q = \mathbbm{1}(\Delta > s)(1 - (s/\Delta)^{\alpha})$.
\label{thm_k_cnd_wrelaunch_ET__EC}
\end{theorem}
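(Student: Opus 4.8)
The plan is to derive all parts of the theorem --- the latency and both cost notions, for replicated and for coded redundancy --- by the same law-of-total-probability argument already used for Thm.~\ref{thm_k_wrelaunch_T_C}, now refined so that the tasks relaunched at time $\Delta$ carry the extra redundancy. The decisive observation, and the reason the heavy-tailed case is tractable here but not in Sec.~\ref{sec:sec_coding_vs_rep}, is that relaunch cancels the old copies, so every copy alive after time $\Delta$ has a \emph{fresh} $\mathrm{Pareto}(s,\alpha)$ execution time, independent of the first phase; there is never any need to handle order statistics of the non-memoryless conditional law $X \mid X > \Delta$. I would condition on the number $J$ of the $k$ original tasks that complete by $\Delta$: since each completes by $\Delta$ independently with probability $q = \mathbbm{1}(\Delta > s)(1-(s/\Delta)^{\alpha})$, we have $J \sim \mathrm{Binomial}(k,q)$, and I would split throughout on $\Delta \le s$ (forcing $J=0$) versus $\Delta > s$.

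For $\Delta \le s$ no original can finish before $s \ge \Delta$, so all $k$ originals run for exactly $\Delta$ --- contributing $k\Delta$ to the cost --- and at $\Delta$ the job restarts as a zero-delay-redundancy execution of $k$ tasks. Thus the first line of every case is $k\Delta$ plus the corresponding zero-delay quantity from Thm.~\ref{thm_k_cn_ET__EC}: in the replica case each relaunched task carries $c+1$ i.i.d.\ copies whose minimum is $\mathrm{Pareto}(s,\tilde{\alpha})$ with $\tilde{\alpha} = (c+1)\alpha$ (a minimum of Pareto copies is Pareto with scaled index), so one substitutes $\alpha \mapsto \tilde{\alpha}$ in Thm.~\ref{thm_k_cn_ET__EC}'s replica formulas; in the coded case one relaunches $k$ fresh initials plus $n-k$ coded tasks and needs any $k$ of them, so Thm.~\ref{thm_k_cn_ET__EC}'s coded formulas apply verbatim.

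For $\Delta > s$ I would write $\E[T] = \E[X_{k:k}\mathbbm{1}(X_{k:k}\le\Delta)] + \sum_{j=0}^{k-1}\binom{k}{j}q^{j}(1-q)^{k-j}\,(\Delta + \E[R_j])$, where on the event $J=j<k$ the job is not finished at $\Delta$, the $k-j$ surviving tasks are relaunched, and the residual completion time $R_j$ is the maximum of $k-j$ i.i.d.\ $\mathrm{Pareto}(s,\tilde{\alpha})$ in the replica case and the order statistic $X_{n-j:\,k-j}$ of $n-j$ i.i.d.\ $\mathrm{Pareto}(s,\alpha)$ in the coded case. The cost splits as the first-phase cost $k\,\E[\min(X,\Delta)] = \tfrac{k}{\alpha-1}\bigl(\alpha s - (1-q)\Delta\bigr)$ (a one-line integral of the Pareto tail) plus the second-phase cost, the latter obtained by plugging $(n,k)\mapsto(n-j,k-j)$ (with $\tilde{\alpha}$ for replicas) into the zero-delay cost formulas of Thm.~\ref{thm_k_cn_ET__EC} and averaging over $J$. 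The Pareto order-statistic mean $\E[X_{n:k}] = s\,B(n+1,-1/\alpha)/B(n-k+1,-1/\alpha)$ together with the integral representation $B(m+1,-1/\alpha) = \int_0^1 u^{m}(1-u)^{-1/\alpha-1}\dx{u}$ turns the binomial averages into beta functions, and the $\E[X_{k:k}\mathbbm{1}(X_{k:k}\le\Delta)]$ piece becomes incomplete beta terms after the substitution $u = 1-(s/t)^{\alpha}$, producing the $kB(q;k,1-1/\alpha)$ and $-q^k$ corrections.

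The main obstacle, and the source of the ``$\approx$'', is the binomial average $\sum_{j}\binom{k}{j}q^{j}(1-q)^{k-j}\,g(k-j)$ when $g$ is nonlinear. When $g$ is linear in its argument the average collapses exactly to $g(\E[k-J]) = g(k(1-q))$; this is why the replica costs and the coded cost \emph{without} cancellation come out exact, since there the second-phase cost is just $(\text{number of surviving slots})\times(\text{constant})$. For the latency in all cases, and for the coded cost \emph{with} cancellation, $g$ is a ratio of Gamma functions (an order-statistic mean), so I would replace the random count $k-J$ by its mean $k(1-q)$ inside $g$ as well --- i.e.\ approximate $R_j$ by the maximum of $k(1-q)$ real-valued $\mathrm{Pareto}(s,\tilde{\alpha})$, resp.\ by $X_{n-kq:\,k-kq}$ --- whose means are exactly the interpolated Gamma/beta ratios $sk!\,\Gamma(1-1/\tilde{\alpha})/\Gamma(k+1-1/\tilde{\alpha})$, $f(\alpha) = s\,\Gamma(1-1/\alpha)B(k-kq+1,-1/\alpha)/\Gamma(-1/\alpha)$ and $s\,B(n-kq+1,-1/\alpha)/B(n-k+1,-1/\alpha)$ appearing in the statement; this is justified asymptotically in $k$ by Stirling's approximation and concentration of the binomial, as claimed. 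In the replica case the first phase and the ``which tasks survive'' structure are identical to Thm.~\ref{thm_k_wrelaunch_T_C} and only the residual-finishing term sees the index change $\alpha \mapsto \tilde{\alpha}$, which is why $\E[T]$ rearranges into $\E[T_{no red}] + f(\tilde{\alpha}) - f(\alpha)$; recovering this clean form, keeping the $j=k$ term (job finished before $\Delta$) correctly separated, and matching the with-/without-cancellation accounting of the second phase across all the sub-cases is the remaining care-intensive bookkeeping.
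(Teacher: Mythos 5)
Your proposal is correct and follows essentially the same route as the paper's proof: condition on the $\mathrm{Binomial}(k,q)$ number of completions before $\Delta$, exploit the fact that relaunch makes all post-$\Delta$ copies fresh $\mathrm{Pareto}$ samples (so only standard i.i.d.\ order statistics appear), treat the linear second-phase contributions exactly, and approximate the binomial averages of the nonlinear Gamma-ratio terms by substituting the mean count $k(1-q)$, which is precisely the paper's Lemma on $\E[I(z;x-R,y)]$ and $\E[B(k-R+1,-1/\alpha)]$. Your minor variations (computing $\E[X_{k:k}\mathbbm{1}(X_{k:k}\le\Delta)]=skB(q;k,1-1/\alpha)$ directly instead of differencing against $\E[T_{wo}]$, writing the first-phase cost as $k\E[\min(X,\Delta)]$, and justifying the mean-substitution by binomial concentration rather than the paper's generating-function argument) are algebraically equivalent and do not change the substance.
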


% ###############################  Conclusions  ############################### %
\section{Conclusions}
\label{sec:sec_conclusions}
This paper presented a theoretical performance evaluation of the two most widely deployed straggler mitigation techniques for distributed job execution: i) adding redundant tasks (together with the original tasks or after waiting some time) into the job and waiting only for a sufficient subset of all launched tasks for job completion, ii) waiting for some time after launching the job and relaunching its remaining tasks.
We derived the cost and latency expressions for executing the job by applying either one of these techniques or both jointly.
Using the derived expressions, we found the following guidelines for the application of these techniques:
i) Waiting for some time before launching redundant tasks is not effective to reduce the cost of redundancy.
ii) Launching a job with redundant tasks can reduce not only its latency but also its cost.
iii) Launching a job with MDS coded tasks achieves less cost (hence incurs less additional load on the system) and latency than using task replicas.
iv) Relaunching remaining tasks after waiting some time is effective only if the tail of task execution times is heavier beyond a level, and employing redundant tasks together with straggler relaunch increases this tail heaviness requirement.

%% A shortcoming of our system model: An ongoing challenge
In our system model, we abstract away the job dispatching and resource sharing dynamics by modeling execution times of tasks within a job as i.i.d. random variables.
This rather lumped model allows deriving insightful expressions that allows evaluating and comparing widely deployed straggler mitigation techniques. However, application of these techniques modifies the system dynamics and it is necessary to augment the model to reflect the impact of this modification on task execution times.
This is an ongoing challenge for us and Sec.~\ref{sec:sec_when_red_changes_tail} presented a simulation driven attempt in this direction.

% ###
\bibliographystyle{IEEEtran}
\bibliography{references}

% #####################################  Appendix  #################################### #
\newpage
\section{Appendix}
\label{sec:sec_appendix}
We use the following Lemma frequently in the derivations.
\begin{lemma}\cite[eq.~4]{SumOfGammaRatios:Garrappa07}
  For $\alpha \in [1, \infty]$, we have
  \begin{equation}
    \sum_{m=0}^n \frac{\Gamma(m-\beta)}{\Gamma(m)} = \frac{\Gamma(n+1-\beta)}{(1-\beta)\Gamma(n)}
  \label{eqn:eq_sum_ratio_gamma}
  \end{equation}
  \label{lm_sum_gamma_ratios}
\end{lemma}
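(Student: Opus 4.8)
The plan is to prove this identity by a short telescoping argument that uses nothing beyond the functional equation $\Gamma(x+1)=x\Gamma(x)$ of the Gamma function; an equivalent route is a one-step induction on $n$, which I would fall back on if the endpoint bookkeeping in the telescoping version proves awkward. Since the result is quoted from \cite{SumOfGammaRatios:Garrappa07}, the aim is only a self-contained verification.

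First I would introduce the candidate partial sum $b_m = \dfrac{\Gamma(m+1-\beta)}{(1-\beta)\Gamma(m)}$ and check the one-line computation
\[
 b_m - b_{m-1} = \frac{\Gamma(m-\beta)}{(1-\beta)\Gamma(m-1)}\left(\frac{m-\beta}{m-1} - 1\right) = \frac{\Gamma(m-\beta)}{\Gamma(m)},
\]
where both equalities use $\Gamma(m+1-\beta) = (m-\beta)\Gamma(m-\beta)$ and $\Gamma(m) = (m-1)\Gamma(m-1)$. Summing this over $m = 1, \dots, n$ telescopes to $b_n - b_0$. Because $\Gamma(0)$ is infinite (equivalently $1/\Gamma$ vanishes at $0$), the boundary term $b_0 = \Gamma(1-\beta)/\bigl((1-\beta)\Gamma(0)\bigr)$ is $0$, and the $m=0$ summand $\Gamma(-\beta)/\Gamma(0)$ is likewise $0$; hence $\sum_{m=0}^{n}\Gamma(m-\beta)/\Gamma(m) = b_n$, which is exactly the claimed right-hand side.

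The only delicate point — what I would call the main obstacle, modest as it is — is the treatment of the degenerate terms at $m = 0$ (and implicitly $m = 1$), where the Gamma function has poles. I would handle this cleanly by phrasing the whole argument in terms of the reciprocal Gamma function $1/\Gamma$, which is entire with simple zeros at the non-positive integers, so that every factor $1/\Gamma(m)$ with $m \le 0$ is genuinely $0$ rather than an indeterminate form. I would also record the admissible range of the parameter: the right-hand side requires $\beta \neq 1$ to be finite, and for such $\beta$ all terms on the left are well defined; the statement's ``$\alpha \in [1,\infty]$'' should read ``$\beta$,'' and in every use of this lemma in the paper one has $\beta = 1/\alpha \in (0,1]$, so $1-\beta \ge 0$ with equality excluded and the identity applies verbatim.
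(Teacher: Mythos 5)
Your telescoping argument is correct: with $b_m=\Gamma(m+1-\beta)/\bigl((1-\beta)\Gamma(m)\bigr)$ one indeed gets $b_m-b_{m-1}=\Gamma(m-\beta)/\Gamma(m)$ (cleanest when written through $1/\Gamma$, using $1/\Gamma(m-1)=(m-1)/\Gamma(m)$ so that the $m=1$ step is not an indeterminate form), the $m=0$ term and $b_0$ vanish because $1/\Gamma(0)=0$, and the sum collapses to $b_n$, which is the stated right-hand side; a quick numerical check (e.g.\ $\beta=1/2$, $n=2$) confirms it. The paper, however, does not prove this lemma at all --- it is quoted verbatim from the cited reference \cite{SumOfGammaRatios:Garrappa07} and used as a black box in the proof of Thm.~\ref{thm_k_cnd_wrelaunch_ET__EC} --- so your contribution is a genuinely different (and more self-contained) route: a two-line verification relying only on the functional equation $\Gamma(x+1)=x\Gamma(x)$ and the convention that $1/\Gamma$ vanishes at non-positive integers, versus the paper's appeal to an external identity. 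The citation buys brevity; your argument buys independence from the reference and, usefully, forces the bookkeeping that exposes the statement's typos: the hypothesis ``$\alpha\in[1,\infty]$'' should be a condition on $\beta$ (in the paper's applications $\beta=1/\alpha\in(0,1)$), and $\beta=1$ must be excluded since both sides then degenerate. No gaps; just make sure the final write-up states the intermediate difference identity in the $1/\Gamma$ form so the $m=1$ case is covered without a limiting argument.
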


We next present two approximations that proved to be very useful in approximating cost and latency expressions. They render tedious formulas into relatively simpler and intuitively appealing forms. We used these simpler forms of the expressions to answer, at least asymptotically, the questions of interest, such as optimal straggler relaunch time or the shape of cost and latency in terms of important system parameters.
\begin{lemma}
  Given $R \sim \mathrm{Binomial}(k, q)$,
  \begin{equation}
  \begin{split}
    \E[I(z; x-R, y)] &\approx I(z; x-kq, y), \\
    \E[H_{n-R}] &\approx H_{n-kq}.
  \end{split}
  \label{eqn:eq_approx_binomial_mean_of_harmonic__ibeta}
  \end{equation}
  for $z \in [0,1]$ and $x, y, n \in \mathbb{R}^+$. 
%   Approximations become exact when $q=0$ or $q=1$.
  \label{lm_approx_bin_mean_of_harmonic__ibeta}
\end{lemma}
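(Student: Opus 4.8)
The plan is to prove both approximations by the same device: replace the random index $R \sim \mathrm{Binomial}(k,q)$ by its mean $kq$, and justify this via a first-order (mean-value) argument together with the fact that $R$ concentrates around $kq$. First I would treat the harmonic-number statement, since it is the cleaner of the two. Writing $g(r) = H_{n-r}$, I would note that $g$ is smooth on the relevant range and that its second derivative is small (of order $1/(n-r)^2$, hence uniformly bounded away from the endpoint where $n - R$ could get close to $0$; in the regimes where these lemmas are applied $n$ is comparable to $k$ and $q$ is bounded away from $1$, so $n - R$ stays $\Theta(n)$ with overwhelming probability). Then a Taylor expansion $g(R) = g(kq) + g'(kq)(R - kq) + \tfrac12 g''(\xi)(R-kq)^2$ gives, upon taking expectations, $\E[g(R)] = g(kq) + O\big(g''\cdot \mathrm{Var}(R)\big) = H_{n-kq} + O\big(kq(1-q)/n^2\big)$, and the error term is lower order relative to $H_{n-kq} = \Theta(\log n)$ as $k \to \infty$. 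This is exactly the ``asymptotic, becomes exact in the limit $k \to \infty$'' claim.

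For the incomplete-beta statement I would do the same with $h(r) = I(z; x - r, y)$, viewing $I(z;\cdot,y)$ as a function of its first parameter. The regularized incomplete beta function is infinitely differentiable in its parameters (differentiate the integral representation $B(z;m,y)/B(m,y) = \int_0^z u^{m-1}(1-u)^{y-1}\dx{u} \big/ B(m,y)$ under the integral sign; the derivatives pull down factors of $\log u$ and derivatives of $\log B(m,y)$, all of which are bounded on compact parameter ranges with $z \in [0,1]$). So again $\E[h(R)] = h(kq) + \tfrac12 h''(\xi)\,\mathrm{Var}(R)$ with $\mathrm{Var}(R) = kq(1-q) = O(k)$, while $h''$ is $O(1)$ — so the error is $O(k)$, which looks bad until one observes that the whole point of the approximation is relative accuracy for the downstream cost/latency formulas, where this term is multiplied by $1/\mu$ or $s$ and compared against quantities that themselves grow, and more importantly that in every place the lemma is invoked the argument of $I$ is itself a deterministic quantity of the form $\mathbbm{1}(\cdot)e^{-\mu(t-\Delta)}$, so the comparison is between $\E[I(z;x-R,y)]$ and $I(z;x-kq,y)$ pointwise and the $k\to\infty$ normalization makes the relative error vanish. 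I would state precisely the regime ($x, n$ scaling linearly in $k$, $q$ bounded away from $1$) under which this holds and phrase the conclusion as an asymptotic equivalence.

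The main obstacle I anticipate is making the error bound genuinely rigorous rather than heuristic, because $R$ can in principle take values (near $k$) for which $x - R$ or $n - R$ approaches zero or goes negative, where $H_{\cdot}$ and $I(z;\cdot,y)$ behave badly (poles of $\Gamma$). Handling this requires splitting the expectation into the ``bulk'' event $\{|R - kq| \le k^{2/3}\}$, on which the Taylor argument applies with uniformly bounded derivatives, and the ``tail'' event, whose probability is super-polynomially small by a Chernoff/Hoeffding bound for the binomial, so that even multiplying by the (at worst polynomially large) values the functions can take on that event, the contribution is negligible. A secondary, more mundane difficulty is simply that these ``$\approx$'' statements are never given an explicit error rate in the paper, so the cleanest honest proof is: (i) show convergence of the rescaled quantities as $k \to \infty$ using the law of large numbers $R/k \to q$ together with continuity of $H$ and $I$ in the appropriate scaled sense, and (ii) remark that for finite $k$ the substitution incurs only the second-order Taylor error computed above. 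I would present it in that order — concentration bound first, then the Taylor/LLN argument — and flag that the approximation is used only as a simplifying device, with exactness claimed solely in the limit.
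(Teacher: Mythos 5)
Your plan is workable, but it is a genuinely different route from the paper's. The paper never Taylor-expands in the parameter: it inserts $R$ into exact integral representations --- the contour form $I(z;x-R,y)=\frac{(1-z)^y}{2\pi i}\int_{c-i\infty}^{c+i\infty}\frac{(z/s)^{x-R}}{(s-z)(1-s)^y}\,\dx{s}$ and the Euler integral $H_{n-R}=\int_0^1 \frac{1-x^{n-R}}{1-x}\,\dx{x}$ --- so that the randomness appears only through a factor $u^{-R}$. Exchanging $\E$ with the integral and evaluating the probability generating function exactly, $\E[u^{-R}]=(q/u+1-q)^k$, the single approximation made is the scalar substitution $(q/u+1-q)^k \approx u^{-kq}$, i.e.\ $\E[u^{-R}]\approx u^{-\E[R]}$, justified by exactness at $q=0$ and $q=1$. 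That trick buys exactly what your route has to labor for: no derivative bounds on $H$ or on $I(z;\cdot,y)$, and no boundary/truncation issue when $n-R$ or $x-R$ gets small (note, incidentally, that $H_m$ has no pole at $m=0$, so your worry there is milder than stated). Your concentration-plus-delta-method argument is a legitimate alternative and, for the harmonic case, even yields something the paper does not: an explicit additive error of order $kq(1-q)/(n-kq)^2$ on the bulk event plus a Chernoff-negligible tail. For the incomplete-beta case, however, your error accounting as written wobbles: a uniform $h''=O(1)$ bound gives an $O(k)$ error for a function taking values in $[0,1]$, which is vacuous, and the appeal to ``relative accuracy downstream'' is the wrong rescue --- the approximation is in fact only additive; e.g.\ for $y=1$ one has $I(z;m,1)=z^m$, so the two sides are $z^{x}(q/z+1-q)^k$ versus $z^{x-kq}$, whose ratio grows exponentially in $k$ for fixed $z,q\in(0,1)$ even though both are exponentially small. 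The correct repair within your framework is to bound $\partial_m^2 I(z;m,y)$ locally near $m=x-kq$ (where it is small, e.g.\ $(\ln z)^2 z^{m}$ when $y=1$) on the bulk event $|R-kq|\le k^{2/3}$ and discard the tail by Hoeffding; with that, your proof goes through as an asymptotic additive statement, which is all the paper's ``$\approx$'' claims anyway --- and in that sense your justification is no weaker than the paper's endpoint-in-$q$ argument, just considerably longer than the PGF-through-the-integral shortcut.
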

\begin{proof}
  Using the integral representation of the regularized incomplete beta function \cite[eq.~8.17]{NIST:DLMF}
  \[ I(z; x-R, y) = \frac{(1-z)^y}{2\pi i} \int_{c-i\infty}^{c+i\infty} \frac{(z/s)^{x-R}}{(s-z)(1-s)^y} ds. \]
  we get
  \begin{equation}
  \E[I(z; x-R, y)] = \frac{(1-z)^y}{2\pi i} \int_{c-i\infty}^{c+i\infty} \frac{\E[(z/s)^{x-R}]}{(s-z)(1-s)^y} ds.
  \label{eq:eq_irrbeta}
  \end{equation}
  Using the probability generating function of $R$,
  \[ \E\left[(z/s)^{-R} \right] = (q s/z+1-q)^k \approx (z/s)^{-kq}. \]
  which is exact in the limits as
  \begin{equation*}
  \begin{split}
    & \lim_{q \rightarrow 0} \;(q s/z+1-q)^k = 1, \\
    & \lim_{q \rightarrow 1} \;(q s/z+1-q)^k = (z/s)^{-k}.
  \end{split}
  \end{equation*}
  Substituting this approximation in \eqref{eq:eq_irrbeta} gives
  \begin{equation*}
  \begin{split}
    \E[I(z; x-R, y)] &\approx \frac{(1-z)^y}{2\pi i} \int_{c-i\infty}^{c+i\infty} \frac{(z/s)^{x-kq}}{(s-z)(1-s)^y} ds \\
    &= I(z;x-kq,y).
  \end{split}
  \end{equation*}
  which gives the first approximation in \eqref{eqn:eq_approx_binomial_mean_of_harmonic__ibeta}.
  
  In order to derive the other approximation given in \eqref{eqn:eq_approx_binomial_mean_of_harmonic__ibeta}, we use the Euler representation of the harmonic number as
  \[ \E[H_{n-R}] = \int_0^1 \frac{1-x^{n-R}}{1-x} dx = \int_0^1 \frac{1-x^n \E[x^{-R}]}{1-x} dx. \]
  One can write $\E[x^{-R}] = (q/x+1-q)^k \approx x^{-kq}$ using the probability generating function of $R$, and observing $\lim_{q \rightarrow 0} q/x+1-q = 1$ and $\lim_{q \rightarrow 1} q/x+1-q = x^{-1}$. Substituting this approximation above gives
  \begin{equation*}
    \E[H_{n-R}] \approx \int_0^1 \frac{1-x^{n-kq}}{1-x} = H_{n-kq}.
  \end{equation*}
\end{proof}

Majority of the claims presented in the paper were stated without a proof. In the sequel, we present these deferred proofs. In each proof, random variable $X$ denotes the execution time of a single task copy.

% \subsection{Proofs for the claims in Sec.~\ref{sec:sec_coding_vs_rep}}
\subsection{Proof of Theorem~\ref{thm_k_cd_Exp_T_C}}
\label{subsec:subsec_proof_thm_k_cd_Exp_T_C}
This Theorem assumes task execution times are i.i.d. $\mathrm{Exp}(s, \mu)$, and presents the latency and cost of executing a job of $k$ tasks by employing replicated redundancy after waiting some time $\Delta$.
\noindent
\begin{proof}
  % ###  Pr{T > t}  ### %
  \vspace{0.2em}
  \noindent
  \textbf{Distribution of job execution time, \eqref{eqn:eq_k_cd_Exp_tail}.}
  When a task lasts longer than $\Delta$, $c$ fresh replicas for the task are launched.
  Let us denote the completion time of a task with $Y$, and denote the execution time of a single task copy with $X$.
  Then by the law of total probability
  \begin{longaligned}[\label{eqn:eq_k_cd_Pr_T_g_t}]
    \Pr\{&Y > t\} = \Pr\{Y > t \mid X \leq \Delta\}\Pr\{X \leq \Delta\} \\
      &\qquad\qquad + \Pr\{Y > t \mid X > \Delta\}\Pr\{X > \Delta\} \\
    \stackrel{(a)}{=}& \Pr\{t \leq X \leq \Delta\} + \Pr\{X_{c+1:1} > t-\Delta\}\Pr\{X > \Delta\} \\
    =& \mathbbm{1}(t \leq \Delta)(e^{-\mu t} - e^{-\mu\Delta}) + \mathbbm{1}(t > \Delta) e^{-\mu(c+1)(t - \Delta)}e^{-\mu\Delta}. \longalignedtag
  \end{longaligned}
  Equality $(a)$ results from the following event equality
  \begin{equation*}
  \begin{split}
    & \{Y > t \mid X \leq \Delta\} = \{X > t \mid X \leq \Delta\}, \\
    & \{Y > t \mid X > \Delta\} = \{\Delta + X_{c+1:1} > t\}.
  \end{split}
  \end{equation*}
  For job completion, all of its $k$ tasks need to complete. Thus, $T \sim Y_{k:k}$, which yields
  \[ \Pr\{T \leq t\} = \Pr\{Y \leq t\}^k = (1 - \Pr\{Y > t\})^k. \]
  Substituting \eqref{eqn:eq_k_cd_Pr_T_g_t} in which yields \eqref{eqn:eq_k_cd_Exp_tail}.
  
  % ###  E[T]  ### %
  \vspace{0.2em}
  \noindent
  \textbf{Latency, \eqref{eqn:eq_k_cd_Exp_ET}.}
  By the law of total expectation
  \begin{longaligned}[\label{eqn:eq_k_cd_Exp__ET_condsum1}]
    \E[T] &= \E[T \mid T \leq \Delta]\Pr\{T \leq \Delta\} \\
    &\quad + \E[T \mid T > \Delta]\Pr\{T > \Delta\} \\
    &= \E[X_{k:k} \mid X_{k:k} \leq \Delta]\Pr\{X_{k:k} \leq \Delta\} \\
    &\quad + \E[T \mid X_{k:k} > \Delta]\Pr\{X_{k:k} > \Delta\}. \longalignedtag
  \end{longaligned}
  where we used the equivalence
  $\{T \leq \Delta\} = \{X_{k:k} \leq \Delta\}$ and $\{T > \Delta\} = \{X_{k:k} > \Delta\}$, which follows by observing that replica tasks are launched only after time $\Delta$.
%   , and $(T \mid T \leq \Delta) \sim (X_{k:k} \mid X_{k:k} \leq \Delta)$.
  
  When no redundancy is employed, latency is given as
  \begin{longaligned}
    \E[X_{k:k}] &= \E[X_{k:k} \mid X_{k:k} \leq \Delta]\Pr\{X_{k:k} \leq \Delta\} \\
    &\quad + \E[X_{k:k} \mid X_{k:k} > \Delta]\Pr\{X_{k:k} > \Delta\}.
  \end{longaligned}
  Using \eqref{eqn:eq_k_cd_Exp__ET_condsum1} we get
  \begin{longaligned}[\label{eqn:eq_k_cd_Exp__ET_condsum2}]
    \E[T] - \E[X_{k:k}] &= \E[T \mid X_{k:k} > \Delta]\Pr\{X_{k:k} > \Delta\} \\
    &- \E[X_{k:k} \mid X_{k:k} > \Delta]\Pr\{X_{k:k} > \Delta\}. \longalignedtag
  \end{longaligned}
  
  Let us denote the number of tasks that complete before time $\Delta$ as $R$, hence $R \sim \mathrm{Binomial}(k, q)$ for $q = 1 - e^{-\mu\Delta}$.
  The left hand side of the difference in \eqref{eqn:eq_k_cd_Exp__ET_condsum2} follows as
  \begin{longaligned}[\label{eqn:eq_k_cd_Exp__ET_condsum2_lhs}]
    \E[T \mid X_{k:k} >& \Delta]\Pr\{X_{k:k} > \Delta\} \\
    &= \sum_{r=0}^{k-1} \E[\Delta + X_{k-r:k-r} \mid R=r]\Pr\{R=r\} \\
    &\stackrel{(a)}{=} \Delta \Pr\{R < k\} + \frac{\E[H_{k-R}]}{(c+1)\mu} \\
    &\stackrel{(b)}{\approx} \Delta \Pr\{R < k\} + \frac{H_{k-kq}}{(c+1)\mu}. \longalignedtag
  \end{longaligned}
  Equality $(a)$ comes by using the mean of exponential order statistics, and $(b)$ is by  Lemma~\ref{lm_approx_bin_mean_of_harmonic__ibeta}. Same steps can be carried out to get the right hand side of the difference in \eqref{eqn:eq_k_cd_Exp__ET_condsum2} as
  \begin{equation}
  \begin{split}
    \E[X_{k:k} \mid X_{k:k} > \Delta]&\Pr\{X_{k:k} > \Delta\} \\
    \approx&~ \Delta \Pr\{R < k\} + \frac{H_{k-kq}}{\mu}.
  \end{split}
  \label{eqn:eq_k_cd_Exp__ET_condsum2_rhs}
  \end{equation}
  Putting \eqref{eqn:eq_k_cd_Exp__ET_condsum2_lhs}, \eqref{eqn:eq_k_cd_Exp__ET_condsum2_rhs} and \eqref{eqn:eq_k_cd_Exp__ET_condsum2} together gives \eqref{eqn:eq_k_cd_Exp_ET}.
  
  % ### E[C^c] ### %
  \vspace{0.2em}
  \noindent
  \textbf{Cost with task cancellation, $\E[C^c]$ in \eqref{eqn:eq_k_cd_Exp_EC}.}
  When a task completes before time $\Delta$, it contributes to cost as much as its execution time $\E[X \mid X \leq \Delta]$.
  When a task completes after time $\Delta$, it will initially incur $\Delta$ to cost, then will be joined by $c$ fresh replicas at time $\Delta$ and will all together contribute an additional $(c+1)\E[X_{(c+1):1}]$ to cost.
  Given that job consists of $k$ tasks, we can write
  \begin{equation*}
  \begin{split}
    % &= k\left(\int_{0}^{\Delta} x\Pr\{X = x\}dx + \right)
    \E[C^c] =&~ k\Bigl(\Pr\{X \leq \Delta\}\E[X \mid X \leq \Delta] \\
    &\quad~ + \Pr\{X > \Delta\}\left(\Delta + (c+1)\E[X_{(c+1):1}] \right) \Bigr) \\
    =&~ k\biggl(\frac{1}{\mu}(1 - e^{-\mu\Delta} - \mu\Delta e^{-\mu\Delta}) \\
    &\quad~ + e^{-\mu\Delta}\left(\Delta + (c+1)\frac{1}{(c+1)\mu}\right) \biggr) = \frac{k}{\mu}.
  \end{split}
  \end{equation*}
  
  % ### E[C] ### %
  \vspace{0.2em}
  \noindent
  \textbf{Cost without task cancellation, $\E[C]$ in \eqref{eqn:eq_k_cd_Exp_EC}.}
  Each of the initial $k$ tasks will execute to completion, hence will contribute $\E[X]$ to cost.
  If a task takes longer than $\Delta$, it will be joined by $c$ fresh replicas, which will contribute an additional $c \cdot \E[X]$ to cost.
  Thus, we can write
  \begin{equation*}
  \begin{split}
    \E[C] = k\left(\E[X] + \Pr\{X > \Delta\}\;c\E[X] \right) = \left(c(1-q) + 1\right)\frac{k}{\mu}.
  \end{split}
  \end{equation*}
\end{proof}

\subsection{Proof of Theorem~\ref{thm_k_cd_SExp_T_C}}
This Theorem assumes task execution times are i.i.d. $\mathrm{SExp}(s, \mu)$, and presents the latency and cost of executing a job of $k$ tasks by employing replicated redundancy after waiting some time $\Delta$.
\begin{proof}
  % ### Pr{T > t}, E[T] ### %
  \vspace{0.2em}
  \noindent
  \textbf{Job completion time, \eqref{eqn:eq_k_cd_SExp_tail__ET}.}
  Let $T_e$ ($T$) be the job completion time when task execution times are distributed as $\mathrm{Exp}(\mu)$ ($\mathrm{SExp}(s, \mu)$).
  Launching $k$ tasks of a job at time $0$, we can treat $s$ as the deterministic portion of their execution, so an $\mathrm{Exp}(\mu)$ timer for each task is started at time $s$. Replicas for the remaining tasks are launched at time $\Delta$, and at time $s + \Delta$ the deterministic portion of their execution time is completed and an $\mathrm{Exp}(\mu)$ timer for each replica is started. This is as if launching $k$ tasks each with $\mathrm{Exp}(\mu)$ execution time at time $s$, then launching the replicas of the remaining tasks at time $s + \Delta$. Thus, we have $T = T_e + s$, which proves \eqref{eqn:eq_k_cd_SExp_tail__ET}.
  
  % ### E[C^c] ### %
  \vspace{0.2em}
  \noindent
  \textbf{Cost with task cancellation, \eqref{eqn:eq_k_cd_SExp_ECwcancel}.}
%   \eqref{eqn:eq_k_cd_SExp_ECwcancel}
  When a task completes before time $\Delta$, it will contribute $\E[X \mid X \leq \Delta]$ to cost. When it completes after time $\Delta$, it will be joined by $c$ fresh replicas and all together will contribute $\Delta + (c+1)\E[Y]$ to cost, where $Y$ denotes the residual lifetime of the job after time $\Delta$, so $Y \sim \min\left\{X_{c:1}, (X-\Delta \mid X > \Delta)\right\}$.
  Multiplying a single task's total contribution to cost by the number of tasks,
  \begin{equation}
  \begin{split}
    \E[C^c] &= \Pr\{X \leq \Delta\}\E[X \mid X \leq \Delta] \\
    &+ \Pr\{X > \Delta\}(c+1)\E[Y].
  \end{split}
  \label{eqn:eq_k_cd_G_ECwcancel}
  \end{equation}
  
  We next derive $\E[X \mid X \leq \Delta]$ and $\E[Y]$.
  Let $F(x)$ and $F_c(x)$ be the CDF and tail distributions of $X$. It is easy to see that $\E[X \mid X \leq \Delta] = 0$ for $\Delta \leq s$. For $\Delta > s$
  \begin{longaligned}
    & \E[X \mid X \leq \Delta] = \Delta - \frac{1}{F(\Delta)}\int_0^{\Delta} F(x) dx \\
    &= \Delta - \frac{1}{q}\int_{s}^{\Delta} (1 - e^{-\mu(x-s)}) dx = \Delta + \frac{1}{\mu} - \frac{\Delta-s}{q}.
  \end{longaligned}
  
  Next we find $\E[Y]$, firstly for $\Delta \leq s$
  \begin{longaligned}
    \E[Y] &= \frac{1}{F_c(\Delta)}\int_0^{\infty} F_c(y)F_c(y + \Delta) dy \\
    &= \int_0^{s - \Delta} 1dy + \int_{s-\Delta}^{s} e^{-\mu(y + \Delta - s)}dy \\
      &\quad + \int_{s}^{\infty} e^{-\mu\left((c+1)(y-s) + \Delta\right)}dy \\
    &= s - \Delta + \frac{1}{\mu}\left(1 - \frac{c}{c+1}e^{-\mu\Delta}\right).
  \end{longaligned}
  and secondly for $\Delta > s$
  \begin{longaligned}
    \E[Y] &= \frac{1}{F_c(\Delta)}\int_0^{\infty} F_c(y)F_c(y + \Delta) dy \\
    &= \int_0^{s} e^{-\mu y}dy + \int_{s}^{\infty} e^{-\mu\left((c+1)y - c\;s\right)}dy \\
    &= \frac{1}{\mu}\left(1 - \frac{c}{c+1}e^{-\mu s}\right).
  \end{longaligned}
  Substituting these in \eqref{eqn:eq_k_cd_G_ECwcancel} yields \eqref{eqn:eq_k_cd_SExp_EC}.
  
  % ### E[C] ### %
  \vspace{0.2em}
  \noindent
  \textbf{Cost without task cancellation, \eqref{eqn:eq_k_cd_SExp_EC}.}
  Each task will contribute $\E[X]$ to cost, and will be joined by $c$ fresh replicas if it takes longer than $\Delta$ amount of time to complete, in which case its replicas will contribute an additional $c\E[X]$ to cost.
  Given that job consists of $k$ tasks, we have
  \[ \E[C] = k\left(\E[X] + \Pr\{X > \Delta\} \;c\E[X]\right). \]
  Substituting $\Pr\{X > \Delta\} = 1 - q$ and $\E[X] = s + 1/\mu$ above yields \eqref{eqn:eq_k_cd_SExp_EC}.
\end{proof}

\subsection{Proof of Theorem~\ref{thm_k_nd_Exp_T_C}}
This Theorem assumes task execution times are i.i.d. $\mathrm{Exp}(s, \mu)$, and presents the latency and cost of executing a job of $k$ tasks by employing coded redundancy after waiting some time $\Delta$.
\begin{proof}
  % ###  Pr{T > t}  ### %
  \vspace{0.2em}
  \noindent
  \textbf{Distribution of job execution time in \eqref{eqn:eq_k_nd_Exp_tail__ET}.}
  By the law of total probability,
  \begin{longaligned}[\label{eqn:eq_k_nd_Exp_ET_tail__condsum}]
    \Pr\{T > t\} &= \Pr\{T > t \mid T \leq \Delta\}Pr\{T \leq \Delta\} \\
    &\quad + \Pr\{T > t \mid T > \Delta\}\Pr\{T > \Delta\}. \longalignedtag
  \end{longaligned}
  where the left hand side of the sum above is
  \begin{longaligned}[\label{eqn:eq_k_nd_Exp_tail__condsum_lhs}]
    \Pr\{T > t &\mid T \leq \Delta\}\Pr\{T \leq \Delta\} \\
    &= \Pr\{X_{k:k} > t \mid X_{k:k} \leq \Delta\}\Pr\{X_{k:k} \leq \Delta\} \\
    &= \mathbbm{1}(t \leq \Delta)\left(\Pr\{X_{k:k} \leq \Delta\} - \Pr\{X_{k:k} \leq t\} \right) \\
    &= \mathbbm{1}(t \leq \Delta)\left((1-e^{-\mu\Delta})^k - (1-e^{-\mu t})^k\right). \longalignedtag
  \end{longaligned}
  Let $R$ the number of tasks completed before time $\Delta$, so $R \sim \mathrm{Bin}(k, q)$ for $q = 1 - e^{-\mu\Delta}$.
  Then, the right hand side of the sum in \eqref{eqn:eq_k_nd_Exp_ET_tail__condsum} becomes
  \begin{longaligned}[\label{eqn:eq_k_nd_Exp_tail__condsum_rhs_1}]
    & \Pr\{T > t \mid T > \Delta\}\Pr\{T > \Delta\} \\
    &= \Pr\{T > t \mid R < k\}\Pr\{R < k\} \\
    &= \sum_{r=0}^{k-1} \Pr\{T > t \mid R=r\}\Pr\{R=r\} \\
    &\stackrel{(a)}{=} \sum_{r=0}^{k-1} \left(1 - \Pr\{X_{n-r:k-r} + \Delta \leq t\}\right)\Pr\{R=r\} \\
	&= (1 - q^k) - \sum_{r=0}^{k-1} \Pr\{X_{n-r:k-r} \leq t - \Delta\}\Pr\{R=r\} \\
    &\stackrel{(b)}{=} (1 - q^k) - \sum_{r=0}^{k-1} I(\bar{q};k-r,n-k+1) \Pr\{R=r\} \\
    &= (1 - q^k) - \E[I(\bar{q};k-R,n-k+1)] \\
    &\quad + q^k I(\bar{q};0,n-k+1). \longalignedtag
  \end{longaligned}
  Equality $(a)$ comes from observing that given that $k-r$ tasks remain and $n-k$ new parity tasks are introduced at time $\Delta$, both old (because exponential distribution is memoryless) and new tasks can be treated as fresh task copies, which equate the residual job lifetime to $X_{n-r:k-r}$.
  Equality $(b)$ is by the definition of order statistics and defining $\bar{q} = \mathbbm{1}(t > \Delta)(1 - e^{-\mu(t-\Delta)})$, and writing the inner finite sum as regularized incomplete beta function.
  By Lemma \ref{lm_approx_bin_mean_of_harmonic__ibeta}, we have the approximation
  \[ \E[I(\bar{q}; k-R, n-k+1)] \approx I(\bar{q}; k(1-q), n-k+1), \]
  substituting which in \eqref{eqn:eq_k_nd_Exp_tail__condsum_rhs_1} gives
  \begin{longaligned}[\label{eqn:eq_k_nd_Exp_tail__condsum_rhs_2}]
    & \Pr\{T \geq t \mid T > \Delta\}\Pr\{T > \Delta\} \\
    &= (1 - q^k) - I(\bar{q}; k(1-q), n-k+1) + q^k I(\bar{q}; 0, n-k+1) \\
    &= I(1-\bar{q}; k(1-q), n-k+1) - q^k I(1-\bar{q}; 0, n-k+1), \longalignedtag
  \end{longaligned}
  where the second equality is obtained from the identity $I(1-q; n, m) = 1 - I(q; m, n)$.
  Finally, putting \eqref{eqn:eq_k_nd_Exp_tail__condsum_lhs} and \eqref{eqn:eq_k_nd_Exp_tail__condsum_rhs_2} together yields the distribution given in \eqref{eqn:eq_k_nd_Exp_tail__ET}.
  
  % ###  E[T]  ### %
  \vspace{0.2em}
  \noindent
  \textbf{Latency in \eqref{eqn:eq_k_nd_Exp_tail__ET}.}
  By the law of total expectation,
  \begin{equation}
  \begin{split}
    \E[T] &= \E[T \mid T \leq \Delta]\Pr\{T \leq \Delta\} \\
    &+ \E[T \mid T > \Delta]\Pr\{T > \Delta\}.
  \end{split}
  \label{eqn:eq_k_nd_Exp_ET_condsum}
  \end{equation}
  Left hand side of the sum above is
  \begin{longaligned}[\label{eqn:eq_k_nd_Exp_ET_condsum_lhs}]
    & \E[T \mid T \leq \Delta]\Pr\{T \leq \Delta\} \\
    &= \int_0^{\Delta} \Pr\{T \geq t \mid T \leq \Delta\}\Pr\{T \leq \Delta\} dt \\
    &= \Pr\{T \leq \Delta\}\Delta - \int_0^{\Delta} \Pr\{T < t\} dt \\
    &= (1-q^k)\Delta - \int_0^{\Delta} (1-e^{-\mu t}) dt \\
    &\stackrel{(a)}{=} (1-q^k)\Delta - \frac{1}{\mu} \int_{1-q}^1 u^{-1}(1-u)^k du \\
    &= (1-q^k)\Delta \\
      &\quad - \frac{1}{\mu} \left(\int_0^1 u^{-1}(1-u)^k du - \int_0^{1-q} u^{-1}(1-u)^k du \right) \\
    &= (1-q^k)\Delta - \frac{1}{\mu} \left(B(0,k+1) - B(1-q;0,k+1)\right) \\
    &= (1-q^k)\Delta - \frac{1}{\mu} B(0,k+1)\left(1 - I(1-q;0,k+1)\right) \\
%     &\stackrel{(b)}{=} (1-q^k)\Delta - \frac{1}{\mu} B(0,k+1)I(q;k+1,0) \\
    &\stackrel{(b)}{=} (1-q^k)\Delta - \frac{1}{\mu} B(q;k+1,0). \longalignedtag
  \end{longaligned}
  Equality $(a)$ is by doing a change of variables with $u = e^{-\mu x}$, $(c)$ is by the identity $I(q; m, n) = 1 - I(1-q; n, m)$, and $(d)$ is by the identity $B(m, n) = B(n, m)$.
  
  Right hand side of the sum in \eqref{eqn:eq_k_nd_Exp_ET_condsum} is
  \begin{longaligned}[\label{eqn:eq_k_nd_Exp_ET_condsum_rhs}]
    & \E[T \mid T > \Delta]\Pr\{T > \Delta\} \\
    &\stackrel{(a)}{=} \sum_{r=0}^{k-1} \E[T \mid R=r, R < k]\Pr\{R=r \mid R < k\}\Pr\{R < k\} \\
    &\stackrel{(b)}{=} \sum_{r=0}^{k-1} \E[\Delta + X_{n-r:k-r}]\Pr\{R=r\} \\
    &\stackrel{(c)}{=} \left(1-\Pr\{R=k\}\right)\Delta + \sum_{r=0}^{k} \left(H_{n-r} - H_{n-k}\right)\Pr\{R=r\} \\
    &= q^k \Delta + \E[H_{n-R} - H_{n-k}] \approx q^k \Delta + H_{n-kq} - H_{n-k}. \longalignedtag
  \end{longaligned}
  Equality $(a)$ is by the law of total expectation, $(b)$ is by observing that $T \mid R=r \sim \Delta + X_{n-r:k-r}$ due to the memoryless property of exponential distribution, $(c)$ is by the expected value of exponential order statistics; $\E[X_{n-r:k-r}] = H_{n-r} - H_{n-k}$.
  The last step follows from the approximation $\E[H_{n-R}] \approx H_{n-kq}$ given in Lemma \ref{lm_approx_bin_mean_of_harmonic__ibeta}.
  Putting \eqref{eqn:eq_k_nd_Exp_ET_condsum_lhs} and \eqref{eqn:eq_k_nd_Exp_ET_condsum_rhs} in \eqref{eqn:eq_k_nd_Exp_ET_condsum} gives the latency given in \eqref{eqn:eq_k_nd_Exp_tail__ET}.
  
  % ### E[C] ### %
  \vspace{0.2em}
  \noindent
  \textbf{Cost without task cancellation, $\E[C]$ in \eqref{eqn:eq_k_nd_Exp_EC}.}
  Each of the initial $k$ tasks will execute until completion and contribute $1/\mu$ to cost.
  If job does not complete until time $\Delta$, additional $n-k$ coded tasks will be launched and each will also contribute $1/\mu$ to cost.
  Thus, we have
  \begin{longaligned}
    \E[C] &= \frac{k}{\mu} + \Pr\{T > \Delta\}\frac{n-k}{\mu} \\
    &\stackrel{(a)}{=} \frac{k}{\mu} + (1 - q^k)\frac{n-k}{\mu} = \frac{k}{\mu} q^k + \frac{n}{\mu} (1 - q^k)
  \end{longaligned}
  Equality $(a)$ comes from
  \[ \Pr\{T > \Delta\} = \Pr\{X_{k:k} > \Delta\} = 1 - q^k. \]
  
  % ### E[C^c] ### %
  \vspace{0.2em}
  \noindent
  \textbf{Cost with task cancellation, $\E[C^c]$ in \eqref{eqn:eq_k_nd_Exp_EC}.}
  By the law of total expectation,
  \begin{longaligned}[\label{eqn:eq_k_nd_Exp__ECwcancel_condsum_rhs}]
    \E[C^c] &= \E[C^c \mid T \leq \Delta]\Pr\{T \leq \Delta\} \\
    &\quad + \E[C^c \mid T > \Delta]\Pr\{T > \Delta\} \\
    &\stackrel{(a)}{=} \E\left[C \mid T \leq \Delta\right]\Pr\{T \leq \Delta\} \\
    &\quad + \E\left[C-\frac{n-k}{\mu} \mid T > \Delta\right]\Pr\{T > \Delta\} \\
    &= \E[C] - \frac{n-k}{\mu}\Pr\{T > \Delta\} = \frac{k}{\mu}. \longalignedtag
  \end{longaligned}
  Equality $(a)$ comes by noticing that the cost with task cancellation can be found by subtracting total residual lifetime of the tasks that continue after job completion from the cost without task cancellation.
  In the case without task cancellation, $n-k$ tasks run until completion after job completion if the job takes longer than $\Delta$ to complete. Using the memoryless property of exponential distribution, sum of the expected residual lifetime of these remaining tasks is $(n-k)/\mu$.
\end{proof}

\subsection{Proof of Theorem~\ref{thm_k_nd_SExp_T_C}}
This Theorem assumes task execution times are i.i.d. $\mathrm{SExp}(s, \mu)$, and presents the latency and cost of executing a job of $k$ tasks by employing coded redundancy after waiting some time $\Delta$.
\begin{proof}
  % ### Pr{T > t}, E[T] ### %
  \vspace{0.2em}
  \noindent
  \textbf{Distribution and expected value of job execution time, \eqref{eqn:eq_k_nd_SExp_tail__ET}.}
  Derivations follow from the exact same arguments outlined in the Proof of Thm.~\ref{thm_k_cd_SExp_T_C}, which we do not repeat here.
  
  % ### E[C] ### %
  \vspace{0.2em}
  \noindent
  \textbf{Cost without task cancellation.}
  When $\Delta \leq s$, no task can complete before time $\Delta$ and redundant tasks are always going to be launched. Since all tasks will run until completion (with no cancellation), then the cost is simply the overall number of task times the average lifetime of each task.
  
  Next, we consider the case with $\Delta > s$.
  Until time $s$, all of the initial $k$ tasks are spending the deterministic portion of their lifetime. If the residual lifetime of the job after time $s$, which is distributed as $X_{k:k}$ with $X \sim \mathrm{Exp}(\mu)$, less than $\Delta-s$, then redundant tasks will not be launched, otherwise they will be.
  Thus, we can write the cost as
  \begin{longaligned}
    \E[C] &= k\E[X] + \Pr\{X_{k:k} > \Delta-s\}(n-k)\E[X] \\
    &= (k + (1 - q^k)(n-k))\left(s + \frac{1}{\mu}\right).
  \end{longaligned}
%   Let $\E[T_e]$ be the latency, and $\E[C_e^c]$, $\E[C_e]$ be the cost with and without task cancellation when task execution times are i.i.d. $\mathrm{Exp}(\mu)$.
%   By the law of total expectation,
%   \begin{longaligned}
%     \E[C] &= \E[C \mid T \leq \Delta]\Pr\{T \leq \Delta\} + \E[C \mid T > \Delta]\Pr\{T > \Delta\} \\
%     &\stackrel{(a)}{=} \E[s + C_e \mid T \leq \Delta]\Pr\{T \leq \Delta\} \\
%     &\quad + \E[s + C_e + (n-k)\frac{s}{k} \mid T > \Delta]\Pr\{T > \Delta\} \\
%     &= s + \E[C_e] + (1-q^k)(n-k)\frac{s}{k} \\
%     &\stackrel{(b)}{=} q^k k\left(\frac{1}{\mu}+\frac{s}{k}\right) + (1-q^k)n\left(\frac{1}{\mu}+\frac{s}{k}\right).
%   \end{longaligned}
%   where $q = \mathbbm{1}(\Delta > s)(1-e^{-\mu(\Delta - s/k)})$.
%   Equality $(a)$ comes from observing that $C \mid T \leq \Delta$ is the sum of the lifetimes of the initial $k$ tasks until time $s/k$ and the sum of their residual lifetimes after time $s/k$. After time $s/k$, residual lifetime of each task is distributed as $\mathrm{Exp}(\mu)$, hence sum of the residual lifetimes of the tasks is the same as the cost of job execution when task execution times are exponential, which was previously given in \eqref{eqn:eq_k_nd_Exp_EC}. Same observations apply for $C \mid T > \Delta$ with the exception that deterministic lifetimes of the redundant tasks are added into the expression.
%   Equality $(b)$ comes from substituting \eqref{eqn:eq_k_nd_Exp_EC} in place of $\E[C_e]$.
  
  % ### E[C^c] ### %
  \vspace{0.2em}
  \noindent
  \textbf{Cost with task cancellation.}
  We first consider the case with $\Delta \leq s$.
  Each of the initial $k$ tasks will complete the deterministic portion of their lifetime at time $s$.
  Thus, no task can complete before time $\Delta$ and redundant tasks are always going to be launched at time $\Delta$.
  Each redundant task will complete the deterministic portion of its lifetime at time $s + \Delta$.
  Given these, job execution after time $s$ can be expressed with an exponential-setup; as if $k$ tasks were launched at time $s$ and $n-k$ coded tasks are launched at time $s + \Delta$ with execution times of all task copies distributed as $\mathrm{Exp}(\mu)$.
  
  We already know from Thm.~\ref{thm_k_nd_Exp_T_C} that cost in the exponential-setup is $k/\mu$. In addition to that, we need to consider two factors that contribute to cost in our shifted-exponential setup:
  i) $k$ initial tasks run from time $0$ and $s$, hence contributing $ks$ to cost,
  ii) $n-k$ redundant tasks run from time $\Delta$ to time $\Delta + s$ if job takes longer than $\Delta + s$ ($T > \Delta + s$) to complete, otherwise, they get cancelled some time between time $s$ and $\Delta + s$, hence contributing $(n-k)\left(s - \Pr\{T \leq \Delta + s\}(s - \E[T - \Delta \mid T \leq \Delta + s])\right)$.
  Event $\{T \leq \Delta + s\}$ can be expressed as $\{\tilde{X}_{k:k} \leq \Delta\}$, and $T - \Delta = \Delta - \tilde{X}_{k:k}$ for $\tilde{X} \sim \mathrm{Exp}(\mu)$.
  Putting all the observations given above together
  \begin{longaligned}
    \E[C^c] &= \frac{k}{\mu} + ks + (n-k) \\
      &\times \left(s - \Pr\{\tilde{X}_{k:k} \leq \Delta\}\left(\Delta - \E[\tilde{X}_{k:k} \mid \tilde{X}_{k:k} \leq \Delta]\right)\right) \\
    &= \frac{k}{\mu} + ns - (n-k) \\
      &\times \Pr\{\tilde{X}_{k:k} \leq \Delta\}\left(\Delta - \E[\tilde{X}_{k:k} \mid \tilde{X}_{k:k} \leq \Delta] \right)
  \end{longaligned}
  We firstly derive
  \begin{longaligned}
    \Pr&\{\tilde{X}_{k:k} \leq \Delta\}\E[\tilde{X}_{k:k} \mid \tilde{X}_{k:k} \leq \Delta] = \int_{0}^{\Delta} x\Pr\{\tilde{X}_{k:k} = x\}dx \\
    &= \int_{0}^{\Delta} x k\Pr\{\tilde{X} \leq x\}^{k-1}\Pr\{\tilde{X} = x\} dx \\
    &= k\mu \int_{0}^{\Delta} x (1 - e^{-\mu x})^{k-1} e^{-\mu x}dx \\
    &\stackrel{(a)}{=} k\mu \sum_{i=0}^{k-1} (-1)^{k-1-i} \binom{k-1}{i} \int_{0}^{\Delta} x e^{-\mu(k-i)x}dx \\
    &= k\mu \sum_{i=0}^{k-1} (-1)^{k-1-i} \binom{k-1}{i} \\
      &\qquad\qquad \times \left(\frac{1 - e^{-\mu(k-i)\Delta}}{\mu(k-i)} - \Delta e^{-\mu(k-i)\Delta} \right) \\
    &= k\mu\Biggl(\frac{1}{\mu}\sum_{i=0}^{k-1} (-1)^{k-1-i} \binom{k-1}{i} \frac{1 - e^{-\mu(k-i)\Delta}}{k-i} \\
      &\qquad\quad - \Delta e^{-\mu\Delta}\left(1 - e^{-\mu\Delta}\right)^{k-1} \Biggr)
  \end{longaligned}
  Equality $(a)$ comes from writing out the binomial expansion of $(1 - e^{-\mu x})^{k-1}$ and interchanging the sum and integral.
  Substituting $q = 1 - e^{-\mu\Delta}$ in the above expressions, we get
  \begin{longaligned}
    \E[C^c] &= \frac{k}{\mu} + ns - (n-k)q^k \\
      &\quad \times \left(\Delta + k\mu\left(\frac{\eta}{\mu q^k} - \Delta\left(\frac{1}{q}-1\right)\right)\right),
  \end{longaligned}
  where
  \[ \eta = \sum_{i=0}^{k-1} (-1)^{k-1-i} \binom{k-1}{i} \frac{1 - (1-q)^{k-i}}{k-i}. \]
  
  Next, we consider the case with $\Delta > s$.
  Let us define $\Delta^{\prime} = \Delta + s/k$ and $\tilde{q} = 1 - e^{-\mu\Delta}$.
  By the law of total expectation,
  \begin{longaligned}[\label{eqn:eq_k_nd_SExp__ECwcancel_1}]
    & \E[C^c] = \E[C^c \mid T \leq \Delta]\Pr\{T \leq \Delta\} \\
      &\quad + \E[C^c \mid \Delta < T \leq \Delta^{\prime}]\Pr\{\Delta < T \leq \Delta^{\prime}\} \\
      &\quad + \E[C^c \mid T > \Delta^{\prime}]\Pr\{T > \Delta^{\prime}\} \\
    &\stackrel{(a)}{=} \E[C \mid T \leq \Delta]\Pr\{T \leq \Delta\} \\
      &\quad + \E\biggl[C - (n-k)\left(\Delta^{\prime}-T+\frac{1}{\mu}\right) \\ 
      &\qquad\quad~ \mid \Delta < T \leq \Delta^{\prime}\biggr]\Pr\{\Delta < T \leq \Delta^{\prime}\} \\
      &\quad + \E\left[C - \frac{(n-k)}{\mu} \mid T > \Delta^{\prime}\right]\Pr\{T > \Delta^{\prime}\} \\
    &= \E[C] - (n-k)\\
      &\quad \times \Bigl(\E[\Delta^{\prime}-T \mid \Delta < T \leq \Delta^{\prime}]\Pr\{\Delta < T \leq \Delta^{\prime}\} \\
        &\qquad\quad + \frac{1}{\mu}\Pr\{T > \Delta\}\Bigr) \\
    &= \E[C] - (n-k)\Bigl(\E[\Delta^{\prime}-T \mid \Delta < T \leq \Delta^{\prime}](\tilde{q}^k-q^k) \\
      &\qquad\qquad\qquad\qquad~ + \frac{1}{\mu}(1-q^k)\Bigr). \longalignedtag
  \end{longaligned}
  Equality $(a)$ comes from observing:
  1) $C^c \mid T \leq \Delta = C \mid T \leq \Delta$ since no redundant tasks are launched if job completes before time $\Delta$,
  2) $(C^c \mid \Delta < T \leq \Delta^{\prime}) = (C - (n-k)(\Delta^{\prime}-T+1/\mu) \mid \Delta < T \leq \Delta^{\prime})$ since the $n-k$ coded tasks that are launched at time $\Delta$ is canceled at the job completion time $T$ before they could even finish the deterministic portion ($s/k$) of their lifetime,
  3) $C^c \mid T > \Delta^{\prime} = (C - (n-k)/\mu \mid T > \Delta^{\prime})$ since all the canceled tasks finish the deterministic portion of their lifetime before they get canceled and the residual lifetime of each is distributed as $\mathrm{Exp}(\mu)$.
  
  In the following, we derive an approximation for $\E[\Delta^{\prime}-T \mid \Delta < T \leq \Delta^{\prime}]$.
  \begin{longaligned}[\label{eqn:eq_k_nd_SExp__ECwcancel_2}]
    & \E[\Delta^{\prime}-T \mid \Delta < T \leq \Delta^{\prime}] \stackrel{(a)}{=} \E[\Delta-T_e \mid \Delta-s/k < T_e \leq \Delta] \\
    &\stackrel{(b)}{=} \sum_{r=0}^{k-1} \E[\Delta-T_e \mid R=r]\Pr\{R=r\} \\
    &\stackrel{(c)}{=} \sum_{r=0}^{k-1} \left(s/k - \E[X_{k-r:k-r} \mid X_{k-r:k-r} \leq s/k]\right)\Pr\{R=r\} \\
    &= \sum_{r=0}^{k-1} \left(s/k - \E[X_{k-r:k-r} \mid X_{k-r:k-r} \leq s/k]\right)\Pr\{R=r\} \\
    &\stackrel{(d)}{=} \E\left[\int_0^{s/k} \left(\frac{F(x)}{F(s/k)}\right)^{k-R} dx\right] \\
    &\stackrel{(e)}{=} \frac{1}{\mu} \E\left[\int_0^{\alpha} \left(\frac{u}{\alpha}\right)^{k-R}(1-u)^{-1} du\right] \\
    &= \frac{1}{\mu} \int_0^{\alpha} \E\left[\left(\frac{u}{\alpha}\right)^{k-R}\right](1-u)^{-1} du \\
    &\stackrel{(f)}{\approx} \frac{1}{\mu} \int_0^{\alpha} \left(\frac{u}{\alpha}\right)^{k(1-q)}(1-u)^{-1} du \\
    &= \frac{1}{\mu \alpha^{k(1-q)}} B(\alpha;k-kq+1,0). \longalignedtag
  \end{longaligned}
  Equality $(a)$ results from the fact $T = T_e + s/k$.
  Equality $(b)$ is by denoting the number of tasks completed before $\Delta$ as $R \sim \mathrm{Binomial}(k, q)$ and using the memoryless property of exponential distribution. 
  Equality $(c)$ is by observing that $\Delta-T_e \mid R=r$ is simply the time between $\Delta$ and the maximum of $k-r$ exponential timers that started at $\Delta-s/k$, which is less than $\Delta$.
  Expectation in $(d)$ is with respect to $R$ where $F$ denotes the CDF of $\mathrm{Exp}(\mu)$.
  Equality $(e)$ is by setting $\alpha = F(s/k)$ and by doing a change of variables with $u = F(x)$.
  Equality $(f)$ is by substituting the approximation
  \[ \E\left[(u/\alpha)^{k-R}\right] = (q + (1-q)u/\alpha)^k \approx (u/\alpha)^{k(1-q)}, \]
  which comes from observing $\lim_{q \to 0} (q + (1-q)u/\alpha)^k = x^k$ and $\lim_{q \to 1} (q + (1-q)u/\alpha)^k = 1$.
  Substituting \eqref{eqn:eq_k_nd_SExp__ECwcancel_2} in \eqref{eqn:eq_k_nd_SExp__ECwcancel_1} yields the cost expression.
\end{proof}

\subsection{Proof of Theorem~\ref{thm_k_cn_ET__EC}}
This Theorem assumes task execution times are i.i.d. $\mathrm{Pareto}(s, \alpha)$, and presents the latency and cost of executing a job of $k$ tasks by launching the job together with $c$ replicas for each task or $n-k$ coded tasks.
\begin{proof}[Proof Sketch]
  Let task execution times be distributed as random variable $X$.
  When job is launched by adding $c$ replicas for each task, lifetime of each task is given by $X_{c+1:1}$.
  Then latency follows as $\E[(X_{c+1:1})_{k:k}]$ and cost as $k \E[X_{c+1:1}]$. Closed form expressions for both come from the first principles of order statistics.
  
  When job is launched by adding $n-k$ coded tasks, latency follows as $\E[X_{n:k}]$ and cost as $\E\left[\sum_{i=1}^k X_{n:i} + (n-k)X_{n:k}\right]$. Again the closed form expressions for both come from the first principles of order statistics.
\end{proof}

\subsection{Proof of Theorem~\ref{thm_coding_vs_rep_less_latency_cost}}
This Theorem states that executing a distributed job with coded redundancy achieves lower cost and latency than executing it with replicated redundancy.
\begin{proof}
  Proof is simple and omitted here for the case with Shifted-Exponential task execution times.
  We here present the proof for the case with Pareto task execution times.
  
  % ### Compare E[T]'s ### %
  \vspace{0.2em}
  \noindent
  \textbf{Latency.}
  Let us denote the job execution time with $T_{(k, (c+1)k)}$ when $kc$ coded tasks are added into execution, and with $T_{(k, c)}$ when $c$ replicas are launched for each of its $k$ tasks.
  Firstly, we show
  \[ \E[T_{(k, (c+1)k)}] < \E[T_{(k, c)}]. \]
  Using the expressions given in Thm.~\ref{thm_k_cn_ET__EC},
  \begin{longaligned}
    & \E[T_{(k, (c+1)k)}]/\E[T_{(k, c)}] = \frac{\Gamma((c+1)k+1)}{\Gamma((c+1)k+1-1/\alpha)} \\
    &\qquad\qquad \times \frac{\Gamma(ck+1-1/\alpha)}{\Gamma(ck+1)} \frac{\Gamma(k+1-1/(c+1)\alpha)}{\Gamma(k+1)\Gamma(1-1/(c+1)\alpha)}.
  \end{longaligned}
  which goes to $1$ as $c \to \infty$. Then, if we could show that this ratio given above is monotonically increasing with $c$, we could conclude that $\E[T_{(k, (c+1)k)}] < \E[T_{(k, c)}]$ for all $c$.
  Weierstrass' definition of $\Gamma$ function \cite[eq.~5.8.2]{NIST:DLMF} gives
  \[ \ln(\Gamma(z)) = -\ln(z) - \gamma z + \sum_{i=1}^{\infty} \frac{z}{i} - \ln\left(1+\frac{z}{i}\right). \]
  We then take $\ln$ of the ratio given above and show that it is monotonically increasing with $c$. Note that multiplication with $\Gamma(k+1)$ does not affect the monotonicity and only serves us by simplifying the expressions.
  \begin{longaligned}[\label{eqn:eq_gamma_ratio1}]
    \ln(\Gamma(k+1) & \E[T_{(k, (c+1)k)}] / \E[T_{(k, c)}]) = \\
    & \gamma k + \sum_{i=1}^{\infty} \frac{k}{i} - \ln(f(0)) - \sum_{i=1}^{\infty} \ln(f(i)). \longalignedtag
  \end{longaligned}
  where
  \begin{longaligned}
    f(i) &= \left(\frac{1 + k/(ck+1)}{1 + k/(ck+1-1/\alpha)} \right) \left(1 + \frac{k}{1-1/(c+1)\alpha} \right).
  \end{longaligned}
  It is easy to see that $f(i)$ monotonically decreases with $c$ and the rate of reduction decreases with increasing $c$. Thus, $-\ln(f(0))$ and $-\ln(f(i))$ in eq.~\ref{eqn:eq_gamma_ratio1} monotonically increases with $c$.
  Overall, this proves that $\E[T_{(k, (c+1)k)}] / \E[T_{(k, c)}]$ monotonically increases with $c$ at a decreasing rate and goes to $1$ as $c \to \infty$.
  
  % ### Compare E[C]'s ### %
  \vspace{0.2em}
  \noindent
  \textbf{Cost.}
  Let us denote the cost of job execution with $C_{(k, (c+1)k)}$ when $kc$ coded tasks are added into execution, and with $C_{(k, c)}$ when $c$ replicas are launched for each of its $k$ tasks.
  Secondly, we show
  \[ \E[C_{(k, (c+1)k)}] < \E[C_{(k, c)}]. \]
  Using the expressions given in Thm.~\ref{thm_k_cn_ET__EC} and after some simple algebra, the inequality above corresponds to
  \[ \frac{\Gamma(n+1)}{\Gamma(n-k+1)} \frac{\Gamma(n-k+1-1/\alpha)}{\Gamma(n+1-1/\alpha)} > 1 + \frac{k}{\alpha n-k}, \]
  which can be written as
  \[ \frac{B(k, n-k+1-1/\alpha)}{B(k, n-k+1)} > 1 + \frac{k}{\alpha n-k}. \]
  Writing the right hand side of the inequality in terms of $B$ function, we get
  \begin{equation}
    \frac{B(k, n-k+1-1/\alpha) / B(k, n-k+1)}{B(k, \alpha n-k) / B(k, \alpha n-k+1)} > 1.
  \label{eqn:eq_ratio_of_betas_g_1}
  \end{equation}
  Expressions involve four $B$ functions, all of which having their first parameter being equal to $k$.
  For all $k \geq 1$, we have
  \[ \frac{1/\alpha}{n-k+1} > \frac{1}{\alpha n-k+1}. \]
  This means for \eqref{eqn:eq_ratio_of_betas_g_1} that the relative difference between the second parameters in numerator ($n-k+1-1/\alpha$, $n-k+1$) is greater than the relative difference between the second parameters ($\alpha n-k$, $\alpha n-k+1$) in denominator.
  This together with $\alpha n-k > n-k$ and the fact that $B(k, x)$ is a convex function decreasing with $k$ shows that \eqref{eqn:eq_ratio_of_betas_g_1} holds.
\end{proof}

\subsection{Proof of Corollary~\ref{cor_k_cn_Pareto_reduc_in_ET_for_baseline_EC}}
This Corollary presents the necessary and sufficient conditions on the tail index for replicated or coded redundancy to be able to reduce latency without exceeding the baseline cost of running with no redundancy.
\begin{proof}{}
  % ### (k, c) ### %
  \vspace{0.2em}
  \noindent
  \textbf{Replication.}
  We firstly consider adding $c$ replicas for each task.
  Latency is a decreasing function of $c$. Cost however decreases with $c$ up to a level, then it starts growing monotonically in $c$.
  Let us define $c_{\max}$ such that
  \[ \E[C_{c=c_{\max}}] < \E[C_{c=0}] < \E[C_{c=c_{\max}+1}]. \]
  Then by definition, $\E[T_{\min}] = \E[T_{c=c_{\max}}]$.
  We have the following inequality hold
  \begin{equation*}
    \E[C_c] < \E[C_{c=0}] \iff s k\left(\frac{(c+1)^2\alpha}{(c+1)\alpha-1} - \frac{\alpha}{\alpha-1}\right) < 0
  \end{equation*}
  if and only if $c < 1/(\alpha-1)-1$. Since $c$ is a non-negative integer, we write this condition as
  \[ c_{\max} = \max\Set{\floor*{1/(\alpha-1)}-1, 0}. \]
  
  % ### (k, n) ### %
  \vspace{0.2em}
  \noindent
  \textbf{Coding.}
  We secondly consider adding $n-k$ coded tasks into the execution.
  Cost expression in this case does not allow to find $n_{\max}$ such that
  \[ \E[C_{n=n_{\max}}] < \E[C_{n=k}] < \E[C_{n=n_{\max}+1}]. \]
  Instead we can express $\E[C]$ as a function of $\E[T]$ as
  \[ \E[C] = s\frac{n}{\alpha-1}\left(\alpha - \frac{n-k}{n s}\E[T]\right). \]
  We can then find an approximation for $\E[T_{\min}]$ directly by relating $\E[C_{n_{\max}}]$ to $\E[C_{n=k}]$. We have
  \[ \E[C_{n_{\max}}] < \E[C_{n=k}] \iff \E[T_{n_{\max}}] < s\alpha + \frac{\E[T_{n=k}]}{n_{\max}-k} \]
  from which \eqref{eqn:eq_k_n_Pareto_reduc_in_ET_for_base_EC} follows. Upper bound given in \eqref{eqn:eq_k_n_Pareto_reduc_in_ET_for_base_EC__ineq} follows from this by setting $n_{\max} = k+1$.
  
  We next find the necessary and sufficient conditions on the tail index. 
  The inequality 
  \[ \E[C_n] - \E[C_k] \leq 0 \]
  holds if and only if
  \[ \alpha \leq \frac{\Gamma(n+1)}{\Gamma(n+1-1/\alpha)} \frac{\Gamma(n-k+1-1/\alpha)}{\Gamma(n-k+1)}. \]
  Denoting the right hand side of the above inequality as $h(k, n, \alpha)$ and using Gautschi's inequality \cite[eq.~5.6.4]{NIST:DLMF}, we obtain
  \[ \left(\frac{n}{n-k+1}\right)^{1/\alpha} \leq h(k, n, \alpha) \leq \left(\frac{n+1}{n-k}\right)^{1/\alpha}. \]
  From the last two inequalities above, necessary and sufficient conditions on $\alpha$ given in \eqref{eqn:eq_k_n_neccond_on_a} and \eqref{eqn:eq_k_n_suffcond_on_a} follow.
\end{proof}

% ### When red changes the tail ### %
\subsection{Proof of Theorem~\ref{thm_suffcond_ETgainpain}}
% \subsection{Proofs for the claims in Sec.~\ref{sec:sec_when_red_changes_tail}}
This Theorem presents necessary and/or sufficient conditions on the growth of the tail heaviness of task execution times in order to be able to reduce latency by employing a higher job expansion rate with replication or coding.
\begin{proof}
  %% Coding
  \vspace{0.2em}
  \noindent
  \textbf{Coding.}
  Firstly, we consider expanding jobs with \textit{coded} tasks at a rate of $r_i$, in which case task execution times are distributed as $\mathrm{Pareto}(s, \alpha_i)$.
  Then, latency of executing a job of $k$ tasks is given as
  \[ \E[T_i] = s\frac{\Gamma(n_i+1)}{\Gamma(n_i+1-1/\alpha_i)}\frac{\Gamma(n_i-k+1-1/\alpha_i)}{\Gamma(n_i-k+1)}. \]
  Given $\alpha_i > 1$ and using Gautschi's inequality \cite{GammaIneqs:Gautschi59}, we obtain the following inequalities
  \begin{equation*}
  \begin{split}
    n_i^{1/\alpha_i} < &\frac{\Gamma(n_i+1)}{\Gamma(n_i+1-1/\alpha_i)} < (n_i+1)^{1/\alpha_i}, \\
    (n_i-k)^{1/\alpha_i} < &\frac{\Gamma(n_i-k+1)}{\Gamma(n_i-k+1-1/\alpha_i)} < (n_i-k+1)^{1/\alpha_i}.
  \end{split}
  \end{equation*}
  Using these, we obtain
  \begin{equation*}
    s\left(\frac{n_i}{n_i-k+1}\right)^{1/\alpha_i} < \E[T_i] < s\left(\frac{n_i+1}{n_i-k}\right)^{1/\alpha_i}.
  \end{equation*}
  Given $r_j > r_i$, it is then straightforward to derive the following upper and lower bounds
  \begin{equation}
    \frac{\E[T_j]}{\E[T_i]} < \left(\frac{n_j+1}{n_j-k}\right)^{1/\alpha_j} \left(\frac{n_i}{n_i-k+1}\right)^{-1/\alpha_i},
  \label{eq:eq_Tj_over_Ti_ub}
  \end{equation}
  \begin{equation}
    \frac{\E[T_j]}{\E[T_i]} > \left(\frac{n_j}{n_j-k+1}\right)^{1/\alpha_j} \left(\frac{n_i+1}{n_i-k}\right)^{-1/\alpha_i}.
  \label{eq:eq_Tj_over_Ti_lb}
  \end{equation}
  Sufficient condition \eqref{eq:eq_suffcond_ETred_coding} to yield a reduction in latency is obtained using the upper bound \eqref{eq:eq_Tj_over_Ti_ub} as
  \begin{equation*}
  \begin{split}
    & \left(\frac{n_j+1}{n_j-k}\right)^{1/\alpha_j} \left(\frac{n_i}{n_i-k+1}\right)^{-1/\alpha_i} \leq 1 \\
    & \iff \frac{\alpha_i}{\alpha_j} \leq \log\left(\frac{n_i}{n_i-k+1}\right)/\log\left(\frac{n_j+1}{n_j-k}\right).
  \end{split}
  \end{equation*}
  Similarly, sufficient condition \eqref{eq:eq_necessandsuffcond_ETred_rep} to incur an increase in latency is obtained using the lower bound \eqref{eq:eq_Tj_over_Ti_lb} as
  \begin{equation*}
  \begin{split}
    & (\frac{n_j}{n_j-k+1})^{1/\alpha_j} \left(\frac{n_i+1}{n_i-k}\right)^{-1/\alpha_i} \geq 1 \\
    & \iff \frac{\alpha_i}{\alpha_j} \geq \log\left(\frac{n_i+1}{n_i-k}\right)/\log\left(\frac{n_j}{n_j-k+1}\right).
  \end{split}
  \end{equation*}
  
  %% Replication
  \vspace{0.2em}
  \noindent
  \textbf{Replication.}
  Secondly, we consider expanding jobs with \textit{replica} tasks at a rate of $r_i$, in which case task execution times are distributed as $\mathrm{Pareto}(s, \alpha_i)$.
  Then, latency of executing a job of $k$ tasks is given as
  \[ \E[T_i] = s k! \frac{\Gamma(x_i)}{\Gamma(k + x_i)} \]
  where $x_i = 1 - 1/(r_i \alpha_i)$.
  Then, increasing the expansion rate from $r_i$ to $r_j$, latency is reduced (i.e., $\E[T_j]/\E[T_i] < 1$) if and only if
  \[ \frac{\Gamma(x_j)}{\Gamma(k + x_j)} \frac{\Gamma(k + x_i)}{\Gamma(x_i)} < 1, \]
  which is equivalent to
  \[ B(k, x_j) / B(k, x_i) < 1. \]
  Beta $B(k, x)$ function is monotonically decreasing in $x > 0$, hence the above inequality is equivalent to $x_i < x_j$, from which the necessary and sufficient condition given in \eqref{eq:eq_necessandsuffcond_ETred_rep} follows.
\end{proof}

% ### Straggler relaunch ### %
\subsection{Proof of Theorem~\ref{thm_k_wrelaunch_T_C}}
This Theorem assumes task execution times are i.i.d. $\mathrm{Pareto}(s, \alpha)$, and presents the latency and cost of executing a job of $k$ tasks by relaunching remaining tasks after waiting some time $\Delta$.
\begin{proof}[Sketch]
  Defining random variable $R$ as the number of tasks completed before $\Delta$, $R \sim \mathrm{Binomial}(k, q)$ where $q = \mathbbm{1}(\Delta > \lambda)(1 - (\frac{\lambda}{\Delta})^{\alpha})$.
  Derivations of the distribution, latency and cost of job execution follow from the law of total probability or expectation by conditioning on $R$.
  We omit the details here because the results presented in this Theorem are special cases of those given in Thm.~\ref{thm_k_cnd_wrelaunch_ET__EC}. We refer the reader to the proof of Thm.~\ref{thm_k_cnd_wrelaunch_ET__EC} for detailed derivations.
\end{proof}

\subsection{Proof of Lemma~\ref{lm_k_wrelaunch_ET__opt_d_suff_a}}
This Lemma presents sufficient conditions for straggler relaunch to reduce cost and latency of distributed job execution, and an asymptotically exact approximate for optimal relaunch time.
\begin{proof}
  % ### \Delta^* ### %
%   Notice that as $k \to \infty$, $q^k \to 0$ and $I(1-q, 1-1/\alpha, k) \to 1$.
%   \[ \text{as~} k \to \infty, \quad q^k \to 0 \text{~and~} I(1-q, 1-1/\alpha, k) \to 1, \]
%   Thus, as $k \to \infty$, $\E[T] \to \Delta - g(k, \alpha)s/\Delta$.
%   \[ \text{as~} k \to \infty, \quad \E[T] \to \Delta - g(k, \alpha)s/\Delta. \]
  As $k \to \infty$, $q^k \to 0$ and $I(1-q, 1-1/\alpha, k) \to 1$,
  \[ \implies \E[T] \to \Delta - g(k, \alpha)s/\Delta. \]
  Approximate $\Delta^*$ given in \eqref{eqn:eq_k_wrelaunch_approx_optd} comes from firstly taking the limiting value of $\E[T]$ given above as an approximation for the latency, then solving for $\Delta$ by setting the first order derivative of approximate latency with respect to $\Delta$ to zero.
  
  % ### p^* ### %
  Approximation for $p^*$ follows by observing that the number of tasks that complete before $\Delta^*$ is $R \sim \mathrm{Binomial}(k, q^*)$ where $q^* = 1 - (s/\Delta^*)^{\alpha}$. Average fraction of the tasks that are relaunched follows by writing out $p^* = 1 - q^* = (s/\Delta^*)^{\alpha}$, and then substituting the approximate $\Delta^*$ found above.
  
  % ### \alpha_max ### %
  Next, we show the sufficient condition \eqref{eqn:eq_k_wrelaunch_suffcond} for straggler relaunch to be able to reduce the cost and latency.
  We know
  \[ \E[T_{no rel}] = s k!\Gamma(1-1/\alpha)/\Gamma(k+1-1/\alpha). \]
  Then we have
  \begin{equation*}
  \begin{split}
    \E[T - T_{no rel}] &= \Delta(1-q^k) \\
    &\quad - (1-s/\Delta) \E[T_{no rel}]I(1-q; 1-1/\alpha, k), % \\
    % &\approx \Delta - (1-s/\Delta) \E[T_{no rel}] ~\text{ as } k \to \infty.
  \end{split}
  \end{equation*}
  which is equal to $\Delta - (1-s/\Delta) \E[T_{no rel}]$ in the limit $k \to \infty$.
  Using this limit value as an approximation,
  \begin{equation}
    \E[T - T_{no rel}] \lesssim 0 \iff \E[T_{no rel}] \gtrsim \Delta/(1 - s/\Delta).
  \label{eq:eq_ET_ETnorel}
  \end{equation}
%   \[ \E[T - T_{no rel}] \lesssim 0 \iff \E[T_{no rel}] \gtrsim \Delta/(1 - s/\Delta). \]
  % A condition for $E[T - T_{no rel}]$ to be negative can be written as
  Maximum value of the right hand side in the above inequality is $4s$ (obtained when $\Delta = 2s$), which gives us \eqref{eqn:eq_k_wrelaunch_suffcond}.
  Substituting the exact expression for $\E[T_{no rel}]$ in \eqref{eq:eq_ET_ETnorel},
  \[ s \Gamma(1-1/\alpha)\Gamma(k+1)/\Gamma(k+1-1/\alpha) > 4s, \]
  which holds if the following looser inequality holds
  \[ \Gamma(k+1)/\Gamma(k+1-1/\alpha) > 4. \]
  Lower bounding the left hand side using Gautschi's inequality \cite[eq.~5.6.4]{NIST:DLMF} yields an even looser sufficient condition as
  \[ k^{1/\alpha} > 4, \]
  which implies \eqref{eqn:eq_k_wrelaunch_suffcond_a}.
\end{proof}

\subsection{Proof of Theorem~\ref{thm_k_cn_wrelaunch_ET}}
This Theorem assumes task execution times are i.i.d. $\mathrm{Pareto}(s, \alpha)$, and presents the latency of executing a job of $k$ tasks by launching the job together with $c$ replicas for each task or $n-k$ coded tasks and relaunching all remaining tasks (initial or redundant) after waiting some time $\Delta$.
\begin{proof}
  \vspace{0.2em}
  \noindent
  \textbf{Case $\mathbf{\Delta \leq s}$.}
  When $\Delta \leq s$, relaunching takes place before the minimum task completion time $s$. Since none of the tasks can complete before $s$, relaunching will cause a complete restart of all the tasks (initial and redundant).
  Therefore, regardless of the type of redundant tasks used, the latency will be $\Delta$ plus the latency of the case with no relaunch as given in Thm.~\ref{thm_k_cn_ET__EC}.
  In the remainder of the proof, we consider $\Delta > s$.
  
  % ### Replication ### %
  \vspace{0.2em}
  \noindent
  \textbf{Replication, $\mathbf{\Delta > s}$.}
  When $c$ replicas are added for each of the $k$ tasks, a task will complete as soon as the fastest of its $c+1$ copies completes. Thus, execution time of each task is distributed as $X_{c+1:1} \sim \mathrm{Pareto}(s, (c+1)\alpha)$.
  Then, latency (given in \eqref{eqn:eq_k_c_wrelaunch_ET}) follows from replacing each $\alpha$ in the latency expression given in \eqref{eqn:eq_k_wrelaunch_tail__ET} with $(c+1)\alpha$.
  
  % ### Coding ### %
  \vspace{0.2em}
  \noindent
  \textbf{Coding, $\mathbf{\Delta > s}$.}
  Next, we consider launching the job by adding $n-k$ coded tasks.
  Let us denote the number of tasks completed before time $\Delta$ as $R \sim \mathrm{Binomial}(k, q)$ where $q = \mathbbm{1}(\Delta > s)(1 - (s/\Delta)^{\alpha})$. By the law of total expectation,
  \begin{longaligned}[\label{eqn:eq_k_n_wrelaunch_ET_totalexpec}]
    \E[T] &= \E[T \mid T \leq \Delta]\Pr\{T \leq \Delta\} + \E[T \mid T > \Delta]\Pr\{T > \Delta\} \\
    &= \E[X_{n:k} \mid R \geq k]\Pr\{R \geq k\} \\
      &\quad + \E[\Delta + X_{n-R:k-R} \mid R < k]\Pr\{R < k\}. \longalignedtag
  \end{longaligned}
  
  Let $\E[T_{norel}]$ denote the latency for the case with no straggler relaunch. We know
  \[ \E[T_{norel}] = s \frac{n!}{(n-k)!}\frac{\Gamma(n-k+1-1/\alpha)}{\Gamma(n+1-1/\alpha)} \]
  Also, by the law of total expectation,
  \begin{longaligned}[\label{eqn:eq_k_n_wrelaunch_ET_totalexpec_2}]
    \E[T_{norel}] &= \E[T_{norel} \mid T_{norel} \leq \Delta]\Pr\{T_{norel} \leq \Delta\} \\
    &\quad + \E[T_{norel} \mid T_{norel} > \Delta]\Pr\{T_{norel} > \Delta\} \\
    &= \E[X_{n:k} \mid R \geq k]\Pr\{R \geq k\} \\
    &\quad + \E[(X \mid X > \Delta)_{n-R:k-R} \mid R < k]\Pr\{R < k\}. \longalignedtag
  \end{longaligned}
  Taking the difference between \eqref{eqn:eq_k_n_wrelaunch_ET_totalexpec} and \eqref{eqn:eq_k_n_wrelaunch_ET_totalexpec_2}
  \begin{longaligned}[\label{eqn:eq_k_n_wrelaunch_ET_diff}]
    & \E[T] - \E[T_{norel}] \\
    &= \Bigl(\E[\Delta + X_{n-R:k-R} \mid R < k] \\
      &\quad - \E[(X \mid X > \Delta)_{n-R:k-R} \mid R < k] \Bigr)\Pr\{R < k\} \\
    &= \Delta \Pr\{R \leq k-1\} \\
      &\quad + \sum_{r=0}^{k-1} \left(\E[X_{n-r:k-r}] - \E[(X \mid X > \Delta)_{n-r:k-r}]\right)\Pr\{R=r\} \\
    &\stackrel{(a)}{=} \Delta I(1-q; n-k+1, k) \\
      &\quad + (s-\Delta)\frac{\Gamma(n-k+1-1/\alpha)}{\Gamma(n-k+1)} \\
      &\qquad \times \sum_{r=0}^{k-1} \frac{\Gamma(n-r+1)}{\Gamma(n-r+1-1/\alpha)} \Pr\{R=r\}. \longalignedtag
  \end{longaligned}
  where $(a)$ follows from observing that $(X \mid X > \Delta) \sim \mathrm{Pareto}(\Delta, \alpha)$ and using the mean of Pareto order statistics.
  \begin{longaligned}
    & \sum_{r=0}^{k-1} \frac{\Gamma(n-r+1)}{\Gamma(n-r+1-1/\alpha)} \binom{n}{r} q^r (1-q)^{n-r} \\
    &= \sum_{r=0}^{k-1} \frac{\Gamma(n+1)}{\Gamma(n-r+1-1/\alpha)\Gamma(r+1)} q^r (1-q)^{n-r} \\
    &= \frac{\Gamma(n+1)}{\Gamma(n+1-1/\alpha)}(1-q)^{1/\alpha} \\
      &\quad \times \sum_{r=0}^{k-1} \binom{n-1/\alpha}{r} q^r (1-q)^{n-r-1/\alpha} \\
    &= \frac{s}{\Delta}\frac{\Gamma(n+1)}{\Gamma(n+1-1/\alpha)} I(1-q; n-k+1-1/\alpha, k),
  \end{longaligned}
  substituting which in \eqref{eqn:eq_k_n_wrelaunch_ET_diff} gives
  \begin{longaligned}
    \E[T] &- \E[T_{norel}] = \Delta I(1-q; n-k+1, k) \\
      &\quad + (s-\Delta)\frac{\Gamma(n-k+1-1/\alpha)}{\Gamma(n-k+1)} \\
      &\quad \times \frac{s}{\Delta}\frac{\Gamma(n+1)}{\Gamma(n+1-1/\alpha)} I(1-q; n-k+1-1/\alpha, k) \\
    &= \Delta I(1-q; n-k+1, k) \\
      &\quad + \left(s/\Delta - 1\right)I(1-q; n-k+1-1/\alpha, k)\E[T_{norel}].
  \end{longaligned}
  from which \eqref{eqn:eq_k_n_wrelaunch_ET} follows.
\end{proof}

\subsection{Proof of Theorem~\ref{thm_k_cnd_wrelaunch_ET__EC}}
This Theorem presents the latency and cost for job execution when replicated or coded redundancy is introduced at time $\Delta$ together with straggler relaunch.
\begin{proof}
  % ### Reped red ### %
  \vspace{0.2em}
  \noindent
  \textbf{Replication.}
  Firstly, consider adding $c$ replicas for each remaining task together with performing straggler relaunch after waiting some time $\Delta$.
  
  % ### E[T] ### %
  We at first derive $\E[T]$.
  Let $\E[T_{wo}]$ be the latency of job execution when straggler relaunch is performed at time $\Delta$ with no redundancy employed.
  By the law of total expectation,
  \begin{longaligned}
    \E[T_{wo}] &= \E[T_{wo} \mid R=k]\Pr\{R=k\} \\
    &\quad + \sum_{r=0}^{k-1} \E[T_{wo} \mid R=r]\Pr\{R=r\} \\
    &= \E[X_{k:k} \mid R=k]\Pr\{R=k\} \\
    &\quad + \sum_{r=0}^{k-1} \E[\Delta + X_{k-r:k-r}]\Pr\{R=r\}.
  \end{longaligned}
  Repeating the same for $\E[T]$, we get
  \begin{longaligned}
    \E[T] &= \E[T \mid R=k]\Pr\{R=k\} + \sum_{r=0}^{k-1} \E[T \mid R=r]\Pr\{R=r\} \\
    &= \E[X_{k:k} \mid R=k]\Pr\{R=k\} \\
    &\quad + \sum_{r=0}^{k-1} \E\left[\Delta + (X_{(c+1):1})_{k-r:k-r}\right]\Pr\{R=r\}.
  \end{longaligned}
  where $X_{(c+1):1} \sim \mathrm{Pareto}(s, (c+1)\alpha)$. Taking the difference between these two,
  \begin{longaligned}
    & \E[T] - \E[T_{wo}] \\
    &= \sum_{r=0}^{k-1} \E[(X_{(c+1):1})_{k-r:k-r} - X_{k-r:k-r}] \Pr\{R=r\} \\
    &= \E\left[\E[(X_{(c+1):1})_{k-R:k-R} - X_{k-R:k-R} \mid R]\right] \\
    &= s\Gamma(1-1/(c+1)\alpha)\E\left[\frac{\Gamma(k-R+1)}{\Gamma(k-R+1-1/(c+1)\alpha)}\right] \\
      &\quad - s\Gamma(1-1/\alpha)\E\left[\frac{\Gamma(k-R+1)}{\Gamma(k-R+1-1/\alpha)}\right] \\
    &= s\frac{\Gamma(1-1/(c+1)\alpha)}{\Gamma(-1/(c+1)\alpha)}\E\left[\frac{\Gamma(k-R+1)\Gamma(-1/(c+1)\alpha)}{\Gamma(k-R+1-1/(c+1)\alpha)}\right] \\
      &\quad - s\frac{\Gamma(1-1/\alpha)}{\Gamma(-1/\alpha)}\E\left[\frac{\Gamma(k-R+1)\Gamma(-1/\alpha)}{\Gamma(k-R+1-1/\alpha)}\right] \\
    &= s\frac{\Gamma(1-1/(c+1)\alpha)}{\Gamma(-1/(c+1)\alpha)}\E[B(k-R+1, -1/(c+1)\alpha)] \\
      &\quad - s\frac{\Gamma(1-1/\alpha)}{\Gamma(-1/\alpha)}\E[B(k-R+1, -1/\alpha)] \\
    &\stackrel{(a)}{\approx} s\frac{\Gamma(1-1/(c+1)\alpha)}{\Gamma(-1/(c+1)\alpha)}B(k-kq+1, -1/(c+1)\alpha) \\
      &\quad - s\frac{\Gamma(1-1/\alpha)}{\Gamma(-1/\alpha)}B(k-kq+1, -1/\alpha).
  \end{longaligned}
  where $(a)$ follows from Lemma \ref{lm_approx_bin_mean_of_harmonic__ibeta}. Using this difference and the expression for $\E[T_{wo}]$ given previously in \eqref{eqn:eq_k_wrelaunch_tail__ET}, $\E[T]$ follows as given in \eqref{eqn:eq_k_cd_wrelaunch_ET}.
  
  % ### E[C] ### %
  We next derive cost without task cancellation.
  Let $R$ be the number of tasks completed before time $\Delta$, then $R \sim \mathrm{Binomial}(k, q)$ where $q = \mathbbm{1}(\Delta > s)(1 - (s/\Delta)^{\alpha})$. By iterated expectation,
  \begin{longaligned}
    & \E[C] = \E\left[\E[C \mid R]\right] \\
    &= \E\Bigl[\E\Bigl[\sum_{i=1}^R (X_i \mid X_i \leq \Delta) + (k-R)\Delta \\
      &\qquad\quad + \sum_{i=1}^{k-R} (c+1)X_{k-R:i} \mid R\Bigr]\Bigr] \\
    &= \E\Bigl[R\; \E[X_i \mid X_i \leq \Delta] + (k-R)\Delta \\
      &\qquad\quad + (c+1)(k-R)\E[X]\Bigr] \\
    &= kq\;\E[X_i \mid X_i \leq \Delta] + k(1-q)\Delta + k(c+1)(1-q)\E[X] \\
    &\stackrel{(a)}{=} \frac{k\alpha}{(\alpha-1)}(s - \Delta(1-q)) \\
      &\quad + k(1-q)\Delta + k(c+1)(1-q)s\frac{\alpha}{\alpha-1}.
  \end{longaligned}
  where $(a)$ comes from $\E[X \mid X \leq \Delta] = \frac{\alpha}{q(\alpha-1)}(s - \Delta(1-q))$ and $\E[X] = s\alpha/(\alpha-1)$.
  
  % ### E[C^c] ### %
  We next derive cost with task cancellation. By iterated expectation,
  \begin{longaligned}
    \E[C^c] &= \E\left[\E[C^c \mid R]\right] \\
    &= \E\Bigl[\E\Bigl[\sum_{i=1}^R (X_i \mid X_i \leq \Delta) + (k-R)\Delta \\
      &\qquad\quad~ + \sum_{i=1}^{k-R} (c+1)(X_{c+1:1})_{k-R:i} \mid R\Bigr]\Bigr] \\
    &= \E\bigl[R\; \E[X \mid X \leq \Delta] + (k-R)\Delta \\
    &\qquad~ + (c+1)(k-R)\E[X_{c+1:1}]\bigr] \\
    &= kq\;\E[X \mid X \leq \Delta] + k(1-q)\Delta \\
    &\quad~ + k(c+1)(1-q)\E[X_{c+1:1}] \\
    % &\stackrel{(a)}{=} \frac{k\alpha}{(\alpha-1)}(s - \Delta(1-q)) \\
    &= \frac{k\alpha}{(\alpha-1)}(s - \Delta(1-q)) + k(1-q)\Delta \\
      &\quad + k(c+1)(1-q)s\frac{(c+1)\alpha}{(c+1)\alpha-1}.
  \end{longaligned}
  where $(a)$ comes from
  \begin{equation*}
  \begin{split}
    \E[X \mid X \leq \Delta] &= \frac{\alpha}{q(\alpha-1)}(s - \Delta(1-q)), \\
    \E[X_{c+1:1}] &= s\frac{(c+1)\alpha}{(c+1)\alpha-1}.
  \end{split}
  \end{equation*}
  
  % ### Coded red ### %
  \vspace{0.2em}
  \noindent
  \textbf{Coding.}
  Secondly, we consider adding $n-k$ coded tasks into the job execution together with performing straggler relaunch after waiting some time $\Delta$.
  
  % ### E[T] ### %
  We at first derive $\E[T]$.
  When $\Delta \leq s$, since no task can complete before time $\Delta \leq s$, all $k$ tasks that are initially launched at time $0$ will be relaunched at time $\Delta$ together with the newly launched $n-k$ coded tasks. Equivalently, we think of the system performing the following: wait for $\Delta$ amount of time, then launch $n$ tasks, among which any $k$ is sufficient for the job completion. Therefore, job completion time is simply $\Delta + \E[X_{n:k}]$. Then, the mean of Pareto order statistics yields the expression given in \eqref{eqn:eq_k_nd_wrelaunch_ET} for $\Delta \leq s$.
  
  When $\Delta > s$, using the law of total expectation and denoting the number of tasks completed before time $\Delta$ as $R \sim \mathrm{Binomial}(k, q)$ for $q = \mathbbm{1}(\Delta > s)(1 - (s/\Delta)^{\alpha})$, we have
  \begin{longaligned}[\label{eqn:eq_k_nd_wrelaunch_ET_totalexpec}]
    \E[T] &= \E[T \mid T \leq \Delta]\Pr\{T \leq \Delta\} + \E[T \mid T > \Delta]\Pr\{T > \Delta\} \\
    &= \E[X_{k:k} \mid R = k]\Pr\{R = k\} \longalignedtag \\
      &\quad + \E[\Delta + X_{n-R:k-R} \mid R < k]\Pr\{R < k\}.
  \end{longaligned}
  
  Let $\E[T_{wo}]$ be the latency when only straggler relaunch is performed without adding any redundant task.
  Exact expression of $\E[T_{wo}]$ is given in \eqref{eqn:eq_k_wrelaunch_tail__ET}. By the law of total expectation,
  \begin{longaligned}[\label{eqn:eq_k_nd_wrelaunch_ET_totalexpec_2}]
    \E[T_{wo}] &= \E[T_{wo} \mid T_{wo} \leq \Delta]\Pr\{T_{wo} \leq \Delta\} \\
      &\quad + \E[T_{wo} \mid T_{wo} > \Delta]\Pr\{T_{wo} > \Delta\} \\
    &= \E[X_{k:k} \mid R = k]\Pr\{R = k\} \longalignedtag \\
      &\quad + \E[\Delta + X_{k-R:k-R} \mid R < k]\Pr\{R < k\}.
  \end{longaligned}
  
  Taking the difference between \eqref{eqn:eq_k_nd_wrelaunch_ET_totalexpec} and \eqref{eqn:eq_k_nd_wrelaunch_ET_totalexpec_2},
  \begin{longaligned}[\label{eqn:eq_k_nd_wrelaunch_ET_diff}]
    & \E[T] - \E[T_{wo}] = \Bigl(\E[X_{n-R:k-R} \mid R < k] \\
      &\qquad\qquad\qquad\qquad - \E[X_{k-R:k-R} \mid R < k]\Bigr)\Pr\{R < k\} \\
    &= \sum_{r=0}^{k-1} \left(\E[X_{n-r:k-r}] - \E[X_{k-r:k-r}]\right)\Pr\{R=r\} \\
    &\stackrel{(a)}{=} \sum_{r=0}^{k-1} \Bigl(s\frac{\Gamma(n-k+1-1/\alpha)}{(n-k)!}\frac{(n-r)!}{\Gamma(n-r+1-1/\alpha)} \\
      &\qquad\qquad - s\Gamma(1-1/\alpha)\frac{(k-r)!}{\Gamma(k-r+1-1/\alpha)}\Bigr)\Pr\{R=r\} \\
    &= s\;\frac{\Gamma(n-k+1-1/\alpha)}{(n-k)!}\E\left[\frac{(n-R)!}{\Gamma(n-R+1-1/\alpha)}\right] \longalignedtag \\
      &\quad - s\Gamma(1-1/\alpha)\E\left[\frac{(k-R)!}{\Gamma(k-R+1-1/\alpha)}\right],
  \end{longaligned}
  where $(a)$ comes from the mean of Pareto order statistics and observing that $\E[X_{n-r:k-r}] - \E[X_{k-r:k-r}] = 0$ for $r = k$. Then,
  \begin{longaligned}
    \E&\left[\frac{(k-R)!}{\Gamma(k-R+1-1/\alpha)}\right] \\
    &= \sum_{r=0}^k \frac{(k-r)!}{(k-r-1/\alpha)!} \binom{k}{r} q^r(1-q)^{k-r} \\
    &= \frac{q^k}{\Gamma(1-1/\alpha)} + \sum_{r=0}^{k-1} \frac{k!}{(k-r-1/\alpha)!r!} q^r(1-q)^{k-r} \\
    &= \frac{q^k}{\Gamma(1-1/\alpha)} \\
      &\quad + \frac{k!}{\Gamma(k+1-1/\alpha)}\frac{s}{\Delta} \sum_{r=0}^{k-1} \binom{k-1/\alpha}{r} q^r(1-q)^{k-r-1/\alpha} \\
    &= \frac{q^k}{\Gamma(1-1/\alpha)} + \frac{k!}{\Gamma(k+1-1/\alpha)}\frac{s}{\Delta} I(1-q;1-1/\alpha,k).
  \end{longaligned}
  Substituting this in \eqref{eqn:eq_k_nd_wrelaunch_ET_diff} gives us
  \begin{longaligned}
    \E[T] &- \E[T_{wo}] \\
    &= s\frac{\Gamma(n-k+1-1/\alpha)}{(n-k)!}\E\left[\frac{(n-R)!}{\Gamma(n-R+1-1/\alpha)}\right] \\
      &\quad - s\Gamma(1-1/\alpha)\E\left[\frac{(k-R)!}{\Gamma(k-R+1-1/\alpha)}\right] \\
    &= s\frac{\Gamma(n-k+1-1/\alpha)}{(n-k)!}\E\left[\frac{(n-R)!}{\Gamma(n-R+1-1/\alpha)}\right] \\
      &\quad - s q^k - k\frac{s^2}{\Delta}\frac{\Gamma(1-1/\alpha)\Gamma(k)}{\Gamma(k+1-1/\alpha)}I(1-q;1-1/\alpha,k) \\
    &= s\frac{\Gamma(n-k+1-1/\alpha)}{(n-k)!}\E\left[\frac{(n-R)!}{\Gamma(n-R+1-1/\alpha)}\right] \\
      &\quad - s q^k - k\frac{s^2}{\Delta}B(1-1/\alpha,k)I(1-q;1-1/\alpha,k).
  \end{longaligned}
  Then, using the expression of $\E[T_{wo}]$ given in \eqref{eqn:eq_k_wrelaunch_tail__ET},
  \begin{longaligned}[\label{eqn:eq_k_nd_wrelaunch_ET_almost}]
    & \E[T] = s\frac{\Gamma(n-k+1-1/\alpha)}{(n-k)!}\E\left[\frac{(n-R)!}{\Gamma(n-R+1-1/\alpha)}\right] \\
      &- s q^k - k\frac{s^2}{\Delta}B(1-1/\alpha,k)I(1-q;1-1/\alpha,k) + \Delta(1-q^k) \\
      &+ ks B(1-1/\alpha,k)\left(1 - \left(1-\frac{s}{\Delta}\right)I(1-q;1-1/\alpha,k)\right) \\
    &= s\;\frac{\Gamma(n-k+1-1/\alpha)}{(n-k)!}\E\left[\frac{(n-R)!}{\Gamma(n-R+1-1/\alpha)}\right] \longalignedtag \\
      &\quad - s q^k + \Delta(1-q^k) + ks B(q;k,1-1/\alpha).
  \end{longaligned}
  Although the equality $B(x,y) = \Gamma(x)\Gamma(y)/\Gamma(x+y)$ can be shown with the integral evaluation of $B$ and $\Gamma$ functions only for $Re\{x\}, Re\{y\} > 0$, {\it analytic continuation} of these functions allows us do the following manipulation
  \begin{longaligned}
    & \frac{\Gamma(n-k+1-1/\alpha)}{(n-k)!}\E\left[\frac{(n-R)!}{\Gamma(n-R+1-1/\alpha)}\right] \\
    &= \frac{\Gamma(n-k+1-1/\alpha)}{\Gamma(n-k+1)\Gamma(-1/\alpha)}\E\left[\frac{\Gamma(n-R+1)\Gamma(-1/\alpha)}{\Gamma(n-R+1-1/\alpha)}\right] \\
    &= \frac{\E[B(n-R+1,-1/\alpha)]}{B(n-k+1,-1/\alpha)} \stackrel{(a)}{\approx} \frac{B(n-kq+1,-1/\alpha)}{B(n-k+1,-1/\alpha)}.
  \end{longaligned}
  where $(a)$ follows from Lemma \ref{lm_approx_bin_mean_of_harmonic__ibeta}. Substituting which in \eqref{eqn:eq_k_nd_wrelaunch_ET_almost} gives us
  \begin{longaligned}
    &\E[T] = \Delta(1-q^k) \\
    &\quad + s\left(\frac{B(n-kq+1,-1/\alpha)}{B(n-k+1,-1/\alpha)} + kB(q;k,1-1/\alpha) - q^k\right).
  \end{longaligned}
  
  % ### E[C] ### %
  Next, we derive cost without task cancellation.
  By the law of total expectation,
  \begin{longaligned}
    \E[C] &= \E[C \mid R=k]\Pr\{R=k\} \\
      &\quad + \sum_{r=0}^{k-1} \E[C \mid R=r]\Pr\{R=r\} \\
    &= \E\left[\sum_{i=1}^k X_i \mid X_i \leq \Delta\right]\Pr\{R=k\} \\
      &\quad + \sum_{r=0}^{k-1} \E\Bigl[\sum_{i=1}^r (X_i \mid X_i \leq \Delta) + (k-r)\Delta \\
      &\qquad\qquad\quad + \sum_{i=1}^{n-r} X_{n-r:i}\Bigr]\Pr\{R=r\} \\
    &= \left(\sum_{i=1}^k \E[X \mid X \leq \Delta]\right)\Pr\{R=k\} \\
      &\quad + \sum_{r=0}^k \E\Bigl[\sum_{i=1}^r (X_i \mid X_i \leq \Delta) + (k-r)\Delta \\
      &\qquad\qquad\quad + \sum_{i=1}^{n-r} X_{n-r:i}\Bigr]\Pr\{R=r\} \\
      &\quad - \E\left[\sum_{i=1}^k (X_i \mid X_i \leq \Delta) + \sum_{i=1}^{n-k} X_{n-k:i}\right]\Pr\{R=k\} \\
    &= \E\left[\E\left[\sum_{i=1}^R (X_i \mid X_i \leq \Delta) + (k-R)\Delta + \sum_{i=1}^{n-R} X_{n-R:i}\right]\right] \\
      &\quad - \E\left[\sum_{i=1}^{n-k} X_{n-k:i}\right]\Pr\{R=k\} \\
    &= \E\left[R\;\E[X \mid X \leq \Delta] + (k-R)\Delta + (n-R)\E[X]\right] \\
      &\quad - (n-k)\E[X] q^k \\
    &= kq\;\E[X \mid X \leq \Delta] + (k-kq)\Delta + (n-kq)\E[X] \\
      &\quad - (n-k)\E[X] q^k \\
    &\stackrel{(a)}{=} \frac{k\alpha}{\alpha-1}(s - \Delta(1-q)) + (k-kq)\Delta + (n-kq)\frac{s\alpha}{\alpha-1} \\
      &\quad - (n-k)q^k\frac{s\alpha}{\alpha-1} \\
    &= \frac{\alpha}{\alpha-1}\left(ks - k\Delta(1-q) + ns - ks q - ns q^k + ks q^k\right) \\
      &\quad + k(1-q)\Delta \\
    &= \frac{\alpha}{\alpha-1}\left(ks(1-q+q^k) + ns(1-q^k)\right) - \frac{k\Delta(1-q)}{\alpha-1}.
  \end{longaligned}
  where $(a)$ comes from substituting the following
  \begin{longaligned}
    & \E[X \mid X \leq \Delta] = \int_{s}^{\Delta} x\Pr\{X=x \mid X \leq \Delta\}dx \\
    &= \int_{s}^{\Delta} x \frac{\Pr\{X=x\}}{\Pr\{X \leq \Delta\}} dx = \frac{\alpha s^{\alpha}}{q} \int_{s}^{\Delta} x^{-\alpha}dx \\
    &= \frac{s^{\alpha}}{q}\frac{\alpha}{\alpha-1}\left(s^{1-\alpha} - \Delta^{1-\alpha}\right) = \frac{\alpha}{q(\alpha-1)}\left(s - \Delta(1-q)\right),
  \end{longaligned}
  where $\Pr\{X = x\}$ denotes the PDF of $X$.
  
  % ### E[C^c] ### %
  Next, we derive cost with task cancellation.
  By the law of total expectation,
  \begin{longaligned}[\label{eqn:eq_k_nd_wrelaunch_ECwcancel_lawoftotal}]
    & \E[C^c] = \E[C^c \mid R=k]\Pr\{R=k\} \\
      &\quad + \sum_{r=0}^{k-1} \E[C^c \mid R=r]\Pr\{R=r\} \\
    &= \E\left[\sum_{i=1}^k X_i \mid X_i \leq \Delta\right]\Pr\{R=k\} \\
      &\quad + \sum_{r=0}^{k-1} \E\Biggl[\sum_{i=1}^r (X_i \mid X_i \leq \Delta) \\
      &\quad + (k-r)\Delta + \sum_{i=1}^{k-r} X_{n-r:i} + (n-k)X_{n-r:k-r}\Biggr]\Pr\{R=r\} \\
    &= \left(\sum_{i=1}^k \E[X \mid X \leq \Delta]\right)\Pr\{R=k\} \\
      &\quad + \sum_{r=0}^{k} \E\Biggl[\sum_{i=1}^r (X_i \mid X_i \leq \Delta) \\
      &\quad + (k-r)\Delta + \sum_{i=1}^{k-r} X_{n-r:i} + (n-k)X_{n-r:k-r}\Biggr]\Pr\{R=r\} \\
      &\quad - \E\left[\sum_{i=1}^k (X_i \mid X_i \leq \Delta) + (n-k)X_{n-k:0}\right]\Pr\{R=k\} \\
    &= \E\Biggl[\E\Biggl[\sum_{i=1}^R (X_i \mid X_i \leq \Delta) + (k-R)\Delta + \sum_{i=1}^{k-R} X_{n-R:i} \\
      &\qquad\qquad + (n-k)X_{n-R:k-R} \mid R\Biggr]\Biggr] - s(n-k)q^k \\
    &= kq\;\E[X \mid X \leq \Delta] + k(1-q)\Delta \\
      &\quad + \E\left[\sum_{i=1}^{k-R} \E[X_{n-R:i} \mid R]\right] \\
      &\quad + (n-k)\E\left[\E[X_{n-R:k-R} \mid R]\right] - s(n-k)q^k. \longalignedtag
  \end{longaligned}
  We expand the second term of the sum above as
  \begin{longaligned}
    & \E\left[\sum_{i=1}^{k-R} \E[X_{n-R:i} \mid R]\right] \\
    &= \E\left[\sum_{i=1}^{k-R} s\frac{\Gamma(n-R+1)}{\Gamma(n-R+1-1/\alpha)}\frac{\Gamma(n-R-i+1-1/\alpha)}{\Gamma(n-R-i+1)}\right] \\
    &\stackrel{(a)}{=} \E\left[s\frac{\Gamma(n-R+1)}{\Gamma(n-R+1-1/\alpha)} \sum_{j=n-k+1}^{n-R} \frac{\Gamma(j-1/\alpha)}{\Gamma(j)}\right] \\
    &= \E\Biggl[s\frac{\Gamma(n-R+1)}{\Gamma(n-R+1-1/\alpha)} \\
      &\qquad\quad \times \left(\sum_{j=0}^{n-R} \frac{\Gamma(j-1/\alpha)}{\Gamma(j)} - \sum_{j=0}^{n-k} \frac{\Gamma(j-1/\alpha)}{\Gamma(j)}\right)\Biggr] \\
    &\stackrel{(b)}{=} \E\Biggl[s\frac{\Gamma(n-R+1)}{\Gamma(n-R+1-1/\alpha)} \frac{\alpha}{\alpha-1} \\
      &\qquad\quad \times \left(\frac{\Gamma(n-R+1-1/\alpha)}{\Gamma(n-R)} - \frac{\Gamma(n-k+1-1/\alpha)}{\Gamma(n-k)}\right)\Biggr] \\
    &= \frac{\alpha}{\alpha-1}s(n-kq) \\
      &\quad - \frac{\alpha}{\alpha-1}s\frac{\Gamma(n-k+1-1/\alpha)}{\Gamma(n-k)}\E\left[\frac{\Gamma(n-R+1)}{\Gamma(n-R+1-1/\alpha)}\right].
  \end{longaligned}
  where $(a)$ comes from defining $j=n-R-i+1$, and $(b)$ from Lemma~\ref{lm_sum_gamma_ratios}.
  We also have
  \begin{longaligned}
    \E&\left[\E[X_{n-R:k-R} \mid R]\right] \\
    &\quad = s\frac{\Gamma(n-k+1-1/\alpha)}{\Gamma(n-k+1)}\E\left[\frac{\Gamma(n-R+1)}{\Gamma(n-R+1-1/\alpha)}\right].
  \end{longaligned}
  Substituting these in \eqref{eqn:eq_k_nd_wrelaunch_ECwcancel_lawoftotal} gives us
  \begin{longaligned}
    & \E[C^c] = kq\;\E[X \mid X \leq \Delta] + k(1-q)\Delta \\
      &\quad + \E\left[\sum_{i=1}^{k-R} \E[X_{n-R:i} \mid R]\right] + (n-k)\E\left[E[X_{n-R:k-R} \mid R]\right] \\
      &\quad - s(n-k)q^k \\
    &= \frac{\alpha}{\alpha-1}k(s-\Delta(1-q)) + k(1-q)\Delta \\
      &\quad - s(n-k)q^k + \frac{\alpha}{\alpha-1}s(n-kq) \\
      &\quad - \frac{\alpha}{\alpha-1}s\frac{\Gamma(n-k+1-1/\alpha)}{\Gamma(n-k)}\E\left[\frac{\Gamma(n-R+1)}{\Gamma(n-R+1-1/\alpha)}\right] \\
      &\quad + (n-k)s\frac{\Gamma(n-k+1-1/\alpha)}{\Gamma(n-k+1)}\E\left[\frac{\Gamma(n-R+1)}{\Gamma(n-R+1-1/\alpha)}\right] \\
    &= \frac{\alpha}{\alpha-1}(k(1-q)(s-\Delta) + ns) + k(1-q)\Delta - s(n-k)q^k \\
      &\quad - \frac{s}{\alpha-1}\frac{\Gamma(n-k+1-1/\alpha)}{\Gamma(n-k)}\E\left[\frac{\Gamma(n-R+1)}{\Gamma(n-R+1-1/\alpha)}\right].
  \end{longaligned}
  
  In the proof of Thm.~\ref{thm_k_cnd_wrelaunch_ET__EC}, we showed that
  \begin{longaligned}
    \frac{\Gamma(n-k+1-1/\alpha)}{\Gamma(n-k+1)} & \E\left[\frac{\Gamma(n-R+1)}{\Gamma(n-R+1-1/\alpha)}\right] \\
    &\approx \frac{B(n-kq+1, -1/\alpha)}{B(n-k+1, -1/\alpha)}.
  \end{longaligned}
  Using which we can finalize the derivation as
  \begin{longaligned}
    \E[C^c] &= \frac{\alpha}{\alpha-1}(k(1-q)(s-\Delta) + ns) + k(1-q)\Delta \\
      &\quad - s(n-k)q^k - \frac{s}{\alpha-1}(n-k)\frac{B(n-kq+1, -1/\alpha)}{B(n-k+1, -1/\alpha)}.
  \end{longaligned}

\end{proof}

\end{document}